\newcommand\vldbdoi{XX.XX/XXX.XX}
\newcommand\vldbpages{XXX-XXX}
\newcommand\vldbvolume{14}
\newcommand\vldbissue{1}
\newcommand\vldbyear{2020}
\newcommand\vldbauthors{\authors}
\newcommand\vldbtitle{\shorttitle}
\newcommand{\kw}[1]{{\ensuremath {\mathsf{#1}}}\xspace}
\newcommand{\kwn}[1]{{\ensuremath {\mathsf{#1}}}}
\newcommand{\stitle}[1]{\vspace{1ex} \noindent{\bf #1}}
\long\def\comment#1{}
\newcommand{\black}[1]{\textcolor{black}{#1}}
\newcommand{\g}{\kw{g}}
\newcommand{\f}{\kw{f}}
\newcommand{\red}{\kw{IG}}
\newcommand{\rdc}{\kw{IG}}
\newcommand{\dist}{\kw{dist}}
\newcommand{\anc}{\kw{anc}}
\newcommand{\des}{\kw{des}}
\newcommand{\igs}{\kw{IGS}}
\newcommand{\mtigs}{\kw{kBM}-\kw{IGS}}
\newcommand{\reach}{\kw{reach}}
\newcommand{\yes}{\kw{Yes}}
\newcommand{\no}{\kw{No}}
\newcommand{\yescs}{\kw{Yes}-\kw{candidates}}
\newcommand{\yesc}{\kw{Yes}-\kw{candidate}}
\newcommand{\T}{\mathcal{H}}
\newcommand{\yset}{\mathcal{Y}}
\newcommand{\sset}{\mathcal{S}}
\newcommand{\can}{\mathcal{P}}
\newcommand{\eg}{\kw{Gain}}
\newcommand{\IGS}{\kw{IGS}}
\newcommand{\BinG}{\kw{BinG}}
\newcommand{\MTBinG}{\kw{BinG}}
\newcommand{\HGS}{\kw{HGS}}
\newcommand{\MTHGS}{\kw{HGS}}
\newcommand{\STIGS}{\kw{STBIS}}
\newcommand{\MTDiv}{\kw{kBM}-\kw{DP}}
\newcommand{\MTTopk}{\kw{kBM}-\kw{Topk}}
\newcommand{\MTfast}{\kw{kBM}-\kwn{DP}+\xspace}
\newcommand{\MTEP}{\kw{kBM}-\kw{EP}}
\newcommand{\MTND}{\kw{kBM}-\kw{Card}}
\newcommand{\Frame}{\kw{kBM}-\kw{IGS}}
\newcommand{\MTprob}{\kw{kBM}-\kw{IGS} \kw{problem}}
\newcommand{\single}{\kw{STBIS}}
\newcommand{\pr}{\kw{pr}}
\newcommand{\Syes}{\mathcal{Y}}
\newcommand{\pyes}{\kw{p_{Yes}}}
\newcommand{\pno}{\kw{p_{No}}}
\newcommand{\gyes}{\kw{g_{Yes}}}
\newcommand{\gno}{\kw{g_{No}}}
\newcommand{\cyes}{\hat{\can}}
\newcommand{\cno}{\bar{\can}}
\newcommand{\yyes}{\hat{\yset}}
\newcommand{\yno}{\bar{\yset}}
\newcommand{\child}{\kw{child}}
\newcommand{\parent}{\kw{par}}
\newcommand{\tar}{\mathcal{T}}
\newcommand{\DP}{\kw{DP}}
\newcommand{\UB}{\kw{UB}}
\newcommand{\calyes}{\kw{calg_{Yes}}}
\newcommand{\calno}{\kw{calg_{No}}}
\newcommand{\ignore}[1]{}
\newcommand{\nop}[1]{}
\newcommand{\eat}[1]{}
\newcommand{\eatSIGMOD}[1]{}
\newcommand{\revision}[1]{\black{#1}} 
\newcommand{\xlzhu}[1]{\textcolor{red}{#1}} 
\newtheorem{definition}{Definition} 
\newtheorem{example}{Example} 
\newtheorem{problem}{Problem} 
\newtheorem{theorem}{Theorem} 
\newtheorem{lemma}{Lemma}
\newtheorem{Rule}{Rule}
\newcommand{\squishlisttight}{
 \begin{list}{$\bullet$}
  { \setlength{\itemsep}{0pt}
    \setlength{\parsep}{0pt}
    \setlength{\topsep}{0pt}
    \setlength{\partopsep}{0pt}
    \setlength{\leftmargin}{2em}
    \setlength{\labelwidth}{1.5em}
    \setlength{\labelsep}{0.5em}
} }
\newcounter{qcounter}
\newcommand{\squishnumlist} {
\begin{list}{\arabic{qcounter}.~}{\usecounter{qcounter}} 
{  \setlength{\itemsep}{0pt}
    \setlength{\parsep}{0pt}
    \setlength{\topsep}{0pt}
    \setlength{\partopsep}{0pt}
    \setlength{\leftmargin}{2em}
    \setlength{\labelwidth}{1.5em}
    \setlength{\labelsep}{0.5em}
}}
\newcommand{\squishend}{
  \end{list}
}
\begin{document}


\title{Budget Constrained Interactive 
Search for Multiple Targets}



\author{
Xuliang Zhu$^{1}$, Xin Huang$^{1}$, Byron Choi$^{1}$, Jiaxin Jiang$^{1}$, Zhaonian Zou$^{2}$, Jianliang Xu$^{1}$}
\affiliation{%
 \institution{$^1$Hong Kong Baptist University, Hong Kong, China\\
             $^2$Harbin Institute of Technology, Harbin, China}
}
\affiliation{\{csxlzhu, xinhuang, bchoi, jxjian, xujl\}@comp.hkbu.edu.hk, znzou@hit.edu.cn}


\begin{abstract}
Interactive graph search leverages human intelligence to categorize target labels in a hierarchy, which is useful for image classification, product categorization, and  database search. However, many existing interactive graph search studies aim at identifying a single target optimally, and suffer from the limitations of 
 asking too many questions and not being able to handle multiple targets.

To address these two limitations, in this paper, we study a new problem of \underline{b}udget constrained 
\underline{i}nteractive \underline{g}raph \underline{s}earch for \underline{m}ultiple targets called \MTprob. Specifically, given a set of multiple targets $\tar$ in a hierarchy and two parameters $k$ and $b$, the goal is to identify a $k$-sized set of selections $\sset$, such that the closeness between selections $\sset$ and targets $\tar$ is as small as possible, by asking at most a budget of $b$ questions. We theoretically analyze the updating rules and design a penalty function to capture the closeness between selections and targets.
To tackle the \MTprob, we develop a novel framework to ask questions using the best vertex with the largest expected gain, which provides a balanced trade-off between target probability and benefit gain. 
Based on the \mtigs framework, we first propose an efficient algorithm \single to handle the \kw{Single}\kw{Target} problem, which is a special case of \mtigs. Then, we propose a dynamic programming based method \MTDiv to tackle the \kw{Multiple}\kw{Targets} problem. To further improve efficiency, we propose two heuristic but efficient algorithms, \MTTopk and \MTfast. \MTTopk develops a variant gain function and selects the top-$k$ vertices independently. \MTfast uses an upper bound of gains and prunes disqualified vertices to save computations. 
Experiments on large real-world datasets with ground-truth targets verify both the effectiveness and efficiency of our proposed algorithms. 
\end{abstract}

\maketitle

\begingroup\small\noindent\raggedright\textbf{PVLDB Reference Format:}\\
\vldbauthors. \vldbtitle. PVLDB, \vldbvolume(\vldbissue): \vldbpages, \vldbyear.\\
\href{https://doi.org/\vldbdoi}{doi:\vldbdoi}
\endgroup
\begingroup
\renewcommand\thefootnote{}\footnote{\noindent
This work is licensed under the Creative Commons BY-NC-ND 4.0 International License. Visit \url{https://creativecommons.org/licenses/by-nc-nd/4.0/} to view a copy of this license. For any use beyond those covered by this license, obtain permission by emailing \href{mailto:info@vldb.org}{info@vldb.org}. Copyright is held by the owner/author(s). Publication rights licensed to the VLDB Endowment. \\
\raggedright Proceedings of the VLDB Endowment, Vol. \vldbvolume, No. \vldbissue\ %
ISSN 2150-8097. \\
\href{https://doi.org/\vldbdoi}{doi:\vldbdoi} \\
}\addtocounter{footnote}{-1}\endgroup

\section{Introduction}\label{sec.intro}

\begin{figure}[t]
\vspace{-0.2cm}
\centering
{
\subfigure[Hierarchical ImageNet Tree]{
\label{fig.hierarch.example}
\includegraphics[width=0.65\linewidth]{./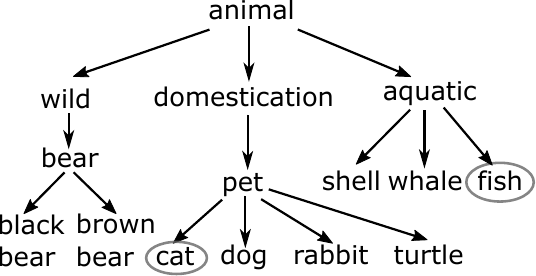} }
\subfigure[Image]{
\label{fig.fish}
\includegraphics[width=0.33\linewidth]{./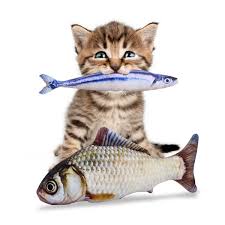} }
}
\vspace{-0.5cm}
\caption{An example of a hierarchical directed tree in ImageNet and an uncategorized image. The target labels of image in Figure~\ref{fig.fish} are $\tar=$\{``cat'', ``fish''\} in Figure~\ref{fig.hierarch.example}.}
\label{fig.intro.example}
\vspace{-0.5cm}
\end{figure}

Crowdsourcing, such as Amazon's Mechanical Turk~\cite{mturk} and CrowdFlower~\cite{crowdflower}, allows organizations to design human-aided services in which humans can help solve tasks and get rewards. 
In real applications, many tasks such as object categorization~\cite{parameswaran2011human}, entity resolution~\cite{vesdapunt2014crowdsourcing, wang2012crowder}, filtering noisy data~\cite{karger2011iterative, parameswaran2012crowdscreen}, ranking~\cite{marcus2012counting}, and labeling~\cite{barr2006ai}, are complex and difficult to resolve algorithmically.
With regard to human-aided object categorization, the 
graph search problem concerns leveraging human intelligence to categorize the target labels of a given object in a label hierarchy, which has a wide range of applications including image classification~\cite{li2020efficient,deng2009imagenet}, product categorization~\cite{ni2019justifying}, and relational database search~\cite{tao2019interactive}.

Recently, Tao et al.~\cite{tao2019interactive} investigated the problem of \emph{interactive} graph search (\igs) to locate \emph{one unique target vertex} in a hierarchy $\T$, with as few questions as possible. For example, Figure~\ref{fig.intro.example}(a) shows a hierarchical tree with several labeled vertices. A directed edge from one vertex to another  represents the concept-instance relationship, e.g., ``pet'' is a general concept of four instances ``cat'', ``dog'', ``rabbit'', and ``turtle''. Note that the target is unknown in advance. To identify the target, interaction is allowed to iteratively ask questions using the vertices in the hierarchy, e.g., ``Is this a wild?'', ``Is this a pet?''. Assuming that the target is ``pet'' in Figure~\ref{fig.intro.example}(a), we need to ask at least five questions of ``Is this $x$?'', where $x \in $ \{``pet'', ``cat'', ``dog'', ``rabbit'',  ``turtle''\}, to get the answers \{$\yes$, $\no$, $\no$, $\no$, $\no$\} and then determine the exact target of ``pet''. 
Effective algorithms with theoretical guarantee are proposed for finding the exact target using at most $\lceil \log_{2}{h} \rceil(1 + \lfloor \log_{2}{n} \rfloor) + (d - 1) \cdot \lceil \log_{d}{n} \rceil$ questions, where $n$, $d$, $h$ are respectively the number of vertices, the maximum out-degree, and the hierarchy height  in $\T$. However, two issues remain open:
\squishlisttight
\item \underline{Finding nearly-optimal targets using a constrained 
budget}. 
IGS \cite{tao2019interactive} may incur a high  cost to identify the exact target. 
It does not limit the number of questions that can be asked. 
In the worst case, the proposed algorithm asks $(d - 1) \cdot \lfloor \log_{d}{n} \rfloor$ questions to optimally identify the target. In real hierarchy datasets, the out-degree $d$ could be large, e.g., ImageNet~\cite{deng2009imagenet} has $d = 391$ and $n=74,401$. Thus, users may need to answer $782$ questions, which is not very practical. 
Even worse, asking questions is potentially costly~\cite{parameswaran2011human}, which motivates the problem of budget constrained \IGS  to bound the total cost. 
\item \underline{Identifying multiple targets}. Existing studies on the \IGS problem~\cite{tao2019interactive,li2020efficient} only consider a single target, where the answer has one and only one target. However, in real applications of object categorization, an object may have multiple labels. Even worse, it is difficult to determine in advance how many labels the object may have. For example, Figure~\ref{fig.fish} shows an uncategorized image object. 
Both ``cat'' and ``fish'' are suitable to label the object, but either one alone is not good enough. 
\end{list}

To address the above issues
, in this paper, we propose a new \mtigs problem of interactive graph search for identifying multiple targets $\tar$ using a constrained budget to ask at most $b$ questions. Specifically, in each round, our \mtigs scheme asks a question in the form ``Given a query vertex $q$ in tree $\T$, can vertex $q$ reach one of targets in $\tar$?'' and receives the answer from human-assisted interactions. On the basis of the previous answers, the \mtigs scheme determines the next question to ask. 
Finally, it selects a set of vertices 
to represent the targets 
after $b$ questions are answered.

However, it is significantly challenging to identify the most suitable selections in the \mtigs problem, for the following reasons. 
First, the number of targets is unknown in advance. Given an uncategorized object, it may have one or more ground-truth labels. Second, in the worst case, a total of $O(n)$ questions is needed to find the exact answers regarding targets, which makes the selection of $b$ questions difficult. Third, since the targets are unknown, another challenge is how to measure the goodness of a solution, i.e., the closeness between selections and targets.

In light of the above, the objective of our problem is formulated as finding a $k$-sized selection set of vertices to approach the targets as close as possible, w.r.t. an input number of $k$ and a budget of $b$ questions. 
For example, consider the hierarchy and the uncategorized image object in Figure~\ref{fig.intro.example}. Assume that $k=2$ and $b=2$. We ask two questions of ``Is this $x$?''\footnote{\scriptsize{The question is equivalent to a search question in the form ``Given a query vertex $q$ in tree $\T$, can vertex $q$ reach one of targets in $\tar$?''}}, where $x$ is ``pet'' and ``fish'', respectively, and both answers are $\yes$. After that, we cannot ask any more questions to verify the other four specified pets, i.e., ``cat'', ``dog'', ``rabbit'', and ``turtle''. Thus, we select ``pet'' and ``fish'' as the solution. Assume that the targets are ``cat'' and ``fish''. 
It can be seen that the selections of  ``pet'' and ``fish'' are close to the targets, since ``pet'' is a generalization of ``cat''. On the other hand, ``animal'' is also a good label, but it is far from ``cat'' and worse than our selection ``pet''. \revision{More real-world applications of our problem are presented in Section~\ref{sec.application}. Moreover, a detailed comparison between the state-of-the-art methods~\cite{parameswaran2011human,tao2019interactive,li2020efficient}  and our model is provided in Section~\ref{sec.relate}.}

To tackle the \mtigs problem, we propose a novel \mtigs framework, which uses a greedy strategy to ask the best question with the largest expected gain at each round. Specifically, vertices have different probabilities to be targets and may get \yes/\no answers for questions asked. In general, a vertex at the top level of the hierarchy has a high probability of getting a $\yes$ answer. However, the benefit of getting a $\yes$ answer can be less than a \no answer, which implies that none of descendants are targets. 
Therefore, we propose an expected gain to trade-off the target probability and benefit gain. 
Thus, the \mtigs framework can find the vertex with the largest expected gain to ask the next question.
On the basis of the \mtigs framework, we first propose an efficient algorithm STIGS to solve the \kw{SingleTarget} \kw{problem}
, which is a special case of \mtigs with $|\tar|=1$. 
It can update the gains of all vertices in $O(n)$ for each question. Different from the \kw{SingleTarget} \kw{problem}, it is difficult to calculate the gains in the \kw{Multiple}\kw{Targets} \kw{problem}. 
We then develop a \MTDiv method to calculate the optimal penalty between the $k$-sized selections and potential targets. The algorithm takes $O(bnh^2dk^2)$ time. To further improve efficiency, we propose two heuristic but efficient algorithms, \MTTopk and \MTfast. The first method, \MTTopk, uses an independent penalty function, which avoids costly enumerations to find the optimal result. The second method, \MTfast, improves \MTDiv by invoking an upper bound to quickly identify the best vertex to ask the next question before all expected gains are updated. 
To summarize, we make the following contributions: 
\squishlisttight
\item 
We propose a new \MTprob of budget constrained interactive graph search for identifying multiple targets in a hierarchical tree. We raise the problem of finding the $k$-sized selections close to multiple targets using a fixed number of $b$ questions, and formally design a penalty function to measure the closeness between selections and targets (Section~\ref{sec.problem}).
\item We give theoretical analysis of potential targets and \yesc, which offers useful updating rules to prune disqualified candidates. Moreover, for a specific question, we discuss the probabilities of \yes and \no answers, and analyze the expected gain of asking this question. On the basis of the updating rules and expected gains, we propose a novel \mtigs framework to tackle the \MTprob by asking $b$ good questions (Section~\ref{sec.analysis}).

\item We investigate and tackle one instance of \MTprob, the \kw{SingleTarget} \kw{problem}, where the target involves a single answer. On the basis of the \mtigs framework, we derive new updating rules and propose a greedy algorithm \STIGS. We further develop a DFS technique to accelerate \STIGS 
(Section~\ref{sec.single}).
\item We propose three efficient algorithms for identifying multiple targets based on the \mtigs framework. First, we propose a dynamic programming based technique \MTDiv, which computes the gain scores for all vertices optimally. To further improve  efficiency, we propose two fast algorithms, \MTTopk and \MTfast, based on an alternative function of independent gain score and a pruning upper bound for lazy gain computation, respectively (Section~\ref{sec.multi}).
\item We conduct extensive experiments on real-world datasets with \emph{ground-truth multiple targets} to validate the efficiency and effectiveness of our proposed framework and algorithms (Section~\ref{sec.exp}).
\end{list}

\section{Related Work}\label{sec.relate}

\begin{table*}[t]
\small
\vspace{-0.3cm}
\caption{Comparison with relevant studies IGS~\cite{tao2019interactive}, BinG~\cite{li2020efficient}, and HGS~\cite{parameswaran2011human}. Here, $b$ is the budget of questions, and $n$, $d$, $h$ are respectively the number of vertices, the maximum out-degree, and the height in the hierarchy.}\label{table.relate}
\vspace{-0.4cm}
\centering
\scalebox{0.8}{
\begin{tabular}{||c||c||c||c|c||c|c||}
\toprule
Method& Interactive& Targets& Budget& Questions (Worst Case)&Time (Each Question)& Time (Total) \\
\midrule
IGS~\cite{tao2019interactive}& $\checkmark$& Single& $\times$& $\lceil \log_{2}{h} \rceil(1 + \lfloor \log_{2}{n} \rfloor) + (d - 1) \cdot \lceil \log_{d}{n} \rceil$& $O(1)$& $O(n\log{n})$\\
\hline
BinG~\cite{li2020efficient}& $\checkmark$& Single& $\times$& $n - 1$ & $O(n)$& $O(n^2)$\\
\hline
\multirow{2}*{HGS~\cite{parameswaran2011human}}& $\times$& Single& $\checkmark$& $b$& /& $O(n\log{n})$\\
\cline{2-7}
~& $\times$& Multiple& $\checkmark$& $b$& /& $O(b^2n^6)$\\
\hline
\multirow{2}*{\mtigs}& $\checkmark$& Single& $\checkmark$& $b$& $O(n)$& $O(bn)$\\
\cline{2-7}
~& $\checkmark$& Multiple& $\checkmark$& $b$& $O(nh^2dk^2)$& $O(bnh^2dk^2)$\\
\bottomrule
\end{tabular}
}
\vspace{-0.3cm}
\end{table*}

Our work is related to human-assisted data processing tasks~\cite{parameswaran2012crowdscreen, parameswaran2014datasift, li2018approximate, karger2011human, vesdapunt2014crowdsourcing, wang2013leveraging, whang2013question} and object categorization problems~\cite{chakrabarti2004automatic, gao2020channel, liu2020interactive, tang2020multi, zhou2015unsupervised, guo2019dual, karlinsky2017fine}. \revision{Table~\ref{table.relate} shows a detailed comparison of the three most relevant studies, IGS~\cite{tao2019interactive}, BinG~\cite{li2020efficient}, HGS~\cite{parameswaran2011human}, and our \mtigs.}
 Tao et al.~\cite{tao2019interactive} propose an \revision{\emph{interactive graph search} (IGS)} method for identifying a single target in a directed hierarchy. The general idea is to apply heavy-path decomposition to produce a balance representation of hierarchy and tackle the problem by binary searches. Li et al.~\cite{li2020efficient} model the single target problem as a decision tree construction problem. They propose a greedy based method for interactive graph search \revision{(denoted as BinG)}, which improves the performance of \IGS~\cite{tao2019interactive}. Both studies consider a single target and find the exact result using an unlimited budget of questions. Consider the example shown in Figure~\ref{fig.worst_case}. Assume that the target is $r$. Both \IGS~\cite{tao2019interactive} and \BinG~\cite{li2020efficient} would ask all its children to determine whether $r$ is the target, which takes $n - 1$ questions.
 Different from these two studies~\cite{tao2019interactive,li2020efficient}, our proposed framework can tackle both \kw{Single}\kw{Target} and \kw{Multiple}\kw{Targets} problems and select the representative targets within a bounded budget.
 \begin{figure}[h!]
\vspace{-0.5cm}
\centering
{
\subfigure{
\includegraphics[width=0.35\linewidth]{./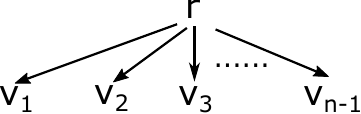} }
}
\vspace{-0.4cm}
\caption{A hierarchy has $n$ vertices with the target $r$.}
\label{fig.worst_case}
\vspace{-0.5cm}
\end{figure}

\revision{
Parameswaran et al.~\cite{parameswaran2011human} also investigate both \kw{Single}\kw{Target} and \kw{Multiple}\kw{Targets} problems with bounded budgets and propose \HGS to find multiple targets using $b$ questions.
However, the novelty of our problem is the consideration of the \emph{interactive} setting, which enables \emph{dynamic} algorithm designs and brings \emph{significant} performance benefits. 
First, the \HGS scheme is a \emph{non-interactive} algorithm that asks all $b$ questions in \emph{one go} and then, based on the workers' answers, does the best to figure out where the targets are. Its objective is to choose the $b$ questions wisely to minimize the size of the candidate set. As a result, it may not be able to find the candidates close to the targets. 
In contrast, our proposed approach leverages the answers of the previous $l$ questions ($1 \leq  l \leq b-1$) to \emph{dynamically} determine the $(l+1)$-th question. Such interaction allows our algorithm to quickly narrow down the search space and efficiently guide the search towards the targets. 
Second, the \HGS algorithm divides the whole hierarchy into $b$ subtrees and asks a question on each root of the $b$ subtrees. It has an extremely time complexity of $O(b^2n^6)$~\cite{parameswaran2011human}. Different from \HGS, our dynamic approach works by asking one question each time on a single vertex that achieves the largest expected gain based on the previous answers. In other words, our approach is a greedy algorithm that runs $b$ times to identify $b$ questions in $O(bnh^2dk^2)$ time. 
It is more effective and efficient than \HGS, as will be validated by the experiments in Section~\ref{sec.exp}. 
}

Furthermore, there exist a large number of studies on hierarchy construction using crowdsourcing techniques~\cite{sun2015building, chilton2013cascade, bragg2013crowdsourcing}. Sun et al.~\cite{sun2015building} build up a crowdsourcing system to construct a hierarchy by maintaining distributions over possible hierarchies. Chilton et al.~\cite{chilton2013cascade} propose an automated workflow Cascade to create a taxonomy from the collective efforts of crowd workers. Bragg et al.~\cite{bragg2013crowdsourcing} present an improved workflow DELUGE which uses less crowd labor to generate taxonomies. All these studies focus on how to build high quality category hierarchies. Orthogonal to these studies, our work utilizes the hierarchies and crowdsourcing ideas to categorize  objects in an interactive way w.r.t. a given budget of questions. 
\section{Preliminaries}\label{sec.problem}
In this section, we present definitions and formulate our problem.

\subsection{Hierarchical Tree}

Let $\T=(V, E)$ be a directed hierarchical tree rooted at $r$ with a set $V$ of vertices and a set $E$ of directed edges, where the root $r\in V$ and the edge set $E=\{\langle v, u\rangle: v \text{ is the parent  of } u \}$.  Let the size of vertex set be $n = |V|$ and the height of $\T$ be $h$. 
For a vertex $v\in V$, we denote its children of $v$ by $\child(v) = \{u: \langle v, u\rangle \in E\}$ and its unique parent by $\parent(v)$ where $\langle \parent(v), v\rangle \in E$. 
A vertex $v$ with no children is called leaf, i.e., $\child(v)=\emptyset$. 

Given two vertices $u$ and $v$, we say that $u$ can reach $v$ (denoted as $u \rightarrow v$), if and only if there exists a directed path from $u$ to $v$ in $\T$. If $u$ cannot reach $v$, we use $u \nrightarrow v$ to represent it. Note that $v \rightarrow v$ and  $r\rightarrow v$ for any vertex $v\in V$. 
Moreover, the distance from $u$ to $v$ is denoted by $\dist \langle u, v\rangle$, as the length of the shortest path from $u$ to $v$ in $\T$. If $u \rightarrow v$, the distance $\dist \langle u, v\rangle$ is the height difference between two vertices $u$ and $v$ in $\T$.  Note that $\dist\langle v, v\rangle=0$  and $\dist \langle u, v\rangle$ $= +\infty$ if $u \nrightarrow v$.
In addition, the ancestors of a vertex $v$, denoted by $\anc(v)$, are defined to be the set of vertices that can reach $v$ in $\T$, i.e., $\anc(v)= $ $ \{u \in V: u \rightarrow v\}$. 
Similarly, the descendants of a vertex $v$, denoted by $\des(v)$, are defined to be the set of vertices that are reachable from $v$ in $G$, i.e., $\des(v)= $ $ \{u \in V: v \rightarrow u\}$. \revision{Table~\ref{tab.notat} lists the frequently used notations in the paper.}  


\begin{figure}[t]
\centering
{
\includegraphics[width=0.25\linewidth]{./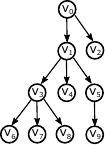}
}
\vspace{-0.3cm}
\caption{The hierarchical tree used in the running example.}
\label{example.prob}
\vspace{-0.6cm}
\end{figure}

\begin{example}
Figure \ref{example.prob} shows an example of hierarchical tree rooted by $v_0$. The children of $v_3$ are $\{v_6, v_7, v_8\}$ and its parent is $v_1$. We have $\anc(v_3) = \{v_0, v_1, v_3\}$ and $\des(v_3) = \{v_3, v_6, v_7, v_8\}$. The distance $\dist\langle v_0, v_3\rangle = 2$, $\dist\langle v_3, v_3\rangle = 0$ and $\dist\langle v_2, v_3\rangle = +\infty$.
\end{example}

\subsection{kBM-IGS Interactive Scheme}
In the following, we introduce the scheme of budget-based interactive graph search for identifying multiple targets. In contrast to IGS~\cite{tao2019interactive} and BinG~\cite{li2020efficient}, our \mtigs has new rules and features for asking a limited number of questions.
The interactive scheme 
has four 
components: \emph{targets}, \emph{questions}, \emph{\yescs}, and \emph{selections}.

\stitle{Targets.} The targets are a set of vertices in tree $\T$, denoted as $\tar \subseteq V$. The goal of our \mtigs is finding the target set $\tar$, which needs to be identified through a few rounds of question asking. 
The targets $\tar$ can be chosen arbitrarily from $V$. In other words,  $\tar$ may be any hidden vertices 
 distributed over the whole tree $\T$. 

The target set $\tar$ has two characteristics: \emph{variant cardinality} and \emph{target independence}. 
First, in terms of target cardinality, 
we categorize $\tar$ into two types, \kw{SingleTarget} and \kw{MultipleTargets}, following~\cite{parameswaran2011human}. 
If the size of $\tar$ is known and $|\tar|=1$, we call it \kw{Single Target}.
If the size of $\tar$ is unknown and variant with $|\tar|\geq 1$, we call it \kw{Multiple Targets}, which does not constrain the size of $\tar$. 
On the other hand, 
the target set $\tar$ must satisfy the property of target independence, that is, any two vertices in $\tar$ are not related~\cite{parameswaran2011human}:
\vspace{-0.08cm}
$$\forall v, u\in \tar, \text{ if } v\neq u   \text{, then } v\nrightarrow u.$$
Consider the example shown in Figure~\ref{fig.hierarch.example}, where we assume the target set is $\tar=$\{``cat'', ``fish''\}. Although ``pet'' is also a correct terminology to represent the object in Figure~\ref{fig.fish}, it is not suitable to be added into $\tar$, as it would violate the property of target independence as ``pet'' reaches ``cat'' in Figure~\ref{fig.hierarch.example}. Actually, ``cat'' is a more precise label to describe the object than ``pet'' in this example. 

\stitle{Questions.} To identify targets, one can ask a search question in the form ``Given a query vertex $q$ in tree $\T$, can vertex $q$ reach one of targets in $\tar$?''. Formally, 

\begin{definition}[Questions]
\label{def.question}
Given a query vertex $q$ and targets $\tar$ in tree $\T$, the search question is defined as $\reach(q)$. 
The boolean answer of $\reach(q)$ is either \yes or \no. \\
If $\reach(q) = \yes$, then $\exists t\in \tar$ such that $q \rightarrow t$; 
\\Otherwise, $\reach(q) = \no$, i.e., $\forall t\in \tar$, $q \nrightarrow t$. 
\end{definition}

For example, 
in Figure~\ref{fig.hierarch.example}, the question ``Can the vertex labeled `bear' reach one of targets in $\tar$?'' will get the $\no$ answer and neither ``bear'', ``black bear'', nor ``brown bear'' will be the correct label. On the contrary, the question ``Can the vertex labeled `pet' reach one of targets in $\tar$?'' will get the $\yes$ answer.
One-shot question asking is limited to figure out where  the targets are in a large tree $\T$. 
One can interactively ask more questions to identify the targets accurately. 
However, in our \mtigs setting, 
we are given a budget $b$ for the number of questions that can be asked. This is because asking questions is usually costly in real applications, e.g., on Mechanical Turk~\cite{mturk}. It is also not practical to ask users numerous questions as they may not be willing to answer too many questions.  
After $b$ rounds of question-asking and answer-checking, 
one finally makes a decision to choose the answers to represent the targets.

\revision{
\begin{table}[t]
\caption{\revision{Frequently used notations}}\label{tab.notat}
\vspace{-0.4cm}
\centering
\small
\scalebox{0.85}{
\begin{tabular}{|l|l|}
\toprule
Notation &  Description \\
\midrule
$\anc(v)$/$\des(v)$ & the set of $v$'s ancestors/descendants\\
$\child(v)$ & the set of $v$'s children\\
$\dist\langle u, v \rangle$ & the distance from $u$ to $v$\\
$\tar$,  $\sset$, $\yset$, $\can$ & the targets, selections, \yescs, and potential targets\\
$\reach(q)$& the question whether $q$ can reach one of targets $\tar$\\
$\f(u, v)$&  the pair-wise penalty between $u$ and $v$\\
$\f(\sset, \tar)$& the set-wise penalty between $\sset$ and $\tar$\\
$\cyes_v$/$\cno_v$& the updated potential targets after question $\reach(v)=\yes$/$\no$\\
$\yyes_v$/$\yno_v$& the updated \yesc after question $\reach(v)=\yes$/$\no$\\
$\pyes(v)$/$\pno(v)$& the probability of question $\reach(v)=\yes$/$\no$\\
$\g(\yset, \can, k)$& the minimum penalty of $\f(\sset, \can)$ with selection $\sset \subseteq \yset$ \\
$\gyes(v)$/$\gno(v)$& the benefit gain of question $\reach(v)=\yes$/$\no$\\
$\eg(q)$& the expected gain of asking question $\reach(q)$\\
\bottomrule
\end{tabular}
}
\vspace{-0.5cm}
\end{table}
}

\stitle{Yes-candidates}. We say that a vertex $v$ is a \yesc for targets $\tar$ if and only if $\exists t\in \tar$ such that $v \rightarrow t$. Obviously, the root $r$ is always a \yesc for any targets $\tar$. We define the \yescs as the set of \yesc for targets $\tar$ as follows. 

\vspace{-0.2cm}
\begin{definition}[\yescs]
\label{def.yset}
Given the targets $\tar$ in tree $\T$, and several rounds of asking questions  
$Q = \{q_0, q_1,$ $ ..., q_l\}$ where $l$ is a positive integer and $q_0=r$, the \yescs are defined as 
\vspace{-0.1cm}
$$\yset = \bigcup_{q_i \in Q, \reach(q_i) = \yes}{\anc(q_i)}.$$
\vspace{-0.3cm}
\end{definition}



\begin{example}
Assume that the targets $\tar=$$\{v_2, v_8\}$ in Figure~\ref{example.prob} and the questions $Q = \{v_0, v_2, v_3, v_5\}$. The answers are $\reach(v_2) = \yes, \reach(v_3) = \yes$, and $\reach(v_5) = \no$, thus $\yset = \{v_0, v_1, v_2, v_3\}$. 
\end{example}
\vspace{-0.2cm}

\stitle{Selections}. The selections, denoted as $\sset$, are a subset of \yescs $\yset$, which are selected by the algorithms to match the targets $\tar$ as closely as possible. 
For example, in Figure~\ref{fig.intro.example}, assume that we have questioned ``pet'' and get the $\yes$ answer. We can select ``animal'', ``domestication'', or ``pet'' because they must be the correct label. 

Overall, the goal of \mtigs interactive scheme is to use a few questions to determine the selections $\sset \subseteq \yset$ to approach the targets $\tar$ as closely as possible. 


\subsection{Penalty between Selections and Targets}
Given a budget of questions that can be asked and an unknown number of targets, it is  challenging to determine the locations of targets in a large tree $\T$. Instead of giving a simple boolean result, we develop an evaluation metric to quantify the goodness of our selections.  In the following, we introduce another important feature of  \emph{penalty} in \mtigs. The penalty is an  evaluation metric defined on the basis of distance, which measures the closeness between $\sset$ and $\tar$.

\stitle{Pair-wise Penalty}. Assume that we use a vertex $v\in V$ to cover a given target $t\in \tar$. If $v=t$, the choice $v$ exactly identifies the target $t$. If $v\neq t$, it needs to give a penalty score for using $v$ to cover the target $t$. On the basis of the distance, we give the definition of pair-wise  penalty score as follows. 
\vspace{-0.1cm}
\begin{equation}\label{eq:dist}
\f\langle v, t\rangle =
\left\{
\begin{aligned}
\dist\langle v, t\rangle, \ \text{if}~ v\in \anc(t) \\
\dist\langle r, t\rangle, \  \text{if}~ v\notin \anc(t) \\
\end{aligned}
\right.
\vspace{-0.1cm}
\end{equation}

By the above definitions, we consider two cases: 1) $v\in \anc(t)$ and 2) $v\notin \anc(t)$. First, for $v\in \anc(t)$, indicating $v\rightarrow t$, the best selection $v$ should have $\dist\langle v, t\rangle =0$. The further the distance $\dist\langle v, t\rangle$, the larger the penalty. Second, for $v\notin \anc(t)$, indicating $v\nrightarrow t$, we give a full penalty of the largest distance between $r$ and $t$, for using $v$ to cover $t$, i.e., $\f\langle v, t\rangle = \dist\langle r, t\rangle$. The deeper the location of target $t$, the larger the penalty. As a result, if a target can be reached by our selections, we use the shortest distance to indicate its closeness. Otherwise, if a target is not reachable from our selections, we give a distance-based penalty. 

\stitle{Set-wise Penalty}. On the basis of the pair-wise penalty, we give the definitions of set-wise penalty distance below.






\begin{definition}[Penalty]
\label{def.penalty}
Given a set of targets $\tar$ and a set of selections $\sset$, the penalty of $\sset$ covering a target $t\in \tar$ is  defined as the minimum penalty of using a vertex $v\in \sset$ to cover $t$, denoted as
\begin{equation}\label{eq:dist}
\f(\sset, t)=\min_{v\in \sset}  \f\langle v, t\rangle =  \min_{v\in \sset\cup\{r\}}\dist\langle v, t\rangle.
\end{equation}
Moreover, the penalty of $\sset$ covering targets $\tar$ is defined as the total penalty sum of $\sset$ covering all targets $t\in \tar$, denoted by 
\begin{equation}\label{eq:penalty}
\f(\sset, \tar)=\sum_{t\in \tar} \f(\sset, t) = \sum_{t\in \tar} {\min_{v\in \sset\cup\{r\}}\dist\langle v, t\rangle}.
\end{equation}
\end{definition}





Obviously, if $\sset = \tar$, the penalty is $\f(\sset, \tar)= 0$. The smaller the penalty, the better the selections $\sset$. In Figure~\ref{example.prob}, assume that $\sset = \{v_2, v_3\}$ and  $\tar = \{v_2, v_5, v_8\}$, thus $\f \langle v_2, v_8 \rangle = 3$ and $\f \langle v_3, v_8 \rangle = 1$. The set-wise penalty $\f(\sset, v_8) = 1$, $\f(S, v_5) = 2$ and $\f(\sset, \tar) = 3$.

\subsection{Problem Formulation}
On the basis of the above definitions, we formulate the problem of \underline{b}udget constrained 
\underline{i}nteractive \underline{g}raph \underline{s}earch for \underline{m}ultiple targets (\mtigs). 

\begin{problem}[\MTprob]
\label{prob.mtigs}
Given a hierarchical directed tree $\T = (V, E)$ rooted at $r$, a target set $\tar \subseteq V$, a budget of $b\geq 1$ questions that can be asked, and a positive integer $k$, the problem is asking $b$ questions $Q=\{q_0, q_1,...,q_b\}$ \emph{one by one} to determine a non-empty set of selections $\sset^* \subseteq \yset$ such that $|\sset^*|\leq k$ and the distance $\f(\sset^*, \tar)$ is the smallest. Equivalently, 
\begin{align}
  \sset^* & = \arg\min_{\sset \subseteq \yset, |\sset|\leq k} \f(\sset, \tar) \nonumber \\
  \text{s.t., } \yset &=\bigcup_{q_i \in Q, \reach(q_i) = \yes}{\anc(q_i)}. \nonumber 
\end{align}
\end{problem} 
Note that the maximum number of selections $k$ where $k \geq |\sset|$, could be either larger or smaller than $|\tar|$ as we do not know the number of targets $\tar$ in most applications of \mtigs. 
For the example in Figure~\ref{fig.intro.example}, assume that we get $\yes$ answer for questions ``pet'', ``fish'' and $\no$ answer for questions ``wild'', ``shell'', and ``whale''. The best selections $\sset^*$ with $k = 2$ will be ``pet'' and ``fish''. 




\subsection{Applications}
\label{sec.application}
\revision{We motivate the \mtigs problem with three 
applications. }

\stitle{\revision{Image categorization.}} 
\revision{New images (e.g., biomedical images, surveillance photos, and user-uploaded images in online social networks) are continuously being generated and need to be classified by humans to identify objects and labels~\cite{parameswaran2011human, tao2019interactive}. Our \mtigs scheme can leverage the crowd-aided intelligence to identify multiple objects in an image using a budget constrained interactive graph search.
First, an image may have \underline{multiple labels}, e.g., the image shown in Figure~\ref{fig.fish} has two labels ``cat'' and ``fish''. 
Second, 
answering a question involves certain communication, latency, and monetary costs. Given \underline{a limited budget for rewards}, it is necessary to constrain the total number of questions to be asked and select the most suitable labels to categorize the image. 
}

\stitle{\revision{Manual curation.}} \revision{
Manual curation extends an existing taxonomy (e.g., Yago, Wikipedia, and web of concepts) by adding new entities to the taxonomy~\cite{tao2019interactive}. Given a new entity, interactive graph search can be used to find the nodes to \emph{parent} the new entity via crowdsourcing.  
Different from the existing study~\cite{tao2019interactive}, 
our \mtigs problem allows an entity to be a child of \underline{multiple parents}, which is common in real-world applications, e.g., ``whale'' is a subclass of both ``mammal'' and ``aquatic animal''. Moreover, on the basis of economic considerations, we can use \underline{a limited budget} to constrain the total number of questions that can be asked. 
}

\stitle{\revision{Cold-start recommendation.}} \revision{
Due to the lack of users' preferences in cold-start recommendations, online platforms (e.g., Twitter, TikTok, and YouTube) can ask a few questions to identify users' interests and then offer personalized recommendations in a more effective way. Users' preferences may be diverse, which are usually not limited to a single interest, e.g., one may like ``traveling'', ``financial news'', ``movies'', and so on. To avoid users becoming bored due to being asked too many questions, our \mtigs scheme can ask \underline{only a small number of questions} using an interest hierarchy  and adjusts its asking strategy \underline{dynamically based on the previous answers}.}

\stitle{\revision{Remark}.} \revision{In practical crowdsourcing applications, while human mistakes are inevitable, they can be minimized or eliminated by adopting effective quality control measures such as expert review, majority voting, group consensus, and so on~\cite{daniel2018quality}. As validated in \cite{tao2019interactive}, the influence of such mistakes on the outcome of the graph search algorithms is negligible. Thus, as with the previous works~\cite{parameswaran2011human,tao2019interactive,li2020efficient}, we assume in our algorithm design that the workers always give correct answers. For those cases where human mistakes are not eliminated and the workers give wrong answers, we will assess the quality of our methods in Section~\ref{sec.exp}. 
}


\section{The Proposed Framework}\label{sec.analysis}
In this section, we analyze the properties of \mtigs problem and briefly introduce our algorithmic framework.

\subsection{Theoretical Analysis}

\begin{figure}[t]
\vspace{-0.6cm}
\centering
{
\subfigure{
\includegraphics[width=0.25\linewidth]{./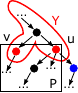} }
}
\vspace{-0.5cm}
\caption{\small{An example of \yescs and potential targets. The red vertices get $\yes$ answer, the blue vertices get $\no$ answers and the black vertices are not questioned. The red area is the \yescs and the black area is the potential targets.}}
\label{fig.cand.multi}
\vspace{-0.5cm}
\end{figure}

We first give new definitions of \emph{potential targets} 
and then analyze the relationships between \emph{questions} and \emph{potential targets}. 

\stitle{Potential targets.} We first define the potential targets, denoted by $\can$, as a candidate set of vertices that could be exact targets of $\tar$ where $\tar \subseteq \can$ . Obviously, if no question has been asked, every vertex $v\in V$ could be a potential target due to the limited prior information, i.e., $ \can = V$. However, as more questions are asked, the potential targets could decrease as some vertices may be pruned from $\can$ for violating the target constraints, no matter whether the question answer is $\yes$ or $\no$. We have the following lemmas.

\begin{lemma}
\label{lemma.noprune}
Given a vertex $q\in V$, if the question $\reach(q) = \no$, all vertices $u\in \des(q)$ are not targets, which should be pruned from potential targets, i.e., $\des(q) \cap  \can = \emptyset$. 
\end{lemma}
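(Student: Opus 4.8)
The plan is to argue by contradiction directly from the definition of a $\no$ answer, so this will be a short deductive argument rather than anything computational. First I would recall from Definition~\ref{def.question} that $\reach(q) = \no$ means precisely that $\forall t \in \tar$, $q \nrightarrow t$; that is, $q$ cannot reach any target. I would also recall that, by the definition of descendants, every vertex $u \in \des(q)$ satisfies $q \rightarrow u$.

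Next I would suppose, toward a contradiction, that some descendant $u \in \des(q)$ is in fact a target, i.e., $u \in \tar$. Combining $q \rightarrow u$ with $u \in \tar$ exhibits a target that is reachable from $q$, which by Definition~\ref{def.question} forces $\reach(q) = \yes$. This contradicts the hypothesis $\reach(q) = \no$. Hence no vertex of $\des(q)$ can belong to $\tar$, so $\des(q) \cap \tar = \emptyset$. Finally, since the potential-target set $\can$ is intended to retain only vertices that could be exact targets (recall $\tar \subseteq \can$), any vertex proven not to be a target may be safely removed; pruning all of $\des(q)$ from $\can$ therefore yields $\des(q) \cap \can = \emptyset$, as claimed.

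I do not expect a genuine obstacle here. The only point requiring care is to invoke the existential form of the $\yes$ condition correctly: a single target among $\tar$ that is reachable from $q$ already suffices to force a $\yes$ answer, which is exactly the fact that makes the contradiction clean. One should also keep the distinction between $\des(q) \cap \tar = \emptyset$ (a fact about the true, hidden targets) and $\des(q) \cap \can = \emptyset$ (the corresponding pruning of the candidate set), since the lemma is ultimately a justification for the updating rule on $\can$ rather than merely a statement about $\tar$.
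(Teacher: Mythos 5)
Your proof is correct and is essentially the paper's own argument: both rest on the observation that every $u \in \des(q)$ satisfies $q \rightarrow u$, so a target among the descendants would force $\reach(q) = \yes$ by Definition~\ref{def.question}; the paper states this in direct (contrapositive) form while you phrase it as a contradiction, but the content is identical. Your closing remark distinguishing $\des(q) \cap \tar = \emptyset$ from the pruning statement about $\can$ matches how the paper treats the lemma as a justification for the update rule.
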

\vspace{-0.2cm}
\begin{proof}
First, for $\reach(q) = \no$, $q$ cannot reach any target $t\in \tar$. For each vertex $u\in \des(q)$, $u$ also cannot reach any target $t\in \tar$, $u \notin \tar$. Thus, $\des(q)\cap \tar = \emptyset$, and all vertices $\des(q)$ can be pruned from potential targets, denoted as $\des(q) \cap  \can = \emptyset$.
\end{proof}





\begin{lemma}
\label{lemma.yesprune}
Given a vertex $q\in V$, if the question $\reach(q) = \yes$, all vertices $u\in \anc(q)\setminus\{q\}$ are not targets, which should be pruned from potential targets, i.e., $\anc(q) \cap  \can = \{q\}$. 
\end{lemma}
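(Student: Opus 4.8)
The plan is to derive the claim from the \emph{target independence} property combined with the transitivity of reachability in the tree $\T$. First I would unpack the hypothesis: since $\reach(q) = \yes$, Definition~\ref{def.question} supplies a witness target $t \in \tar$ with $q \rightarrow t$. This witness $t$ is the object I will play off against the ancestors of $q$, so I would fix it at the outset.

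Next, I would take an arbitrary strict ancestor $u \in \anc(q) \setminus \{q\}$ and argue that it cannot be a target. Because $u \in \anc(q)$ we have $u \rightarrow q$, and together with $q \rightarrow t$ transitivity yields $u \rightarrow t$. Crucially, the directed path realizing $u \rightarrow q$ has positive length (as $u \neq q$), so in the acyclic tree $\T$ the composite path $u \rightarrow q \rightarrow t$ also has positive length, which forces $u \neq t$. Now suppose, for contradiction, that $u \in \tar$. Then $u$ and $t$ are two \emph{distinct} vertices of $\tar$ with $u \rightarrow t$, contradicting target independence, namely $\forall v, u \in \tar$, $v \neq u \Rightarrow v \nrightarrow u$. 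Hence $u \notin \tar$, and since $u$ was arbitrary, every vertex of $\anc(q) \setminus \{q\}$ can be pruned from the potential targets $\can$.

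Finally, to upgrade ``no strict ancestor is a target'' to the stated equality $\anc(q) \cap \can = \{q\}$, I would verify that $q$ itself survives the pruning. Indeed $q \rightarrow q$ always holds and $q \rightarrow t$, so $q$ stays fully consistent with the answer $\reach(q) = \yes$ and is not excluded as a potential target. Combined with the removal of all strict ancestors, the only vertex of $\anc(q)$ that can remain in $\can$ is $q$ itself, which gives $\anc(q) \cap \can = \{q\}$.

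I expect the single delicate point to be the justification that $u \neq t$, i.e. that a target lying strictly above $q$ cannot coincide with the witness target $t$ lying at or below $q$; I resolve this using acyclicity (a directed path of positive length cannot return to its origin in a tree). Everything else is a direct application of reachability transitivity and target independence, paralleling the structure of the proof of Lemma~\ref{lemma.noprune}.
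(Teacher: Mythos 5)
Your proof is correct and follows essentially the same route as the paper's: fix a witness target $t$ with $q \rightarrow t$, use transitivity to get $u \rightarrow t$ for any strict ancestor $u$, and invoke target independence to conclude $u \notin \tar$. The only difference is that you explicitly justify $u \neq t$ via acyclicity (a point the paper's proof leaves implicit), which is a welcome bit of extra care rather than a divergence in method.
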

\vspace{-0.2cm}
\begin{proof}
For $\reach(q) = \yes$, $\exists t \in \tar$ satisfies $q \rightarrow t$. On the basis of the independence property of targets, for other target $t^\prime \in \tar, t^\prime \neq t$, we can get $t^\prime \nrightarrow t$. Thus, for the vertex $u\in \anc(q)\setminus\{q\}$, $u \rightarrow t$, so $u \notin \tar$ and all vertices $u$ can be pruned from potential targets.
\end{proof}

For example, if we question the vertex $v_3$ in Figure~\ref{example.prob} and get the $\no$ answer, the vertices $v_3, v_6, v_7, v_8$ will be pruned from $\can$. Similarly, if we get the $\yes$ answer, the vertices $v_0, v_1$ will be pruned.


\stitle{Properties of Yes-candidates and potential targets.}
Next, we analyze the properties of the \yescs and potential targets.

\begin{lemma}\label{lemma.yescapcan}
$\yset \cap \can \neq \emptyset$ always holds.
\end{lemma}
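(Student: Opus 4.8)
The plan is to exhibit an explicit vertex lying in both $\yset$ and $\can$, rather than reasoning about the intersection abstractly. A natural first guess is the root $r$: since $q_0 = r$ and $r \rightarrow t$ for every target $t$ (assuming $\tar \neq \emptyset$), we always have $\reach(r) = \yes$, hence $r \in \anc(r) \subseteq \yset$. The main obstacle is that $r$ need not survive in $\can$: by Lemma~\ref{lemma.yesprune}, as soon as any \emph{deeper} vertex $q \neq r$ receives a $\yes$ answer, the strict ancestors of $q$ --- including $r$ --- are pruned from $\can$. So the naive witness fails, and the real task is to locate a Yes-answered vertex that escapes both pruning rules.

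My claim is that the correct witness is a \emph{deepest} Yes-answered question vertex. Let $Q_{+}$ denote the set of Yes-answered question vertices, i.e. those $q \in Q$ with $\reach(q) = \yes$. First I would observe $Q_{+} \neq \emptyset$, since it contains $r$. Because $\T$ is a finite tree, the reachability relation $\rightarrow$ restricts to a partial order on $Q_{+}$, so this set has a maximal element $q^*$ --- a Yes-answered vertex admitting no other Yes-answered vertex as a proper descendant. Finiteness and acyclicity of the tree guarantee such a $q^*$ exists.

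I would then verify $q^* \in \yset \cap \can$ in two short steps. Membership in $\yset$ is immediate, as $q^* \in \anc(q^*)$ and $\reach(q^*) = \yes$. For membership in $\can$ I would check that $q^*$ is removed by neither pruning rule. It cannot be pruned by Lemma~\ref{lemma.noprune}: if $q^* \in \des(q')$ for some $q'$ with $\reach(q') = \no$, then $q' \rightarrow q^* \rightarrow t$ for the target $t$ witnessed by $\reach(q^*) = \yes$, giving $q' \rightarrow t$ and contradicting $\reach(q') = \no$. It cannot be pruned by Lemma~\ref{lemma.yesprune} either: that would require $q^* \in \anc(q') \setminus \{q'\}$ for some $q' \in Q_{+}$, i.e. $q'$ is a proper descendant of $q^*$ within $Q_{+}$, which contradicts the maximality of $q^*$. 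Combining these, $q^*$ survives in $\can$.

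Since $q^*$ belongs to both $\yset$ and $\can$, we conclude $\yset \cap \can \neq \emptyset$. The only subtlety needing care is the maximality argument: I must ensure ``deepest among Yes-answered vertices'' is well-defined and that this maximality is precisely the property that blocks Lemma~\ref{lemma.yesprune}; everything else reduces to transitivity of reachability and the standing assumption that at least one target exists.
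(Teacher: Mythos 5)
Your proof is correct and follows essentially the same route as the paper: the paper also selects a Yes-answered vertex $q_i$ with $q_i \nrightarrow q_j$ for every other Yes-answered $q_j$ (your maximal $q^*$) and observes it lies in both $\yset$ and $\can$. Your version is merely more explicit in verifying that the witness escapes the \kw{No}-pruning of Lemma~\ref{lemma.noprune}, a step the paper leaves implicit.
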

\vspace{-0.2cm}
\begin{proof}
Given a series of all $Yes$ questions $Q = \{q_1, q_2, ... q_l\}$, which satisfy $\reach(q_i) = \yes$. Consider the vertex $q_i \in Q$ satisfies $q_i \nrightarrow q_j$ for all vertices $q_j \in Q \setminus \{q_i\}$, $q_i \in \can$. Furthermore, $q_i \in Q \subseteq \yset$. So $q_i \in \yset \cap \can$ and $\yset \cap \can \neq \emptyset$.
\end{proof}



\begin{lemma}
\label{lemma.target}
$\forall t \in \tar$ if and only if $\reach(t) = \yes$ and $\reach(u) = \no$ for all $u \in \child(t)$.
\end{lemma}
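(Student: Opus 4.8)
The plan is to prove the biconditional by treating each direction separately, relying on just three facts: reachability is reflexive ($v\rightarrow v$), the targets satisfy the independence property ($v\neq u$ in $\tar$ implies $v\nrightarrow u$), and $\T$ is acyclic because it is a tree.

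For the forward direction, I would assume $t\in\tar$ and verify the two conjuncts. Since $t\rightarrow t$ always holds, taking the witness $s=t\in\tar$ immediately gives $\reach(t)=\yes$ by Definition~\ref{def.question}. To show $\reach(u)=\no$ for every $u\in\child(t)$, I would argue by contradiction: if some child $u$ had $\reach(u)=\yes$, there would be a target $s\in\tar$ with $u\rightarrow s$, and composing with the edge $t\rightarrow u$ would give $t\rightarrow s$. As both $t,s\in\tar$, target independence forces $s=t$, hence $u\rightarrow t$; but $t\rightarrow u$ with $u\neq t$ then yields a cycle, contradicting that $\T$ is a tree. So no child can be a \yesc, and each child must receive the $\no$ answer.

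For the backward direction, I would assume $\reach(t)=\yes$ and $\reach(u)=\no$ for all $u\in\child(t)$, and show $t\in\tar$. From $\reach(t)=\yes$ I obtain a target $s\in\tar$ with $t\rightarrow s$, and the goal is to conclude $t=s$. Suppose instead $t\neq s$; then the directed path witnessing $t\rightarrow s$ has length at least one, and its first edge leads to some child $u\in\child(t)$ that still reaches $s$, i.e.\ $u\rightarrow s$. This makes $u$ a \yesc, so $\reach(u)=\yes$, contradicting the hypothesis on the children. Therefore $t=s\in\tar$.

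I do not expect a serious obstacle here, since both directions reduce to short contradiction arguments; the only care needed is to invoke the correct structural fact in each. The forward direction is precisely where target independence is indispensable---without it, a child reaching a \emph{different} target could not be ruled out---while the backward direction needs only the tree/path structure to locate the offending child. I would make the acyclicity appeal explicit when excluding $u\rightarrow t$ together with $t\rightarrow u$.
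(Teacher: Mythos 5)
Your proof is correct and follows essentially the same route as the paper's: the forward direction rests on reflexivity of reachability plus target independence (the paper compresses this into ``based on the definition of targets''), and your backward direction's path-through-a-child argument is just a pointwise rephrasing of the paper's observation that $\reach(u)=\no$ for all $u\in\child(t)$ forces $(\des(t)\setminus\{t\})\cap\tar=\emptyset$ while $\reach(t)=\yes$ forces $\des(t)\cap\tar\neq\emptyset$. Your version is more explicit about where independence and acyclicity are used, but it is the same argument.
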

\vspace{-0.2cm}
\begin{proof}
Based on the definition of targets, $\forall t \in \tar$ satisfies $\reach(t) = \yes$ and $\reach(u) = \no$ for $\forall u \in \child(t)$. Moreover, if $\reach(t) = \yes$ then $\des(t) \cap \tar \neq \emptyset$. If $\reach(u) = \no$ for $\forall u \in \child(t)$, $(\des(t) \setminus \{t\}) \cap \tar = \emptyset$. So, $t \in \tar$.
\end{proof}

\begin{theorem}
\label{theorem.target}
If $\can \subseteq \yset$, the targets are exactly as $\tar = \can$. 
\end{theorem}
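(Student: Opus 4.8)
The plan is to prove the two inclusions $\tar \subseteq \can$ and $\can \subseteq \tar$ separately. The first, $\tar \subseteq \can$, requires no hypothesis: it is the standing invariant of the potential-target set, since $\can = V$ initially and every pruning step only discards vertices that are provably non-targets (Lemma~\ref{lemma.noprune} removes $\des(q)$ for a $\no$-answer, Lemma~\ref{lemma.yesprune} removes $\anc(q)\setminus\{q\}$ for a $\yes$-answer). Hence the entire content of the theorem reduces to the reverse inclusion $\can \subseteq \tar$ under the hypothesis $\can \subseteq \yset$.

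For the reverse inclusion I would fix an arbitrary $v \in \can$ and verify the two conditions of Lemma~\ref{lemma.target}, namely $\reach(v) = \yes$ and $\reach(u) = \no$ for every $u \in \child(v)$, which together force $v \in \tar$. The first condition is immediate: since $\can \subseteq \yset$ we have $v \in \yset$, so $v \in \anc(q)$ for some asked question $q$ with $\reach(q) = \yes$; then $v \rightarrow q \rightarrow t$ for some target $t$, giving $\reach(v) = \yes$.

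The heart of the argument is the second condition, which I would establish by contradiction. Suppose some child $u \in \child(v)$ had $\reach(u) = \yes$; then there is a target $t' \in \des(u)$, a \emph{proper} descendant of $v$. Because $t' \in \tar \subseteq \can \subseteq \yset$, the definition of $\yset$ places $t'$ as an ancestor of some asked question $q'$ with $\reach(q') = \yes$, i.e. $t' \rightarrow q'$. Chaining $v \rightarrow t' \rightarrow q'$ shows $v \in \anc(q')$, and since the tree is acyclic, $v = q'$ is impossible (it would force $v = t'$), so $v$ is a \emph{proper} ancestor of the $\yes$-question $q'$. Lemma~\ref{lemma.yesprune} then prunes $v$ from $\can$, contradicting $v \in \can$. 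Thus every child of $v$ satisfies $\reach(u) = \no$, and Lemma~\ref{lemma.target} yields $v \in \tar$, completing $\can \subseteq \tar$ and hence $\tar = \can$.

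I expect the main obstacle to be precisely this child-pruning step: one must observe that the offending target $t'$, being itself a potential target, is dragged into $\yset$ by the hypothesis and therefore points to a $\yes$-question that $v$ strictly dominates, which is what triggers the ancestor-pruning rule. The bookkeeping detail to watch is ruling out $v = q'$ via acyclicity before invoking Lemma~\ref{lemma.yesprune}; everything else is a direct application of the earlier lemmas and the definitions of $\can$ and $\yset$.
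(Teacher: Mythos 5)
Your proof is correct and follows the same route as the paper: both establish $\can \subseteq \tar$ by verifying the two hypotheses of Lemma~\ref{lemma.target} for each vertex of $\can$ and then combine this with the standing invariant $\tar \subseteq \can$. In fact, your treatment of the second hypothesis---showing every child of $v$ gets a \no answer by observing that a target strictly below $v$ would itself lie in $\tar \subseteq \can \subseteq \yset$ and hence witness an asked \yes-question of which $v$ is a proper ancestor, triggering the pruning of $v$ via Lemma~\ref{lemma.yesprune}---is more careful than the paper's one-line citation of that lemma, which silently skips exactly this step.
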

\vspace{-0.2cm}
\begin{proof}
For $\forall t \in \can$, $\reach(t) = \yes$. By Lemma~\ref{lemma.yesprune}, $\reach(v) = \no$ holds for $v\in \child(t)$. Moreover, by Lemma~\ref{lemma.target}, $\can \subseteq \tar$. As the definition of potential targets $\tar \subseteq \can$, thus $\can = \tar$. 
\end{proof}

Figure~\ref{fig.cand.multi} shows an example of \yescs and potential targets. 
$\can \cap \yset \neq \emptyset$. If the children of $u$ and $v$ are all questioned and get the $\no$ answer, $\can = \{u, v\} \subseteq \yset$ and the targets will be $\tar = \{u, v\}$. 


\subsection{kBM-IGS Framework}

In this section, we introduce a novel \Frame framework for identifying multiple targets via a series of $b$ interactive questions. The key idea is asking \emph{good questions} to reduce 
potential targets $\can$ and refine $\yset$ to be specified by Theorem~\ref{theorem.target}. 

\stitle{Motivations.} We use 
a toy example $\T$ in Figure~\ref{example.prob} to show the general ideas of our framework. 
Assume that $\can=V$ and $\yset = \{r\}$. 
First, we consider a vertex $v_1$ and ask the question $\reach(v_1)$. 
If $\reach(v_1) = \no$, all the descendants of $v_1$ can be pruned from $\can$, which achieves a considerable gain by reducing $|\can|$ from 10 to 2. 
But, if $\reach(v_1) = \yes$, $\can$ will only reduce the vertex $v_0$, which achieves a limited gain. 
Unfortunately, assuming that the targets are randomly distributed in $V$, $v_1$ has a low probability of getting $\reach(v_1) = \no$. This is because there exist 8 descendants of $v_1$, and if any vertex $u\in \des(v_1)$ is the target, $\reach(v_1) = \yes$ holds. 
Thus, $v_1$ may not be a good choice for questioning.
Second, we consider a leaf vertex $v_9$. If $\reach(v_9)=\yes$, we surely know that $v_9$ is one desired target, i.e., $v_9\in \tar$, which achieves lots of gains. But, if $\reach(v_9) = \no$, $\can$ will only reduce $v_9$, which achieves a limited gain. However, it has a high probability of getting $\reach(v_9) = \no$ and a low probability to $\reach(v_9) = \yes$. 
 We need to select good vertices by making a balanced trade-off between the probability and gains. 
To do so, the \Frame framework develops a ranking evaluation function for vertices, which is based on \emph{target probability} and \emph{gain score}.

\stitle{Target probability}. Naturally, the vertices at the top levels of $\T$ (e.g., root $r$ in Figure~\ref{example.prob}) have high probabilities of getting a $\yes$ answer and low probabilities to get a $\no$ answer. 
\revision{On the contrary, the vertices at the bottom levels of $\T$ (e.g., leaf $v_9$ in Figure~\ref{example.prob}) have high probabilities of getting a $\no$ answer and low probabilities to get a $\yes$ answer.} 
Therefore, for a vertex $v$, we denote the probabilities of $\reach(v)=\yes$ and $\reach(v)=\no$ respectively as $\pyes(v)$ and $\pno(v)$, which satisfy $\pyes(v) + \pno(v) = 1$. The specific calculations of $\pyes(v)$ and $\pno(v)$ are based on the  descendants $\des(v)$, which will be introduced in Sections~\ref{sec.single} and \ref{sec.multi}.  

\stitle{Gain score}. First, we define the potential penalty. Instead of using the targets $\tar$ as in Problem~\ref{prob.mtigs}, we define the potential penalty to measure the minimum distance between feasible selections $\sset$ and potential targets $\can$ as we do not know the exact $\tar$
, as follows.

\begin{definition}[Potential Penalty]
Given a set of potential targets $\can$, a set of \yescs $\yset$, and a number $k$, the potential penalty is denoted as $\g(\yset, \can, k) = \min_{\sset \subseteq \yset, |\sset| \leq k}{\f(\sset, \can)}$.
\end{definition}

The potential penalty $\g(\yset, \can, k)$ is to select the best $k$ vertices from the \yescs $\yset$ in order to identify the potential targets in $\can$. The less $|\can|$ and the closer $\sset$ to $\can$ is, the lower score $\g(\yset, \can, k)$ is, which is better. On the basis of potential penalties, we present the definition of gain score. For a given vertex $v\in V$ with existing $\can$ and $\yset$, we ask a new question $\reach(v)$, and present two gain scores for the different answers $\reach(v) = \yes$ and $\reach(v) = \no$ respectively, as follows.

\begin{definition}[Yes\&No Gains]
\label{def.gain}
The gain of $\reach(v)$ $=\yes$ is denoted as $ \gyes(v)$ $ = \g(\yset, \can, k) -$ $\g(\hat{\yset}_{v}, \hat{\can}_{v}, k)$ where $\hat{\yset}_{v}$, $\hat{\can}_{v}$ are the updated potential targets and \yescs after asking the question $\reach(v)=\yes$; Similarly, the gain of $\reach(v)=\no$ is denoted as $ \gno(v) = \g(\yset, \can, k) - $ $\g(\bar{\yset}_{v}, \bar{\can}_{v}, k)$ where $\bar{\yset}_{v}$, $\bar{\can}_{v}$ are the updated potential targets and \yescs after asking the question $\reach(v)=\no$.

\end{definition} 


On the basis of the target probabilities and gain scores, we define an integrated function of expected gain as follows.
\begin{definition}[Expected Gain]
\label{def.eg}
Given a vertex $v$ in $\T$, potential targets $\can$, \yescs $\yset$, and a number $k$, the expected gain of asking the question $\reach(v)$ is denoted as $$\eg(v) = \gyes(v) \cdot \pyes(v) + \gno(v) \cdot \pno(v).$$
\end{definition} 

The larger the expected gain of a vertex, the better the choice for questioning.

\begin{algorithm}[t]
  \small
  \caption{\Frame Framework}
  \label{algo:framework}
  \begin{algorithmic}[1]
    \Require A hierarchy tree $\T=(V, E)$, a budget $b$, a number $k$.
    \Ensure Selections $\sset$ with $|\sset|\leq k$.
    \State Let $\yset \leftarrow \{r\}$, $\can \leftarrow V$;
    \State Initialize the probability $\pr(v)$ for every vertex $v\in V$;
    \For {$i \leftarrow$ 1 to $b$}
      \For {$v\in \can \setminus \yset$}
        \State Calculate $\pyes(v), \pno(v), \gyes(v), \gno(v)$;
        \State $\eg(v) \leftarrow \gyes(v) \cdot \pyes(v) + \gno(v) \cdot \pno(v)$ by Def.~\ref{def.eg};
      \EndFor
      \State $q_i \leftarrow \arg\max_{v \in \can \setminus \yset} \eg(v)$;
      \State Ask the question $\reach(q_i)$;
      \If {$\reach(q_i) = \yes$}
        \State $\yset \leftarrow \yset \cup \anc(q_i)$ by Def.~\ref{def.question};
      \EndIf
      \State Update the potential candidates $\can$ and vertex probabilities $\pr$ accordingly if needed; 
      \If {$\can \subseteq \yset$} \textbf{break} by Theorem.~\ref{theorem.target};
      \EndIf
    \EndFor
    \State $\sset^* \leftarrow \arg \min_{\sset \subseteq \yset, |\sset| \leq k}{\f(\sset, \can)}$;
    \State \Return{$\sset^*$}; 
  \end{algorithmic}
\end{algorithm}

\stitle{Algorithm}. The algorithm of \Frame framework is outlined in Algorithm~\ref{algo:framework}. The general idea is to use a greedy strategy to select the vertex with the largest expected gain at each round of question-asking. The framework has an input of a hierarchy tree $\T$, a budget of $b$ questions that can be asked, and a number $k$. First, it initializes the \yescs $\yset$ as $\{r\}$ and the potential targets $\can$ as the whole vertex set $V$ (line 1). Note that if the vertices have no probabilities, we can set all vertices to have the same probability as $\frac{k}{n}$ (line 2). The algorithm then iteratively selects one best vertex $q_i\in \can \setminus \yset$ and asks the question $\reach(q_i)$ until the quota of $b$ questions is used up (lines 3-12). For each round, it calculates the target probabilities of $\pyes(v), \pno(v)$ and the \yes\&\no gains of $\gyes(v), \gno(v)$ for each vertex $v \in \can \setminus \yset$. Then, the expected gains of all vertices are computed (lines 4-6). The algorithm next finds the vertex $q_i$ with the largest expected gain and asks the question (lines 7-8). According to the answer, the \yescs, potential targets, and vertex probabilities are updated in accordance with the answer (lines 9-11). Finally, after $b$ questions or the identification of exact targets, the algorithm selects the best selection $\sset^* = \arg \min_{\sset \subseteq \yset, |\sset| \leq k}{\f(\sset, \can)}$ and return $\sset^* $ as the final selections (lines 13-14).
\section{Single Target Search}  \label{sec.single}
In this section, we investigate one special case of \mtigs problem, i.e., the \kw{SingleTarget} \kw{problem}, where $\tar$ has only a single target as $|\tar|=1$~\cite{li2020efficient, tao2019interactive}. On the basis of the \Frame framework, we develop a \single method to identify one vertex as $\sset$ by asking $b$ questions.


\subsection{Single Target Problem Analysis}
As an instance problem, the \kw{SingleTarget} \kw{problem} inherits all properties of \mtigs described in Section~\ref{sec.analysis} and enjoys its own properties. Assume that the initial $\can=V$ and $\yset=\{r\}$, and the target $\tar=\{t\}$. We can ask a question and interactively update $\can$ as $\can_{new}$ and $\yset$ as $\yset_{new}$ by obeying the two following rules.


\begin{figure}[t]
\vspace{-0.4cm}
\centering
{
\subfigure{
\includegraphics[width=0.25\linewidth]{./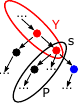} }
}
\vspace{-0.4cm}
\caption{An example of yes candidates and potential targets for single target.}
\label{fig.cand.single}
\vspace{-0.5cm}
\end{figure}


First, as $|\tar|=1$, for each question, the best strategy is to ask a vertex $v$ that is a potential target $v\in \can \setminus \yset$. Otherwise, we consider two cases. First, if we ask a vertex $v \in \yset$, the answer of $\reach(v)$ is always $\yes$; Second, if we ask a vertex $v \in V \setminus (\can \cup \yset)$, the answer of $\reach(v)$ is always \no. Thus, it achieves no benefit gain but wastes one question from the budget. 
Moreover, in contrast to Lemma~\ref{lemma.yesprune}, in \kw{SingleTarget} problem, more vertices can be pruned after $\yes$ answer as follows.

\begin{lemma}\label{lemma.single.prune}
For a vertex $q\in \can \setminus \yset$, if the question $\reach(q) = \yes$, none of the vertices $u \in \can\setminus\des(q)$ are potential targets and we update $\can_{new} = \des(q) \cap \can$. 
\end{lemma}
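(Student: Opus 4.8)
The plan is to exploit the single-target assumption $|\tar| = 1$ in order to strengthen the generic pruning of Lemma~\ref{lemma.yesprune}. Write $\tar = \{t\}$. The first step is to translate the answer via Definition~\ref{def.question}: $\reach(q) = \yes$ guarantees that some target is reachable from $q$, and since $t$ is the \emph{only} target this forces $q \rightarrow t$, i.e. $t \in \des(q)$. This single observation is the crux of the argument, as it pins the unique target down inside the subtree rooted at $q$.

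Next I would establish the pruning claim. Take an arbitrary $u \in \can \setminus \des(q)$. Because $t \in \des(q)$ while $u \notin \des(q)$, we have $u \neq t$; and since $t$ is the sole target, $u$ cannot be a target and is therefore disqualified as a potential target. As this holds for every such $u$, all of $\can \setminus \des(q)$ is removed, so the surviving potential targets are exactly $\can_{new} = \can \cap \des(q) = \des(q) \cap \can$, which is the claimed update.

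Before concluding, I would verify that the update is sound, i.e. that it preserves the standing invariant $\tar \subseteq \can$ and never discards the genuine target. Since $t \in \can$ prior to the question (by the invariant) and $t \in \des(q)$ (just shown), we retain $t \in \can_{new}$, so no real target is lost by the pruning.

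I do not expect a serious obstacle here; the only point worth spelling out carefully is \emph{why} this aggressive pruning is legitimate in the single-target setting but fails in general. In the multiple-target case a $\yes$ answer only certifies that \emph{some} target lies in $\des(q)$, leaving open the possibility of further targets outside $\des(q)$, which is precisely why Lemma~\ref{lemma.yesprune} can delete only the strict ancestors of $q$. The uniqueness $|\tar| = 1$ is exactly what rules out such ``other'' targets and licenses removing everything outside $\des(q)$. Making this contrast explicit is the main thing to get right in the write-up.
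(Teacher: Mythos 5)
Your argument is correct and matches the paper's own proof: both hinge on the observation that $\reach(q)=\yes$ together with $|\tar|=1$ forces the unique target $t$ into $\des(q)$, so every vertex outside $\des(q)$ can be pruned. The extra remarks you add (preservation of $t$ in $\can_{new}$ and the contrast with the multi-target case) are sound but not part of the paper's proof, which stops after noting $\tar \subseteq \des(q)$.
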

\vspace{-0.2cm}
\begin{proof}
Because $\reach(q) = \yes$, $\exists t \in \tar$ such that $t \in \des(q)$. Moreover, $|\tar| = 1$, $\tar = \{t\} \subseteq \des(q)$. Thus, $\tar \cap (V \setminus \des(q)) = \emptyset$ and all vertices $u \in V \setminus \des(q)$ should be pruned from $\can$.
\end{proof}

Second, the \yescs can be updated only when a question $\reach(v)=\yes$ by Def.~\ref{def.question}, where the updated \yescs $\yset_{new}= \yset\cup \anc(v)$. However, $\yset \subseteq \yset_{new} \subseteq \anc(t)$ always holds, i.e., all \yescs lie along the path from root $r$ to target $t$. The penalty function $\f(\sset, \tar)$ in Def~\ref{def.penalty} tells us that keeping one vertex $s\in \yset_{new}$ closest to $t$ is enough. In other words, it achieves the minimum penalty $\f(\sset, \tar)$ = $\f(\{s\}, \{t\})$. 
Thus, $\yset_{new}$ can be updated as $\yset_{new}=\{s\}$, where $s$ has the largest depth $\dist\langle r, s\rangle$ in $\T$ and the question $\reach(s)=\yes$.

\begin{lemma}
For $\yset=\{s\}$, both $\yset \cap \can = \{s\}$ and the penalty $\g(\yset, \can, 1) = \f(\{s\}, \can)$  hold.
\end{lemma}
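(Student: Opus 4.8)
The plan is to prove the two claims separately, both of which hinge on the single structural fact that the chosen vertex $s$ survives every update and therefore remains a potential target, i.e. $s \in \can$. Recall that $s$ is the deepest vertex for which a question was asked and answered $\reach(s) = \yes$; since the single target is $\tar = \{t\}$ and a $\yes$ answer means reachability of the target, $s \in \anc(t)$, so $s \rightarrow t$.

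First I would establish $s \in \can$ by ruling out both pruning rules. A vertex can leave $\can$ only via Lemma~\ref{lemma.noprune} (a $\no$ answer) or via the single-target rule in Lemma~\ref{lemma.single.prune} (a $\yes$ answer). Consider any asked question $q$. If $\reach(q) = \no$ and $s$ were pruned by it, Lemma~\ref{lemma.noprune} forces $s \in \des(q)$, hence $q \rightarrow s \rightarrow t$, contradicting that $q$ cannot reach the target. If $\reach(q) = \yes$, then $q \rightarrow t$, so $q \in \anc(t)$; since the ancestors of $t$ form a chain along the root-to-$t$ path and $s$ is the deepest $\yes$-answered vertex, $q$ is an ancestor of (or equal to) $s$, giving $s \in \des(q)$, so Lemma~\ref{lemma.single.prune} retains $s$ in $\can_{new} = \des(q) \cap \can$. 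Thus no update removes $s$, and therefore $\yset \cap \can = \{s\} \cap \can = \{s\}$.

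Next I would settle the penalty identity. Because $\yset = \{s\}$ is a singleton, the only non-empty selection with $\sset \subseteq \yset$ and $|\sset| \le 1$ is $\sset = \{s\}$, so the minimization collapses and $\g(\yset, \can, 1) = \min_{\sset \subseteq \yset, |\sset| \le 1} \f(\sset, \can) = \f(\{s\}, \can)$. As a consistency check with Def.~\ref{def.penalty}, which folds $r$ into every selection, I would note that the repeated single-target $\yes$-pruning forces $\can \subseteq \des(s)$, so $s \in \anc(t')$ for every $t' \in \can$; hence $s$ is never a worse cover than $r$ and $\{s\}$ is genuinely optimal.

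The penalty identity is routine once $\yset$ is recognized as a singleton. The main obstacle is the membership $s \in \can$: the care lies in arguing that the deepest $\yes$-answered vertex is untouched by both pruning rules, using that every $\yes$ vertex is comparable to $s$ on the root-to-target path and that a $\no$ answer on an ancestor of $s$ is impossible because $s$ reaches the target.
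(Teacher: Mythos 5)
Your proof is correct and follows essentially the same route as the paper's: the paper likewise argues that $s$ survives the $\no$-pruning because $s\in\yset$ (i.e., $s$ reaches the target) and survives the $\yes$-pruning because $s$ is the deepest $\yes$-answered vertex, and then collapses the minimization since $\yset$ is a singleton. You simply spell out the details the paper leaves implicit (the chain structure of $\anc(t)$ and the $\can\subseteq\des(s)$ consistency check), which is a welcome elaboration rather than a different approach.
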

\vspace{-0.2cm}
\begin{proof}
First, we prove $s \in \can$. \revision{Since $s \in \yset$, $s$ will not be pruned from $\no$ questions according to Lemma~\ref{lemma.noprune}.} Furthermore, as $s$ is the deepest vertex in $\yset$, $s$ will not be pruned from $\yes$ questions according to Lemma~\ref{lemma.single.prune}. So, $s \in \can$ and $\yset \cap \can = \{s\}$. Moreover, since $|\yset| = |\{s\}| = 1$, $\g(\yset, \can, 1) = \f(\{s\}, \can)$.
\end{proof}

On the basis of the above properties, we have two useful updating rules.

\begin{Rule}\label{rule.single.yes}
For a vertex $v\in \can$ with $\reach(v) = \yes$, we update the potential targets $\can_{new} = \des(v) \cap \can$ and the \yescs $\yset_{new}=\{v\}$. 
\end{Rule}
\vspace{-0.2cm}

\begin{Rule}\label{rule.single.no}
For a vertex $v\in \can$ with $\reach(v) = \no$,  we update the potential targets $\can_{new} = \can \setminus \des(v)$ and keep the \yescs unchanged  $\yset_{new}= \yset = \{s\}$. 
\end{Rule}

\subsection{The STBIS Algorithm}

In this section, we propose a greedy algorithm \single to find a single target. We begin with probability calculation. 

\stitle{Probability calculation.} Before asking any questions, each vertex has an equal probability of being the target. Thus, we let each vertex $u$ have a probability of $\pr(u) = \frac{1}{n}$ where $n=|V|$. Our proposed algorithm can be easily extended to other vertex probability distribution based on historical query logs as ~\cite{li2020efficient}. 
Therefore, for a vertex $v$, the target probability for each vertex $v$ follows $\pyes(v) = \sum_{u \in \des(v)}{\pr(u)}$ and the no probability follows $\pno(v) = 1 - \pyes(v)$. As more question answers are discovered, the vertex probabilities need to be updated accordingly. 
The updated probability after each question is calculated as:
\begin{equation} \label{eq.prob}
\pr(u) = 
\left\{
\begin{aligned}
&\pr(u) \cdot \frac{|\can|}{|\can_{new}|}, \ &u \in \can_{new}\\
&0, \ &u \notin \can_{new} \\
\end{aligned}
\right.
\end{equation}
where $\can_{new}$ is the potential targets after asking a question $\reach(q_i)$ where $1\leq i\leq b$. The general idea is to assign impossible targets with a probability value of zero and keep the sum probability equal to $1$.

\stitle{STBIS algorithm.} The detailed procedure of \single is outlined in Algorithm~\ref{algo:single}, which finds the vertices with the largest gain to ask interactive questions and finally identifies a selection to represent the target within $b$ questions. The algorithm first initializes the \yescs $\yset$ as a root $r$ and potential targets $\can = V$ (line 1), and uniformly assigns the vertex probability (line 2). Then, it 
calculates the $\yes$\&$\no$ probability (line 5), the $\yes$\&$\no$ gain scores (lines 6-9), and the expected gain $\eg(v)$ (line 10) for all potential targets $v\in \can$. Next, the algorithm chooses a vertex $q_i \in \can$ with the largest expected gain and ask question $\reach(q_i)$ (lines 11-12). If $\reach(q_i) =\yes$, it updates
$\can_{new} = \des(q_i)$ and $\yset=\{q_i\}$ by Rule~\ref{rule.single.yes} (lines 13-14); Otherwise, if $\reach(q_i) =\no$, it updates $\can_{new}= \can \setminus \des(q_i)$ by Rule~\ref{rule.single.no} (lines 15-16). It updates the probability using Eq.~\ref{eq.prob} (lines 17-18), and assign  $\can = \can_{new}$. Finally, the algorithm returns the vertex $s\in \yset$ as the selection (line 21) and terminates early if $\can = \yset$ by Theorem~\ref{theorem.target} (line 20).


\begin{algorithm}[t]
  \small
  \caption{\single}
  \label{algo:single}
  \begin{algorithmic}[1]
    \Require A hierarchy tree $\T=(V, E)$, root $r$, budget $b$, and $k=1$.
    \Ensure One selection $s$. 
    \State Let $\yset \leftarrow \{r\}$, $\can \leftarrow V$;
    \State Assign the probability $\pr(v)=1/n$ for $v\in V$;
    \For {$i \leftarrow$ 1 to $b$}
      \For {$v \in \can \setminus \yset$}
        \State 
        $\pyes(v) \leftarrow \sum_{u \in \des(v)}{\pr(u)}$, $\pno(v) \leftarrow 1 - \pyes(v)$;
        \State Calculate $\cyes_v$ as $\can_{new} \leftarrow \can \cap \des(v)$ by Rule~\ref{rule.single.yes};
        \State Update $\gyes(v) \leftarrow \f(\yset, \can) - \f(\{v\}, \can_{new})$;
        \State Calculate $\cno_v$ as $\can_{new} \leftarrow \can \setminus \des(v)$  by Rule~\ref{rule.single.no};
        \State Update $\gno(v) \leftarrow \f(\yset, \can) - \f(\yset, \can_{new})$;
        \State $\eg(v) \leftarrow \gyes(v) \cdot \pyes(v) + \gno(v) \cdot \pno(v)$;
      \EndFor
      \State $q_i \leftarrow \arg\max_{v \in \can \setminus \yset} \eg(v)$;
      \State Ask the question $\reach(q_i)$;
      \If {$\reach(q_i) = \yes$}
        \State $\can_{new} \leftarrow \can \cap \des(q_i)$; $\yset \leftarrow \{q_i\}$;
      \Else
        \State $\can_{new} \leftarrow \can \setminus \des(q_i)$; $\yset$ keeps unchanged;
      \EndIf
       \State Update $\pr(u) = 0$ for $u \in \can \setminus \can_{new}$;
       \State Update $\pr(u) = \pr(u) \cdot \frac{|\can|}{|\can_{new}|}$ for $u \in \can_{new}$;
      \State $\can \leftarrow \can_{new}$;
      \If {$\can = \yset$} \Return{ $s\in \yset$};
      \EndIf
    \EndFor
    \State \Return{$s\in \yset$}; 
  \end{algorithmic}
\end{algorithm}

\begin{algorithm}[t]
  \small
  \caption{DFS-Gain(u)}
  \label{algo:single.dfs}
  \begin{algorithmic}[1]
    \Require A subtree $T_u$, the root of subtree $u$.
    \Ensure $\eg(u)$.
    \State $\pyes(u) = \pr(u)$, $|\cyes_u| = 1$, $\f(\{u\}, \cyes_u) = 0$;
    \For{vertex $v \in \child(u)$}
      \If {$v \notin \can$} $\textbf{continue}$;
      \EndIf
      \State DFS-Gain(v);
      \State $\pyes(u) = \pyes(u) + \pyes(v)$;
      \State $|\cyes_u| = |\cyes_u| + |\cyes_v|$;
      \State $\f(\{u\}, \cyes_{u}) = \f(\{u\}, \cyes_{u}) + \f(\{v\}, \cyes_v)$;
    \EndFor
    \State $\pno(u) = 1 - \pyes(u)$;
    \State Calculate $\gyes(u), \gno(u), \f(\{u\}, \hat{\can}_{u})$ by Eq.~\ref{eq.fyes}, \ref{eq.gyes}, \ref{eq.gno};
    \State $\eg(u) \leftarrow \gyes(u) \cdot \pyes(u) + \gno(u) \cdot \pno(u)$;
    \State \Return{$\eg(u)$};
  \end{algorithmic}
\end{algorithm}

\begin{table}[t]
\vspace{-0.4cm}
\caption{The value of $\gyes$, $\gno$, $\pyes$, and $\pno$ of first question and the $\eg$ of two questions. Here, $b = 2$ and $\tar=\{v_5\}$.}
\vspace{-0.4cm}
\centering
\scalebox{0.75}{
\begin{tabular}{c|ccccccccc}
\toprule
Node& $v_{1}$& $v_{2}$& $v_{3}$& $v_{4}$& $v_{5}$ &$v_{6}$ &$v_{7}$& $v_8$& $v_9$\\
\midrule
$\gyes$& 9& 20& 17& 20& 19& 20& 20& 20& 20\\
$\gno$& 19& 1& 11& 2& 5& 3& 3& 3& 3\\
$\pyes$& 0.8& 0.1& 0.4& 0.1& 0.2& 0.1& 0.1& 0.1& 0.1\\
$\pno$& 0.2& 0.9& 0.6& 0.9& 0.8& 0.9& 0.9& 0.9& 0.9\\
$\eg^1$& 11& 2.9& \textbf{13.4}& 3.8& 7.8& 4.7& 4.7& 4.7& 4.7\\
$\eg^2$& \textbf{6}& 2.33& /& 3.17& \textbf{6}& /& /& /& 4\\
\bottomrule
\end{tabular}
}
\label{table:single}
\vspace{-0.6cm}
\end{table}

\begin{example}
Assume the target vertex is $v_5$ and the budget is $2$ in Figure~\ref{example.prob}. Table~\ref{table:single} shows the gains of each vertex. \revision{In the first round, the algorithm questions $v_3$ and gets the $\no$ answer.} Then, the vertices $v_3, v_6, v_7, v_8$ will be pruned. The vertices $v_1$ and $v_5$ get the maximum gain in the second round. If $v_5$ is selected, the penalty is $0$. If $v_1$ is selected, the penalty is $1$.
\end{example}
\vspace{-0.2cm}

\stitle{Complexity analysis and optimizations.} A straightforward implementation of \single in Algorithm~\ref{algo:single} takes $O(n)$ time to compute $\gyes(v)$, $\gno(v)$, $\pyes(v)$ and $\pno(v)$ for each vertex $v\in V$. Thus, the calculation of $\eg(v)$ for all vertices $v\in \can$ takes $O(n^2)$ time. However, these calculations have several repeated process, which can be avoided.

We consider the following deviations of $\yes$\&$\no$ gains. 

\begin{equation} \label{eq.fyes}
\begin{aligned}
\f(\{v\}, \cyes_v) 
&= \sum_{u \in \child(v)}{\f(\{u\}, \cyes_u}) + |\cyes_v| - 1
\end{aligned}
\end{equation}
\vspace{-0.2cm}
\begin{equation} \label{eq.gyes}
\begin{aligned}
\gyes(v) &= \f(\{s\}, \can) - \f(\{v\}, \cyes_v)\\
\end{aligned}
\end{equation}
\vspace{-0.2cm}
\begin{equation} \label{eq.gno}
\begin{aligned}
\gno(v) 
&= \f(\{v\}, \cyes_v) + \dist \langle s, v \rangle \cdot |\cyes_v|
\end{aligned}
\end{equation}

We propose a fast approach in Algorithm~\ref{algo:single.dfs} for computing expected gain $\eg(u)$ for all vertices $u$ based on Eqs.~\ref{eq.fyes},~\ref{eq.gyes} and ~\ref{eq.gno}. The algorithm can reduce the calculations to $O(n)$ time. Algorithm~\ref{algo:single.dfs} shows the details of DFS algorithm. First, the algorithm initializes the yes probability $\pyes$, the size of potential targets $|\cyes_u|$ and $\f(\{u\}, \cyes_u) $ (line 1). Next, it traverses the children of $u$ and updates $\pyes$, $|\cyes_u|$ and $\f(\{u\}, \cyes_u) $ (lines 2-7). Finally, the algorithm calculates $\pno$, $\gyes$, $\gno$ and returns them (lines 8-11). We can replace the procedure of lines 5-10 in Algorithm~\ref{algo:single} by invoking DFS-Gain(s) in Algorithm~\ref{algo:single.dfs}.

In summary, \single in Algorithm~\ref{algo:single} takes $O(n)$ time to generate a question for users to answer. The overall time complexity of \single takes $O(bn)$ time in $O(n)$ space for generating $b$ questions. 


\section{Multiple Targets Search} 
\label{sec.multi}
In this section, we propose efficient 
algorithms for identifying $k$ selections for multiple targets. We first introduce the probability setting and updating rules for the identification of multiple targets. Then, we propose a \MTDiv method using dynamic programming techniques and improve its efficiency by leveraging the techniques of non-diverse selections and bounded pruning.

\subsection{Multiple Targets Scheme}

We start by presenting \emph{probability setting} and \emph{updating rules} for the \kw{Multiple}\kw{Targets} scheme.

\stitle{Target probability.} As there exist multiple targets with $|\tar|\geq 1$, we assume that each vertex has an independent probability of being a target. As our problem aims at finding $k$ selections and $|\tar|$ is unknown, let $\pr(v)$ be the vertex probability of $v$ and $\pr(v) = \frac{k}{n}$. Thus, the target probability of a vertex $v$ is computed as follows. The probability of $\reach(v)=\no$ is denoted as the \no probability $\pno(v) = \prod_{u \in \des(v) \cap \can}{(1 - \pr(u))}$, representing that none of vertices $u\in \des(v) \cap \can$ is a target. Moreover,  we  update the target probabilities with more questions asked. 

\stitle{Rules of updating $\can$ and $\yset$.} Following Def.~\ref{def.yset} and Lemmas~\ref{lemma.noprune} and \ref{lemma.yesprune}, we have the following rules for updating the potential targets $\can$ and \yescs $\yset$.

\begin{Rule}\label{lemma.MT.Yes}
For a vertex $v\in \can$ with $\reach(v) = \yes$, we update the potential targets $\can_{new} = \can \setminus (\anc(v)\setminus\{v\})$  and update the \yescs $\yset_{new}= \yset \cup \anc(v)$. 
\end{Rule}
\vspace{-0.2cm}
\begin{Rule}\label{lemma.MT.No}
For a vertex $v\in \can$ with $\reach(v) = \no$,  we update the potential targets $\can_{new} = \can \setminus \des(v)$ and keep the \yescs unchanged $\yset_{new}= \yset$. 
\end{Rule}

On the basis of Rules~\ref{lemma.MT.Yes} and \ref{lemma.MT.No}, we compute both $\gyes(v)$ and $\gno(v)$, which equals $\g(\yset, \can, k) - $ $\g(\yset_{new}, \can_{new}, k)$ by Def.~\ref{def.gain}.  However, as we know that $\g(\yset, \can, k) = \min_{\sset \subseteq \yset, |\sset| \leq k}$ ${\f(\sset, \can)}$, it is difficult to efficiently compute the penalty $\g(\yset, \can, k)$ with a straightforward enumeration of $\sset \subseteq \yset$ for $k> 1$. In the following sections, we mainly focus on developing efficient approaches to compute $\g(\yset, \can, k)$ and update $\yes$\&$\no$ gain scores $\gyes(v)$ and $\gno(v)$.

\subsection{kBM-DP Algorithm}

\begin{algorithm}[t]
  \small
  \caption{\MTDiv}
  \label{algo:multi}
  \begin{algorithmic}[1]
    \Require A hierarchy tree $\T=(V, E)$, a budget $b$, a number $k$.
      \Ensure Selections $\sset$ with $|\sset|\leq k$.
    \State Let  $\Syes \leftarrow \{r\}, \can \leftarrow V$;
    \State Assign the probability $\pr(v)=k/n$ for $v\in V$;
    \For {$i \leftarrow$ 1 to $b$}
      \For {$v \in \can \setminus \yset$}
        \State $\pno(v) \leftarrow \prod_{u \in \des(v) \cap \can}{(1 - \pr(u))}$; 
        \State $\pyes(v)  \leftarrow  1 - \pno(v)$;\
      \EndFor
      \State Calculate $\eg(v)$ for all $v\in \can \setminus \yset$ using Algorithm~\ref{algo:mtdiv};
      \State $q_i \leftarrow \arg\max_{v \in \can \setminus \yset} \eg(v)$;
      \State Ask the question $\reach(q_i)$;
      \If {$\reach(q_i) = \yes$}
        \State $\yset \leftarrow \yset \cup \anc(q_i)$ by Rule~\ref{lemma.MT.Yes};
        \State $\can_{new} \leftarrow \can \setminus (\anc(q_i) \setminus \{q_i\})$ by Rule~\ref{lemma.MT.Yes};
      \Else
        \State $\can_{new} \leftarrow \can \setminus \des(q_i)$ by Rule~\ref{lemma.MT.No};
      \EndIf
       \State Update $\pr(u) = 0$ for $u \in \can \setminus \can_{new}$;
       \State Update $\pr(u) = \pr(u) \cdot \frac{|\can|}{|\can_{new}|}$ for $u \in \can_{new}$;
      \State $\can \leftarrow \can_{new}$;

      \If {$\can \subseteq \yset$} \textbf{break};
      \EndIf
    \EndFor
    \State $\sset^* \leftarrow \arg \min_{\sset \subseteq \yset, |\sset| \leq k}{\f(\sset, \can)}$;
    \State \Return{$\sset^*$};
  \end{algorithmic}
\end{algorithm}

\begin{algorithm}[t]
  \small
  \caption{\MTDiv: Calculate Expected Gains} \label{algo:mtdiv}
  \begin{algorithmic}[1]
    \Require $\T=(V, E)$, $\pyes(.)$, $\pno(.)$, $\can$, $\yset$, and $k$.
    \Ensure $\eg(v)$ for all vertices $u \in \can \setminus \yset$.

    \State Calculate $\DP(u, w, k)$ for all vertices $u \in \can$ and $ w \in \anc(u)$ in Eq.~\ref{eq.dp}.
    \For {$v \in \can \setminus \yset$} 
      \State $\cyes \leftarrow \can \setminus (\anc(v) \setminus \{v\})$; \hspace{1 cm}// $\reach(v)=\yes$
      \State $\gyes(v) = \g(\yset, \can, k) - \calyes(v, k)$;
      \State $\cno \leftarrow \can \setminus \des(v)$;            \hspace{2 cm}// $\reach(v)=\no$
      \State $\gno(v) = \g(\yset, \can, k)  - \calno(v, k)$;
      \State $\eg(v) \leftarrow \gyes(v) \cdot \pyes(v) + \gno(v) \cdot \pno(v)$ by Def.~\ref{def.eg};
    \EndFor
    \Procedure {$\calyes$}{$u, k$}
    \For {$v \in \anc(u)$}
      \State Recalculate $\DP_{Y}(v, k)$ by Eq.~\ref{eq.dp.yes};
      \For {$w \in \anc(v) \setminus \{v\}$}
        \State Recalculate $\DP_{N}(v, w, k)$ by Eq.~\ref{eq.dp.no};
        \State $\DP(v, w, k) \leftarrow \min\{\DP_{N}(v, w, k), \DP_{Y}(v, k)\}$;
      \EndFor
    \EndFor
    \State $\g(\yyes, \cyes, k) \leftarrow \DP(r, r, k)$;
    \State \Return{$\g(\yyes, \cyes, k)$};
    \EndProcedure
    \Procedure {$\calno$}{$u, k$}
    \For {$v \in \anc(u)$}
      \For {$w \in \anc(v) \cap \yset \setminus \{v\}$}
        \State Recalculate $\DP_{N}(v, w, k)$ by Eq.~\ref{eq.dp.no};
      \EndFor
      \If{$v \in \yset$}
        \State Recalculate $\DP_{Y}(v, k)$ by Eq.~\ref{eq.dp.yes};
        \State $\DP(v, w, k) \leftarrow \min\{\DP_{N}(v, w, k), \DP_{Y}(v, k)\}$;
      \Else
        \State $\DP(v, w, k) \leftarrow \DP_{N}(v, w, k)$;
      \EndIf
    \EndFor
    \State $\g(\yno, \cno, k) \leftarrow \DP(r, r, k)$;
    \State \Return{$\g(\yno, \cno, k)$};
    \EndProcedure
  \end{algorithmic}
\end{algorithm}

In this section, we propose a \MTDiv algorithm for identifying multiple targets based on the \Frame framework in Algorithm~\ref{algo:framework}. 

\stitle{The kBM-DP algorithm}. The algorithm of \MTDiv is presented in Algorithm~\ref{algo:multi}, which implements the details of Algorithm~\ref{algo:framework}. The algorithm first initializes the \yescs $\yset$, the potential targets $\can$, and the independent probability  $\pr(v)=\frac{k}{n}$ for each vertex $v\in V$ (lines 1-2). Then, it iteratively selects one best vertex $v\in \can \setminus \yset$ with the largest $\eg(v)$ and asks question $\reach(v)$ until all $b$ questions have been asked (lines 3-18). At the $i$-th round of asking question, it updates the target probabilities for all vertices $ \can \setminus \yset$ (lines 4-6). The algorithm invokes Algorithm~\ref{algo:mtdiv} to calculate all expected gains (line 7).  Next, the algorithm asks question $\reach(q_i)$, and updates the vertex probabilities, $\can$ and $\yset$ accordingly by Rules~\ref{lemma.MT.Yes} and ~\ref{lemma.MT.No} (lines 10-16). Finally, the algorithm returns the selections  $\sset^* = \arg \min_{\sset \subseteq \yset, |\sset| \leq k}{\f(\sset, \can)}$ (line 19). 

In the following, we introduce a dynamic programming algorithm for calculating the expected gains in Algorithm~\ref{algo:mtdiv}. We first consider the computation of $\g(\yset, \can, k)$ and then $\yes$\&$\no$ gain scores.

\begin{figure}[t]
\vspace{-0.2cm}
\centering
{
\subfigure{
\includegraphics[width=0.4\linewidth]{./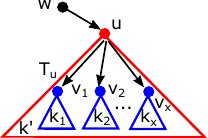} }
}
\vspace{-0.4cm}
\caption{A solution overview of dynamic programming.}
\label{fig.dp.overview}
\vspace{-0.8cm}
\end{figure}

\stitle{Computing $\g(\yset, \can, k)$}. 
An intuitive approach enumerates all $k$-sized selections $\sset\subseteq \yset$ for finding the best selections $S^*$, which is inefficient. Thus, we use a dynamic programming technique to calculate the minimum $\g(\yset, \can, k)$ efficiently. The general idea is to divide the global calculation into sub-problems of finding $k^\prime \leq k$ selection vertices optimally in a subtree $T_u$ rooted by a vertex $u$. The vertex $w \in\sset \cap \anc(u)$ is a selection closest to $u$. 
Note that if $\sset \cap \anc(u) = \emptyset$, we consider $w = r$. 
Obviously, let $u = r, w = r$ and $k^\prime = k$, this subproblem is the same as the best selection of $S$. Thus, our objective is to calculate it from the sub-problems. We consider two cases of whether we select vertex $u$ or not for each subtree $T_u$. On one hand, if $u \in \yset$ and we select vertex $u$ into the selections $\sset$, for each child node $v_1, v_2 ..., v_x \in \child(u) \cap \can$, the sub-problem is how to find additional $k_x$ optimal vertices in the subtrees rooted by $v_x$ with the closest selected vertex $u$ and $\sum{k_x} \leq k^\prime - 1$; On the other hand, if we do not select vertex $u$ into the answer $\sset$, for each child node $v_1, v_2 ..., v_x \in \child(u) \cap \can$, the sub-problem is how to find additional $k_x$ optimal vertices in $T_x$ with the closest selected vertex $w$ and $\sum{k_x} \leq k^\prime$. The optimal answer is the best solution among the above two answers.

\stitle{State and transfer equation.} We first define the state of $\DP_{Y}(u, k)$ and $\DP_{N}(u, w, k)$. In the subtree $T_u$, $\DP_{Y}(u, k)$ is the minimum value of $\f(\sset \cup \{u\}, \can \cap \des(u))$ with selected $k - 1$-size set $\sset \subseteq \des(u)$. Similarly, $\DP_{N}(u, w, k)$ is the minimum value of $\f(\sset \cup \{w\}, \can \cap \des(u))$ with selected $k$-size set $\sset \subseteq \des(u) \setminus \{u\}$ and $w \in \anc(u) \cap \yset$ is the closest selected vertex to $u$. On the basis of $\DP_{Y}(u, k)$ and $\DP_{N}(u, w, k)$, we define the state $\DP(u, w, k)$ as the optimal $k$-size selection in the subtree $T_u$ with closest selected vertex $w \in \anc(u) \cup \can$, which satisfies the equation as follows,
\begin{equation} \label{eq.dp}
\DP(u, w, k) = 
\left\{
\begin{aligned}
&\min\{\DP_{Y}(u, k), \DP_{N}(u, w, k)\}, \hspace{-0.6cm}  & u \in \yset, k \geq 1\\
&\DP_{N}(u, w, k), &u \notin \yset\ or\ k = 0 \\
\end{aligned}
\right.
\end{equation}

Next, we propose the transfer equation of $\DP_{Y}(u, k)$ and $\DP_{N}(u,$ $ w, k)$ as follows, 
\vspace{-0.1cm}
\begin{equation}\label{eq.dp.yes}
\begin{aligned}
\DP_{Y}(u, k) & = \min\{\sum_{x \in \child(u) \cap \can} \DP(x, u, k_x)\}\\ 
&\text{ subject to }  \sum_{x\in \child(u) \cap \can} k_x= k-1.\\
\end{aligned}
\end{equation}
\vspace{-0.3cm}
\begin{equation}\label{eq.dp.no}
\begin{aligned}
\DP_{N}(u, w, k) & = \dist \langle w, u \rangle + \min\{\sum_{x \in \child(u) \cap \can} \DP(x, w, k_x)\}\\ 
&\text{ subject to }  \sum_{x\in \child(u) \cap \can} k_x= k.\\
\end{aligned}
\end{equation}
\vspace{-0.3cm}

Furthermore, we can use the Knapsack dynamic programming technique~\cite{pisinger1995algorithms} to tackle the transfer Eqs.~\ref{eq.dp.yes} and \ref{eq.dp.no}. Assume that a number $k$ represents the total capacity. Given a set of vertices $\child(u) \cap \can = \{x_1, ..., x_l\}$, for each vertex $x_i$ where $1\leq i\leq l$, $\DP(x_i, w, k_{x_i})$ represents an item value and the item volume is $k_{x_i}\leq k$. We assume that $F(i, k')$ is the state that has the minimum value of the first $i$ items with a total of $k'$ capacity. The equation of state transformation is shown as follows.
$$F(i, k') = \min_{0 \leq j \leq k'}\{F(i - 1, k' - j) + \DP(x_i, w, j)\}.$$
For initialization, we set $F(i, j) = +\infty$ for $1\leq i\leq l, 0\leq j\leq k$ and $F(0, 0) = 0$. The return value is $F(l, k)= \min \{\sum_{x \in \child(u) \cap \can}$ $\DP(x, w, k_x)\}$ with the constraint $\sum_{x\in \child(u) \cap \can} k_x= k$.

\begin{table}[t]
\vspace{-0.2cm}
\caption{The value of $\gyes$, $\gno$, and $\eg$. Here, $b = 2$, $k = 2$ and $\{v_5, v_8\}$ are the targets.}
\vspace{-0.4cm}
\centering
\scalebox{0.75}{
\begin{tabular}{c|ccccccccc}
\toprule
Node& $v_{1}$& $v_{2}$& $v_{3}$& $v_{4}$& $v_{5}$ &$v_{6}$ &$v_{7}$& $v_8$& $v_9$\\
\midrule
$\gyes$& 8& 1& 12& 9& 10& 12& 12& 12& 11\\
$\gno$& 19& 1& 11& 2& 5& 3& 3& 3& 3\\
$\eg^1$& 9.85& 1& \textbf{11.59}& 3.4& 6.8& 4.8& 4.8& 4.8& 4.6\\
\hline
$\gyes$& /& 0& /& 0& 1& 0& 0& 0& 2\\
$\gno$& /& 1& /& 1& 3& 1& 1& 1& 2\\
$\eg^2$& /& 0.75& /& 0.75& \textbf{2.12}& 0.75& 0.75& 0.75& 2\\
\bottomrule
\end{tabular}
}
\label{table:g2}
\vspace{-0.5cm}
\end{table}

\stitle{Update $\gyes(u)$ and $\gno(u)$.} For a vertex $u$ that is questioned, 
only the state of $\anc(u)$ needs to be recalculated to update $\gyes(u)$ and $\gno(u)$. Algorithm~\ref{algo:mtdiv} presents the details of $\gyes(u)$ and $\gno(u)$ calculation. First, the algorithm calculates $\DP(u, w, k)$ for all possible states (line 1). Then, for each vertex $u \in \can \setminus \yset$, it updates the corresponding states and calculates $\gyes(u)$ and $\gno(u)$ (lines 2-7). For computing $\gyes(u)$, it needs to update the states $\DP(v, w, k)$ for all $v \in \anc(u)$, $w \in \anc(v)$ (lines 8-15). To computing $\gno(u)$, it updates the states $\DP(v, w, k)$ for all $v \in \anc(u)$, $w \in \anc(v) \cap \yset$ (lines 16-26).

\begin{example}
Assume the targets are $\tar = \{v_5, v_8\}$, the budget $b = 2$, and $k = 2$ in Figure~\ref{example.prob}. Table~\ref{table:g2} shows the gains of each vertex. \revision{In the first question, the algorithm questions $v_3$ and gets the $\yes$ answer.} Then, the vertices $v_0, v_1$ will be pruned from the potential targets. The vertex $v_5$ gets the maximum gains in the second round and obtains the $\yes$ answer. Note that some leaf vertices get $\gyes(v) = 0$ because their parents are a better selection even if they get the $\yes$ answer. The selections will be $\sset = \{v_3, v_5\}$ and the penalty will be $\f(\sset, \tar) = 1$.
\end{example}

\stitle{Complexity analysis.} The calculation of all states takes $O(nhk^2)$ time. The update of each vertex takes $O(h^2dk^2)$ times using Algorithm~\ref{algo:mtdiv}. Overall, the \MTDiv in Algorithm~\ref{algo:multi} takes $O(bnh^2dk^2)$ time in $O(nhk)$ space for generating $b$ questions.


\subsection{Fast algorithms: kBM-Topk and kBM-DP+}

In this section, we propose two fast algorithms of \MTTopk and \MTfast. The first method \MTTopk uses an alternative penalty function to improve the calculation of expected gain. The second method \MTfast develops an upper bound of $\eg(v)$ to prune unnecessary vertices for updating the expected gains, which achieves the same results as \MTDiv in most cases.

\subsubsection{kBM-Topk}
The penalty of $\g(\yset, \can, k)$ is complex to compute, due to the dependence relationship of selections in $\sset$. To deal with this issue, we propose an independent function to approximate $\g(\yset, \can, k)$, which can be efficiently computed. We begin with a new definition of selected gain as follows. 
$$\rdc(x) = \sum_{v \in \can \cap \des(x)}{dist \langle r, x \rangle}.$$



The selected gain represents the reduced penalty after selecting $x$. The deeper the selected vertex and the larger the size of descendants, the higher the selected gain. Thus, the general idea of the approximate method is to select the top-$k$ vertices that maximize the selected gain. On this basis, we propose the new potential penalty function as follows,
$$\g^\prime(\yset, \can, k) = \f(\{r\}, \can) - \max_{\sset \subseteq \yset, |\sset| \leq k}{\sum_{x \in \sset}{\rdc(x)}}.$$


\stitle{kBM-Topk algorithm.} Similar to \MTDiv, \MTTopk uses the same framework and the difference is the calculation of expected gains. Algorithm~\ref{algo:mttopk} shows the details of the calculation (replacing the line 7 of Algorithm~\ref{algo:multi}). First, the algorithm calculates $\red(v)$ for all the vertices $v \in \can \cup \yset$ and uses max-heap $H$ to maintain the top-$k$ vertices in \yescs $\yset$ (lines 1-2). Then, for each vertex $u \in \can \setminus \yset$, it updates the max-heap $H$ and calculates $\gyes(u)$ and $\gno(u)$ (lines 3-8). For the $\gyes(u)$ calculation, $\red(v)$ of vertex $v \in \anc(u)$ should be updated (lines 10-11). If the vertex $v$ has been added into the heap $H$, $\red(v)$ will be replaced by $\red_{new}(v)$ (lines 12-13). Otherwise, it will be added into the heap $H$ because $v \in \yset_{new}$ (lines 14-15). Finally, the algorithm updates the $\f(\{r\}, \can)$ and calculates $\gyes(u)$ (lines 16-17). The calculation of $\gno(u)$ is similar to that of $\gyes(u)$ (lines 18-23). The main difference is that the \yescs $\yset$ will not be changed and only the vertices $v \in \anc(u) \cap \yset$ should be updated (lines 19-21).

\begin{algorithm}[t]
  \small
  \caption{\MTTopk: Calculate Expected Gains}
  \label{algo:mttopk}
  \begin{algorithmic}[1]
    \Require $\T=(V, E)$, $\pyes(.)$, $\pno(.)$, $\can$, $\yset$, and $k$.
    \Ensure $\eg(v)$ for all vertices $u \in \can \setminus \yset$.
    
    \State Calculate $\red(u)$ for all vertices $u \in \can \cup \yset$.
    \State Let $H$ be a max-heap to maintain $\red(u)$ for all vertices $u \in \yset$.
    \For {$v \in \can \setminus \yset$} 
      \State $\cyes \leftarrow \can \setminus (\anc(v) \setminus \{v\})$; \hspace{1 cm}// $\reach(v)=\yes$
      \State $\gyes(v) = \g(\yset, \can, k) - \calyes(v, k, H)$;
      \State $\cno \leftarrow \can \setminus \des(v)$;            \hspace{2 cm}// $\reach(v)=\no$
      \State $\gno(v) = \g(\yset, \can, k)  - \calno(v, k, H)$;
      \State $\eg(v) \leftarrow \gyes(v) \cdot \pyes(v) + \gno(v) \cdot \pno(v)$ by Def.~\ref{def.eg};
    \EndFor
    \Procedure {$\calyes$}{$u, k, H$}
    \For {$v \in \anc(u)$}
      \State $\red_{new}(v) = \red(v) - |\can \setminus \cyes \cap \des(v)| \cdot \dist \langle r, v \rangle$;
      \If{$v \in Y$}
        \State $H(v, \red(v)) \leftarrow (v, \red_{new}(v))$;
      \Else
        \State $H.push(v, \red_{new}(v))$;
      \EndIf
    \EndFor
    \State $\f(\{r\}, \cyes) = \f(\{r\}, \can) - \sum_{v \in \can \setminus \cyes}{\dist \langle r, v \rangle}$;
    \State \Return{$\f(\{r\}, \cyes) - \max_{\sset \subseteq H, |\sset| \leq k}{\sum_{x \in S}{\red_{new}(x)}}$};
    \EndProcedure
    \Procedure {$\calno$}{$u, k, H$}
    \For {$v \in \anc(u) \cap Y$}
      \State $\red_{new}(v) = \red(v) - |\can \setminus \cno| \cdot \dist \langle r, v \rangle$;
      \State $H(v, \rdc(\{v\}, \can)) \leftarrow (v, \rdc_{new}(\{v\}, \can))$;
    \EndFor
    \State $\f(\{r\}, \cno) = \f(\{r\}, \can) - \sum_{v \in \can \setminus \cno}{\dist \langle r, v \rangle}$;
    \State \Return{$\f(\{r\}, \cno) - \max_{\sset \subseteq H, |\sset| \leq k}{\sum_{x \in \sset}{\red_{new}(x)}}$};
    \EndProcedure
  \end{algorithmic}
\end{algorithm}



\stitle{\revision{Quality and complexity analysis.}} \revision{We analyze the quality of our approximate penalty function $\g^\prime(\yset, \can, k)$ as follows.}

\begin{theorem}
\label{theorem.topk}
\revision{
It holds that 
$\kw{LB}_{k}\leq \g^\prime(\yset, \can, k) \leq\kw{UB}_{k}$, where $\kw{LB}_{k}= \g(\yset, \can, k) - (k - 1) \cdot (\f(\{r\}, \can) - \g(\yset, \can, k)) $ and $\kw{UB}_{k}= \g(\yset, \can, k) $. 
If $k=1$,  $\g^\prime(\yset, \can, 1) = \g(\yset, \can, 1)$. 
}
\end{theorem}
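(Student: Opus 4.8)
The plan is to view both $\g$ and $\g'$ through a common lens of \emph{reduction relative to the root-only selection} $\{r\}$. Using $\f(\{r\},\can)=\sum_{t\in\can}\dist\langle r,t\rangle$ together with Definition~\ref{def.penalty}, I would first record that for any $\sset\subseteq\yset$ the true reduction equals $\f(\{r\},\can)-\f(\sset,\can)=\sum_{t\in\can}\dist\langle r, v^*_{\mathcal{S}}(t)\rangle$, where $v^*_{\mathcal{S}}(t)$ denotes the deepest vertex of $\sset\cap\anc(t)$ (and $v^*_{\mathcal{S}}(t)=r$, contributing $0$, when that set is empty); this holds because the closest selection to $t$ lies on the $r$-to-$t$ path, so $\dist\langle r,t\rangle-\dist\langle v^*_{\mathcal{S}}(t),t\rangle=\dist\langle r,v^*_{\mathcal{S}}(t)\rangle$. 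Writing $R(\sset)$ for this true reduction and $R'(\sset)=\sum_{x\in\sset}\rdc(x)$ for the surrogate reduction, the two penalties become $\g(\yset,\can,k)=\f(\{r\},\can)-\max_{\sset}R(\sset)$ and $\g'(\yset,\can,k)=\f(\{r\},\can)-\max_{\sset}R'(\sset)$, where both maxima range over $\sset\subseteq\yset$ with $|\sset|\le k$.

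The crux is a per-target comparison of $R$ and $R'$. Expanding $\rdc(x)=|\can\cap\des(x)|\cdot\dist\langle r,x\rangle$ and swapping the order of summation gives $R'(\sset)=\sum_{t\in\can}\sum_{x\in\sset\cap\anc(t)}\dist\langle r,x\rangle$, i.e.\ for each target $t$ the surrogate sums $\dist\langle r,x\rangle$ over \emph{all} selected ancestors of $t$, whereas $R(\sset)$ keeps only the deepest one. Since every summand is nonnegative, and since $|\sset\cap\anc(t)|\le|\sset|\le k$ with $\dist\langle r,x\rangle\le\dist\langle r,v^*_{\mathcal{S}}(t)\rangle$ for each selected ancestor $x$ of $t$, I obtain the two-sided bound $R(\sset)\le R'(\sset)\le k\,R(\sset)$ for every feasible $\sset$. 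This single inequality is the engine of the whole theorem.

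From here both bounds follow quickly. Taking maxima in $R(\sset)\le R'(\sset)$ yields $\max_{\sset}R'(\sset)\ge\max_{\sset}R(\sset)$, hence $\g'\le\g=\kw{UB}_{k}$. For the lower bound, let $\sset'$ attain $\max_{\sset}R'(\sset)$; then $\max_{\sset}R'(\sset)=R'(\sset')\le k\,R(\sset')\le k\max_{\sset}R(\sset)$. Writing $M=\max_{\sset}R(\sset)=\f(\{r\},\can)-\g(\yset,\can,k)$, a short rearrangement shows $\kw{LB}_{k}=\g(\yset,\can,k)-(k-1)M=\f(\{r\},\can)-kM$, so $\max_{\sset}R'(\sset)\le kM$ is exactly $\g'\ge\kw{LB}_{k}$. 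The case $k=1$ is immediate: when $|\sset|\le 1$ each target has at most one selected ancestor, so $R'(\sset)=R(\sset)$ identically, forcing $\g'(\yset,\can,1)=\g(\yset,\can,1)$.

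The main obstacle I anticipate is establishing the reduction identity cleanly and justifying the summation swap with the correct index sets ($t\in\can\cap\des(x)$ versus $x\in\sset\cap\anc(t)$); once the surrogate is rewritten as a sum over targets of their selected ancestors, the over-counting bound by a factor of $k$ and the resulting bracketing of $\g'$ are routine. A secondary point to handle carefully is the $r$-versus-$\sset$ bookkeeping in the pairwise penalty so that empty ancestor sets contribute $0$ consistently in both $R$ and $R'$.
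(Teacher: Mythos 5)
Your proposal is correct, and its skeleton matches the paper's: both arguments measure selections by their reduction relative to the root-only penalty $\f(\{r\},\can)$, and your upper-bound step ($R(\sset)\le R'(\sset)$ pointwise, hence $\max R'\ge\max R$ and $\g^\prime\le\g$) is essentially identical to the paper's observation that $\f(\sset^*,\can)\ge\f(\{r\},\can)-\sum_{x\in\sset^*}\rdc(x)$. Where you genuinely diverge is the lower bound. The paper never compares $R'(\sset)$ to $R(\sset)$ for the \emph{same} set; instead it uses the crude global bound $\sum_{x\in\sset}\rdc(x)\le k\cdot\max_{x\in\yset}\rdc(x)$ together with the identity $\f(\{r\},\can)-\rdc(x)=\f(\{x\},\can)\ge\g(\yset,\can,k)$, and then rearranges. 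You prove the sharper pointwise inequality $R'(\sset)\le k\,R(\sset)$ by expanding $\rdc$ and swapping the order of summation, so that for each target the surrogate counts all selected ancestors while the true reduction keeps only the deepest one; both routes land at $\max R'\le k\,(\f(\{r\},\can)-\g(\yset,\can,k))$ and hence at the same $\kw{LB}_k$. Your version buys a single two-sided lemma $R\le R'\le kR$ from which both bounds and the $k=1$ identity ($R'\equiv R$ when each target has at most one selected ancestor) fall out uniformly, and it makes the source of the factor $k$ transparent (per-target over-counting); the paper's version is shorter, needing only "a sum of at most $k$ terms is at most $k$ times the largest term," and it obtains the $k=1$ case for free from $\kw{LB}_1=\kw{UB}_1$. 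The only point to keep tidy in your write-up is the one you already flag: the convention $v^*_{\sset}(t)=r$ with contribution $0$ when $\sset\cap\anc(t)=\emptyset$, which is consistent with the $\sset\cup\{r\}$ in Definition~\ref{def.penalty}.
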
 

\begin{proof}
\revision{First, we prove $\g^\prime(\yset, \can, k) \leq \kw{UB}_{k}$. Assume that $\sset^*$ is the best selection in terms of potential penalty, i.e., $\g(\yset, \can, k) = \f(\sset^*, \can)$. Then $\g(\yset, \can, k) = \f(\sset^*, \can) \geq \f(\{r\}, \can) - \sum_{x \in \sset^*}{\rdc(x)} \geq \g^\prime(\yset, \can, k).$ Thus, $\g^\prime(\yset, \can, k) \leq \g(\yset, \can, k)=\kw{UB}_{k}$. } \revision{Second, we prove $\g^\prime(\yset, \can, k) \geq \kw{LB}_{k}$. We have $\g^\prime(\yset, \can, k) \geq \f(\{r\}, \can) - k \cdot \max_{x \in \yset}{\rdc(x)} = k \cdot (\f(\{r\}, \can) - \max_{x \in \yset}{\rdc(x)}) - (k - 1) \cdot \f(\{r\}, \can) \geq k \cdot \g(\yset, \can, k) - (k - 1) \cdot \f(\{r\}, \can) = \g(\yset, \can, k) - (k - 1) \cdot (\f(\{r\}, \can) - \g(\yset, \can, k))$$=\kw{LB}_{k}$. }
\revision{Overall, $\kw{LB}_{k}\leq \g^\prime(\yset, \can, k) \leq\kw{UB}_{k}$. For $k=1$, we have $\kw{UB}_{k}=\g(\yset, \can, k) =\kw{LB}_{k}$. Thus,  $\g^\prime(\yset, \can, 1) = \g(\yset, \can, 1)$ holds.}
\end{proof}



\revision{Next, we analyze the complexity of  \MTTopk.} For each vertex $u$ in $\can \setminus \yset$, at most $|\anc(u)|$ vertices are calculated and each update in heap $H$ takes $O(\log{|\yset|})$ time. Thus, the score calculation of each vertex $u$ takes $O(h \log{n})$. Overall, \MTTopk takes $O(bnh\log{n})$ time using $O(n)$ space for generating $b$ questions.

\subsubsection{kBM-DP+} 
\label{sec.dp+}
\begin{algorithm}[t]
  \small
  \caption{\MTfast: Calculate Expected Gains and Identify $q_i$} \label{algo:mtdiv+}
  \begin{algorithmic}[1]
    \Require $\T=(V, E)$, $\pyes(.)$, $\pno(.)$, $\can$, $\yset$, and $k$.
    \Ensure Question vertex $q_i$.
    \State $\eg_{max} \leftarrow 0$;
    \State $\overline{\eg^{i}}(v) \leftarrow \UB\gyes^i(v) \cdot \pyes(v) + \UB\gno^i(v) \cdot \pno(v)$ for all vertices $v \in \can \setminus \yset$, where $\UB\gyes^i(v) = \gyes^{(i-1)}(v)$ and $\UB\gno^i(v) = \gno^{(i-1)}(v)$.
    \State Sort all vertices $v\in \can \setminus \yset$ in the descending order of $\overline{\eg^{i}}(v)$;
    \For {$v \in \can \setminus \yset$} 
      \If{$\eg_{max} > \overline{\eg^{i}}(v)$} \Return{$q_i$};
      \EndIf
      \State $\gyes(v) = \g(\yset, \can, k) - \calyes(v, k)$;
      \State $\gno(v) = \g(\yset, \can, k)  - \calno(v, k)$;
      \State $\eg(v) \leftarrow \gyes(v) \cdot \pyes(v) + \gno(v) \cdot \pno(v)$;
      \If{$\eg_{max} < \eg(v)$} 
        \State $\eg_{max} \leftarrow \eg(v)$; $q_i \leftarrow v$;
      \EndIf
    \EndFor
    \State \Return{$q_i$};
  \end{algorithmic}
\end{algorithm}

In this section, we propose a pruning optimization to accelerate the algorithm \MTDiv. 
The general idea is to design an upper bound of expected gain and skip the update of $\eg(v)$ for those vertices that are disqualified for achieving the largest  $\eg(v)$ in $\can \setminus \yset$. In this way, we can prune lots of vertices in most cases at each round of question asking, and quickly identify a vertex $q_i$ with the largest gain.  

\begin{table*}[t!]
\vspace{-0.3cm}
\caption{\revision{The averaged penalty of HGS, IGS, BinG and STBIS in SingleTarget problem on all datasets varied by budget $b$.}}
\vspace{-0.4cm}
\centering
\scalebox{0.9}{
\begin{tabular}{|c|cccc|cccc|cccc|cccc|}
\toprule
$\ $& \multicolumn{4}{|c|}{\revision{Image-COCO}}& \multicolumn{4}{|c|}{ImageNet}& \multicolumn{4}{|c|}{Yago3-I}& \multicolumn{4}{|c|}{Yago3-II} \\
\midrule
Budget $b$
& HGS& IGS& BinG& STBIS& HGS& IGS& BinG& STBIS& HGS& IGS& BinG& STBIS& HGS& IGS& BinG& STBIS\\
\hline
5& 1.48& 1.42& 1.21& \textbf{1.20}& 4.59& 4.50& 3.98& \textbf{3.89}& 2.04& 2.03& \textbf{1.60}& 1.61& 2.61& 2.60& 2.30& \textbf{2.27}\\
10& 1.34& 1.15& 0.79& \textbf{0.78}& 4.50& 3.87& 3.10& \textbf{3.04}& 1.92& 1.63& 1.15& \textbf{1.05}& 2.48& 2.20& 1.69& \textbf{1.53}\\
20& 1.20& 0.77& \textbf{0.47}& \textbf{0.47}& 4.34& 3.02& 1.80& \textbf{1.79}& 1.83& 1.18& 0.71& \textbf{0.65}& 2.30& 1.61& 1.12& \textbf{1.05}\\
50& 0.98& 0.41& \textbf{0.18}& \textbf{0.18}& 4.21& 1.41& \textbf{0.67}& \textbf{0.67}& 1.71& 0.82& 0.41& \textbf{0.39}& 2.14& 1.04& 0.73& \textbf{0.67}\\
100& 0.52& 0.19& \textbf{0.00}& \textbf{0.00}& 3.79& 0.68& \textbf{0.28}& \textbf{0.28}& 1.66& 0.71& 0.30& \textbf{0.27}& 2.07& 0.90& 0.56& \textbf{0.54}\\
\bottomrule
\end{tabular}
}
\label{table.exp1}
\end{table*}

\begin{figure*}[t!]
\vspace{-0.4cm}
\centering
{ 
\subfigure[\revision{Image-COCO}]{
\label{fig.exp2_1}
\includegraphics[width=0.225\linewidth]{./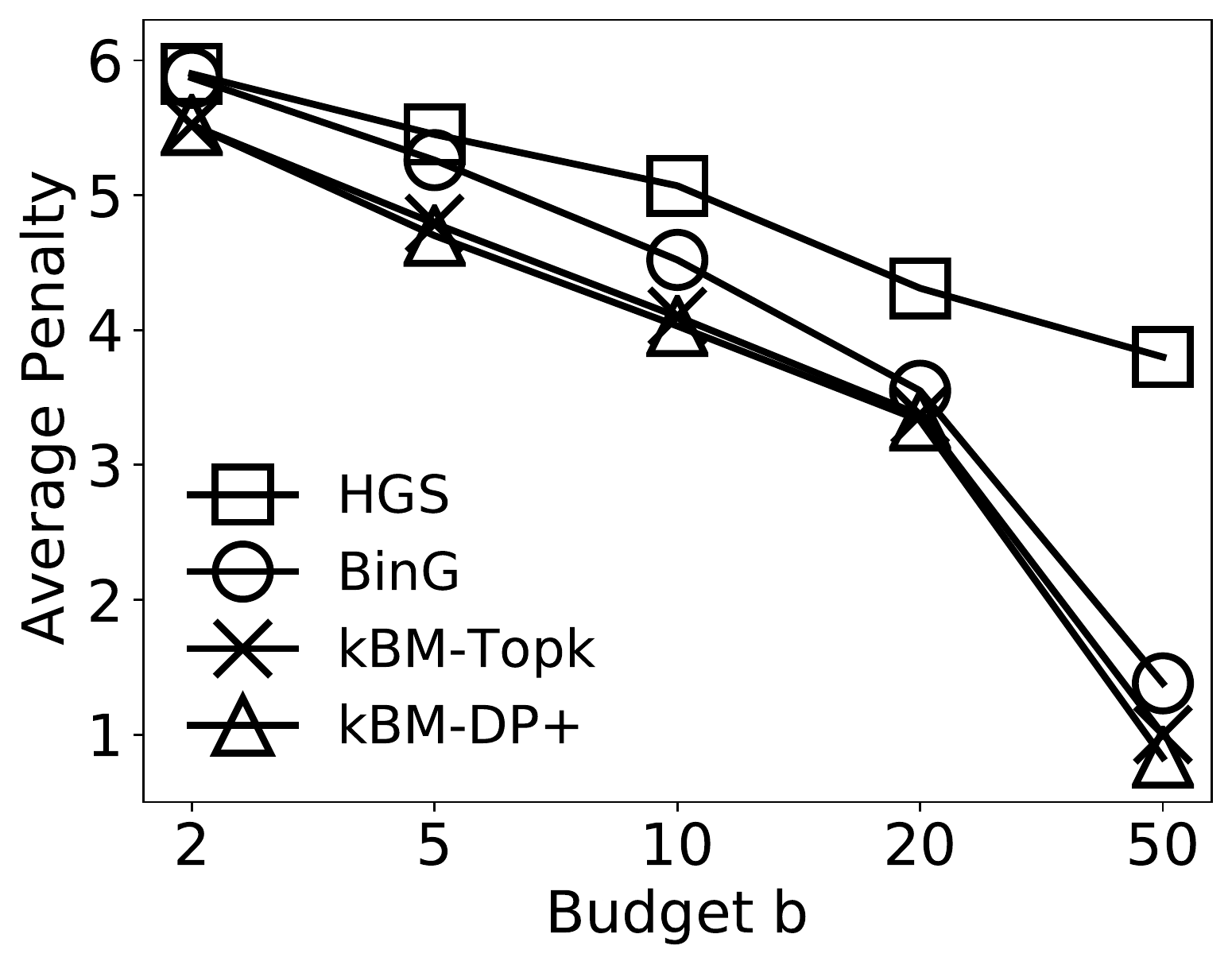} }
\subfigure[ImageNet]{
\label{fig.exp2_2}
\includegraphics[width=0.235\linewidth]{./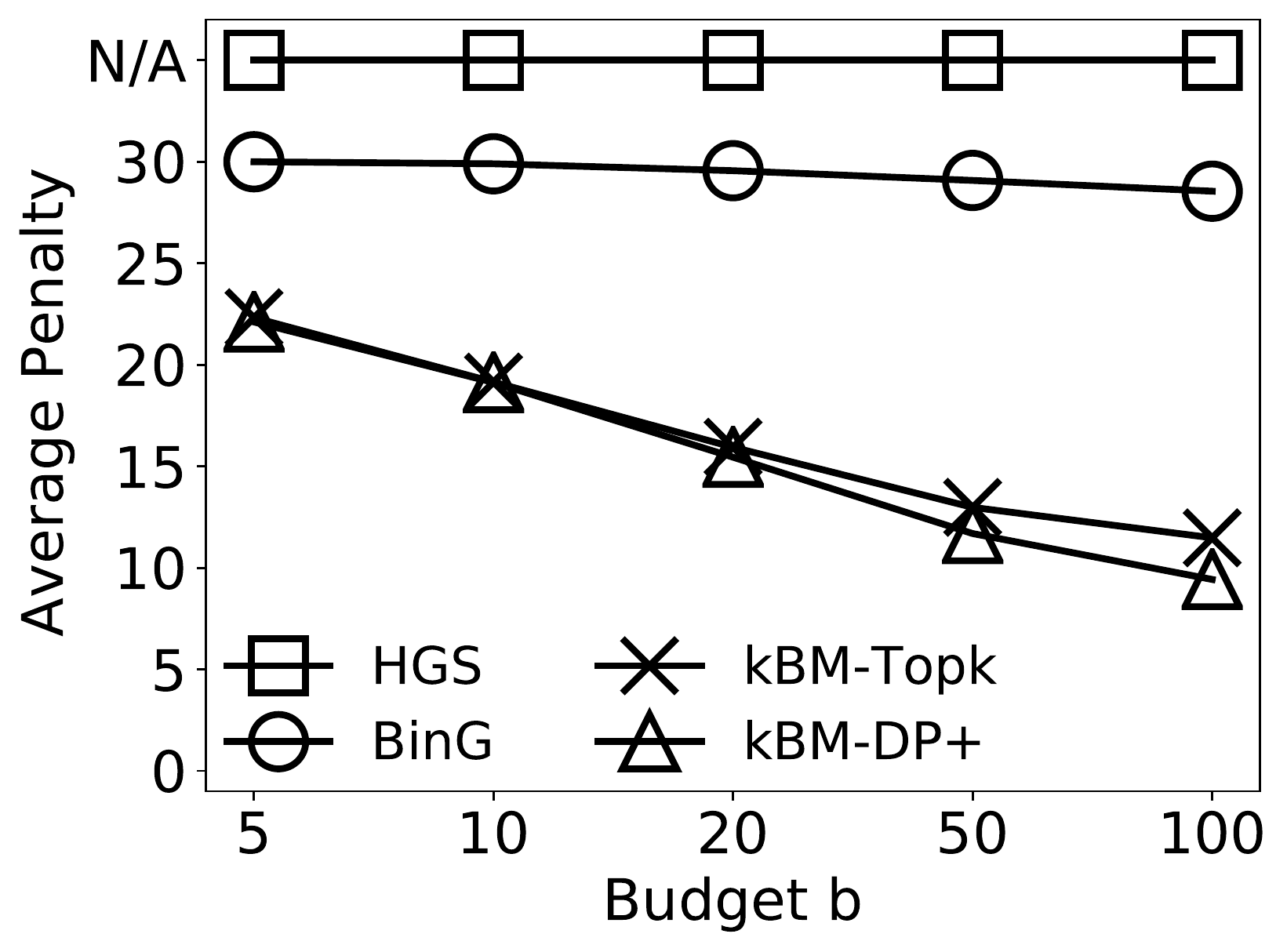} }
\subfigure[Yago3-I]{
\label{fig.exp2_3}
\includegraphics[width=0.235\linewidth]{./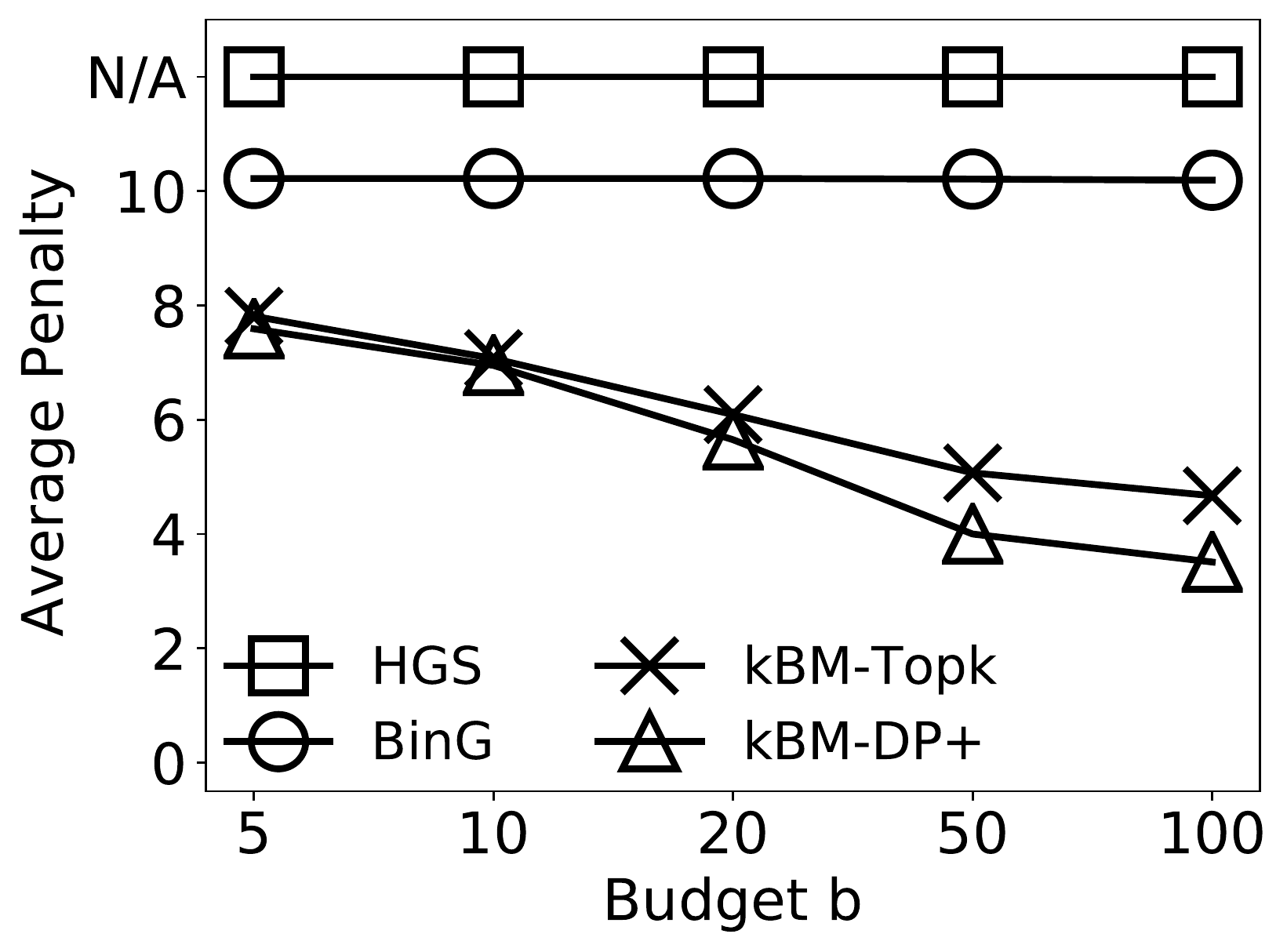} }
\subfigure[Yago3-II]{
\label{fig.exp2_4}
\includegraphics[width=0.235\linewidth]{./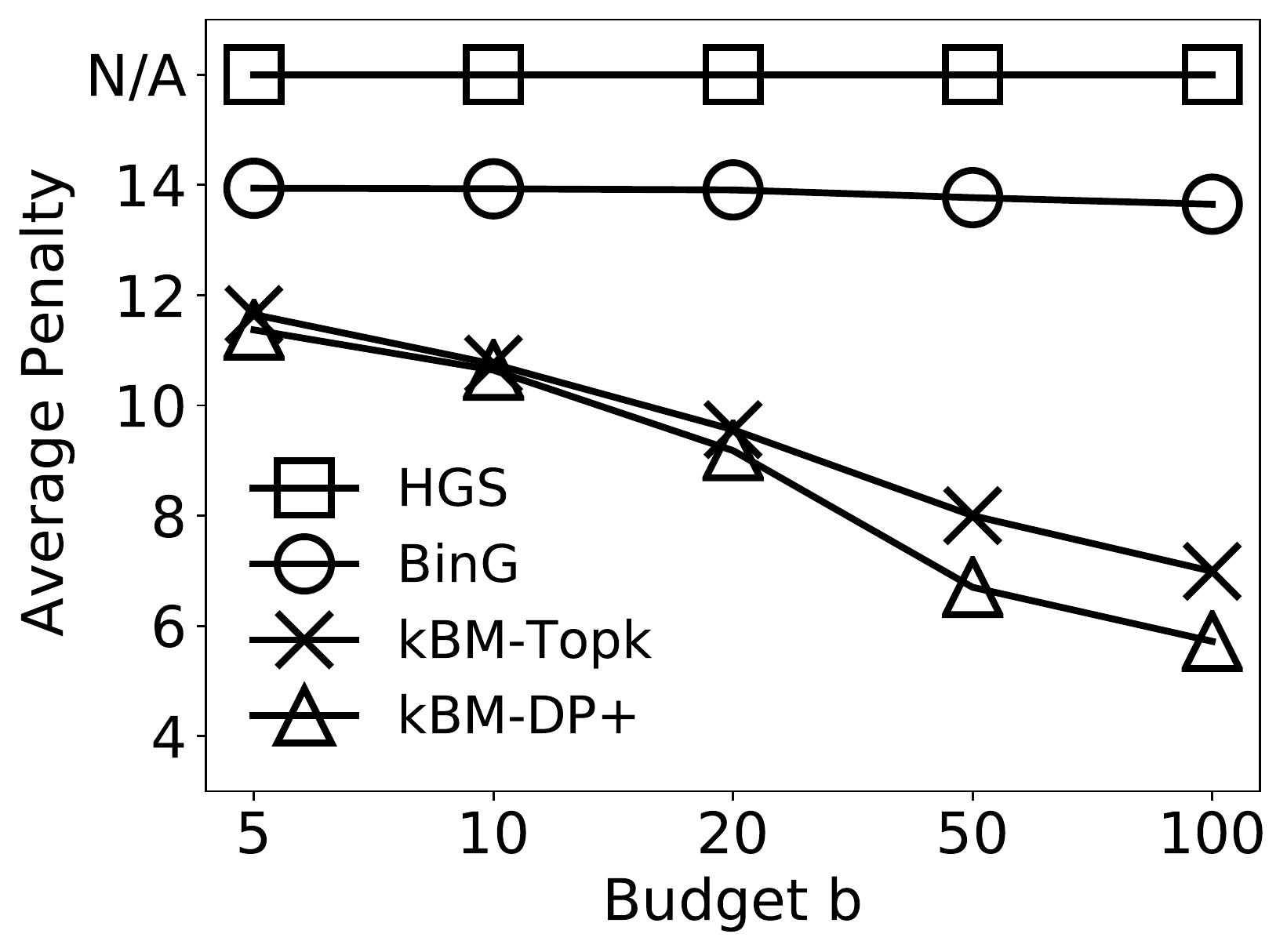} }
\vspace{-0.3cm}
\quad
\subfigure[\revision{Image-COCO}]{
\label{fig.exp2_5}
\includegraphics[width=0.225\linewidth]{./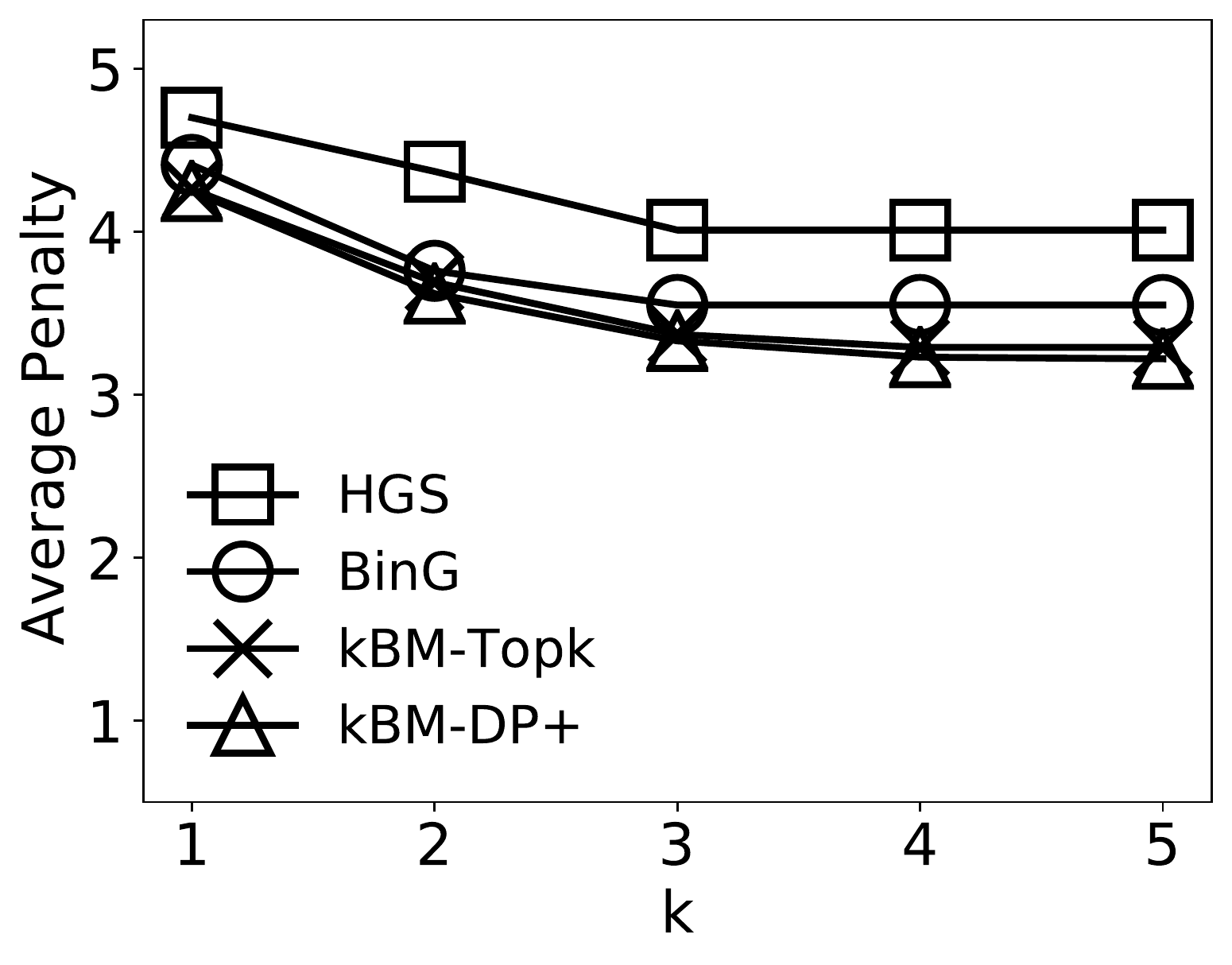} }
\subfigure[ImageNet]{
\label{fig.exp2_6}
\includegraphics[width=0.235\linewidth]{./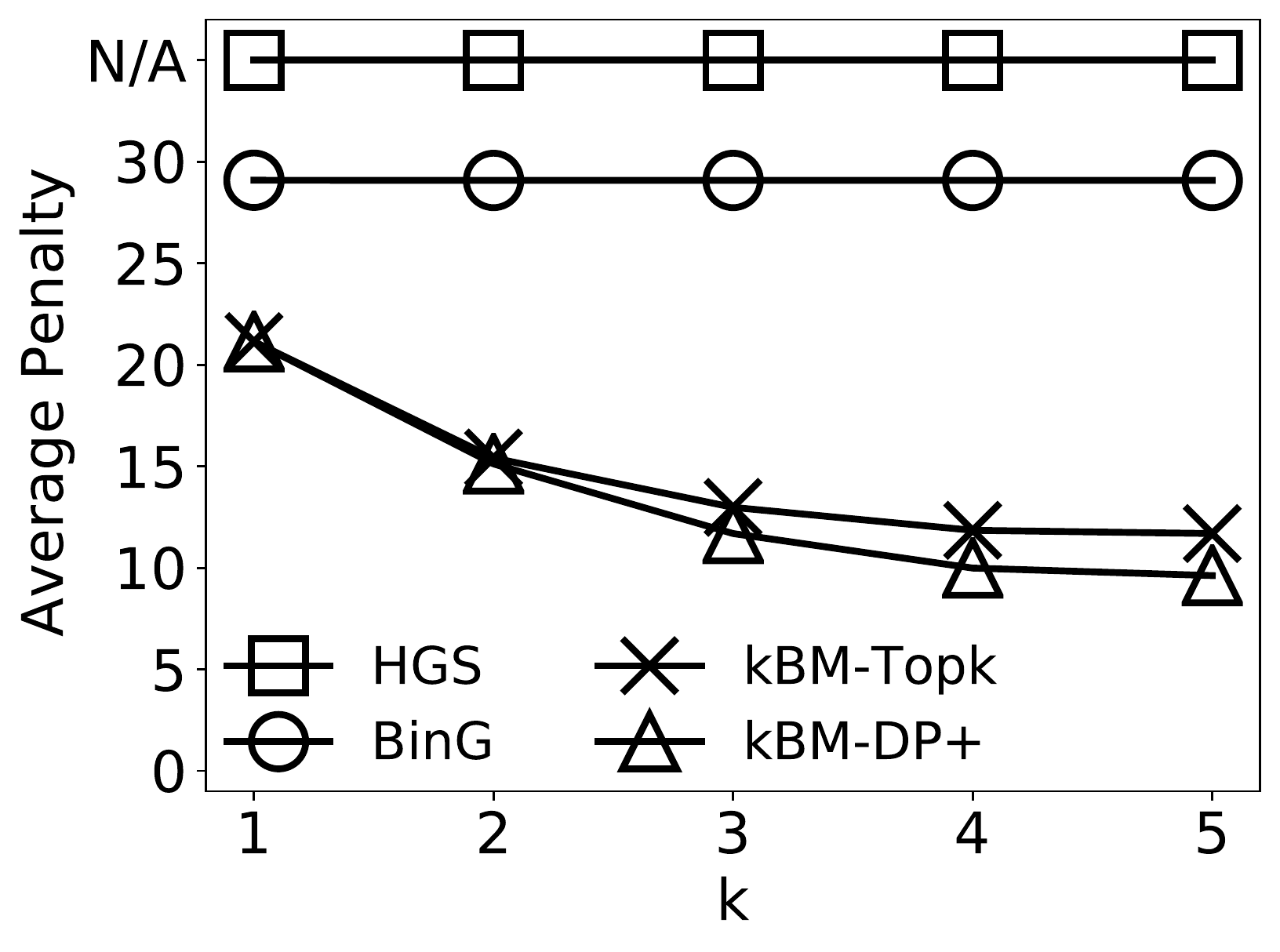} }
\subfigure[Yago3-I]{
\label{fig.exp2_7}
\includegraphics[width=0.235\linewidth]{./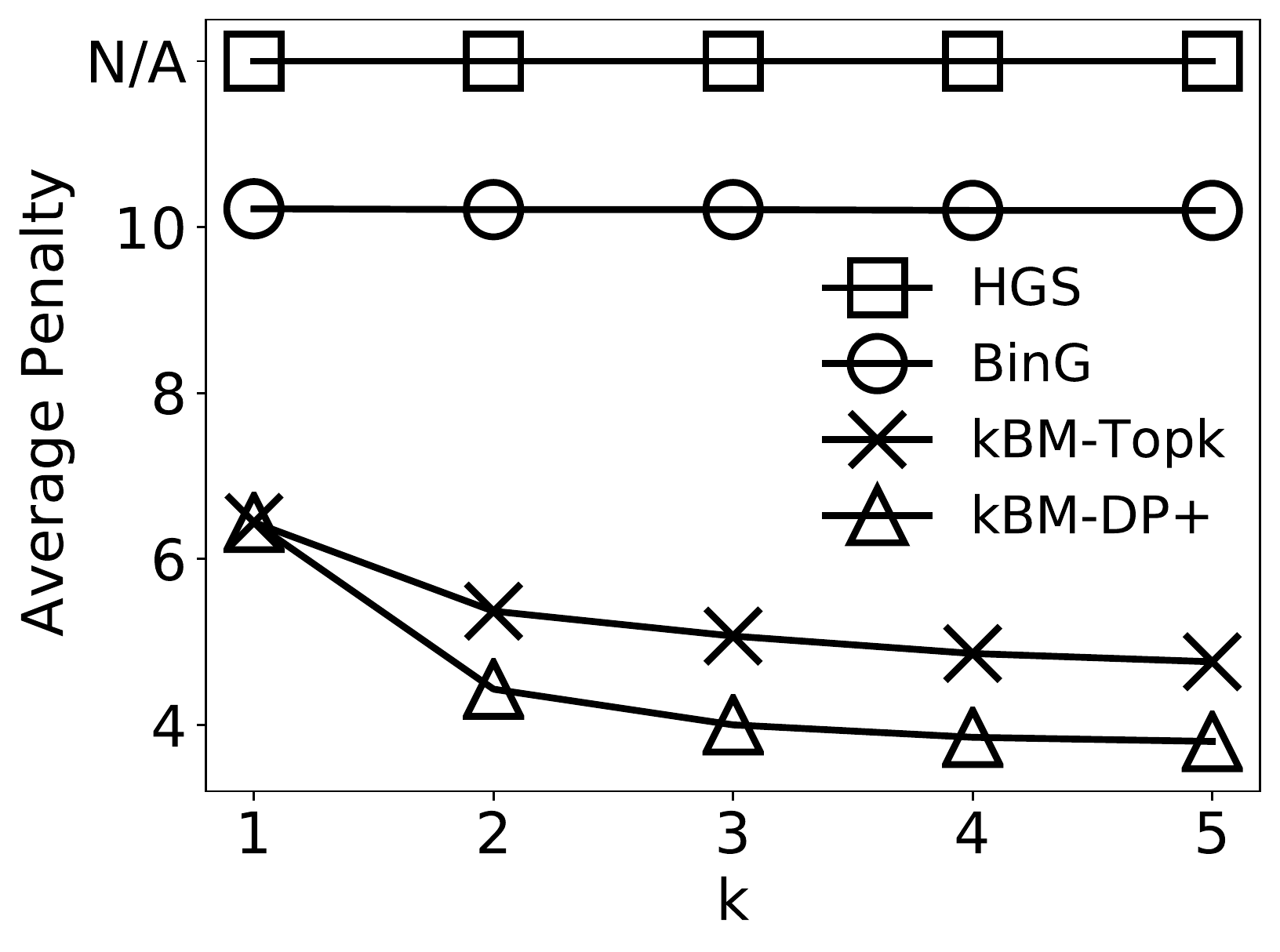} }
\subfigure[Yago3-II]{
\label{fig.exp2_8}
\includegraphics[width=0.235\linewidth]{./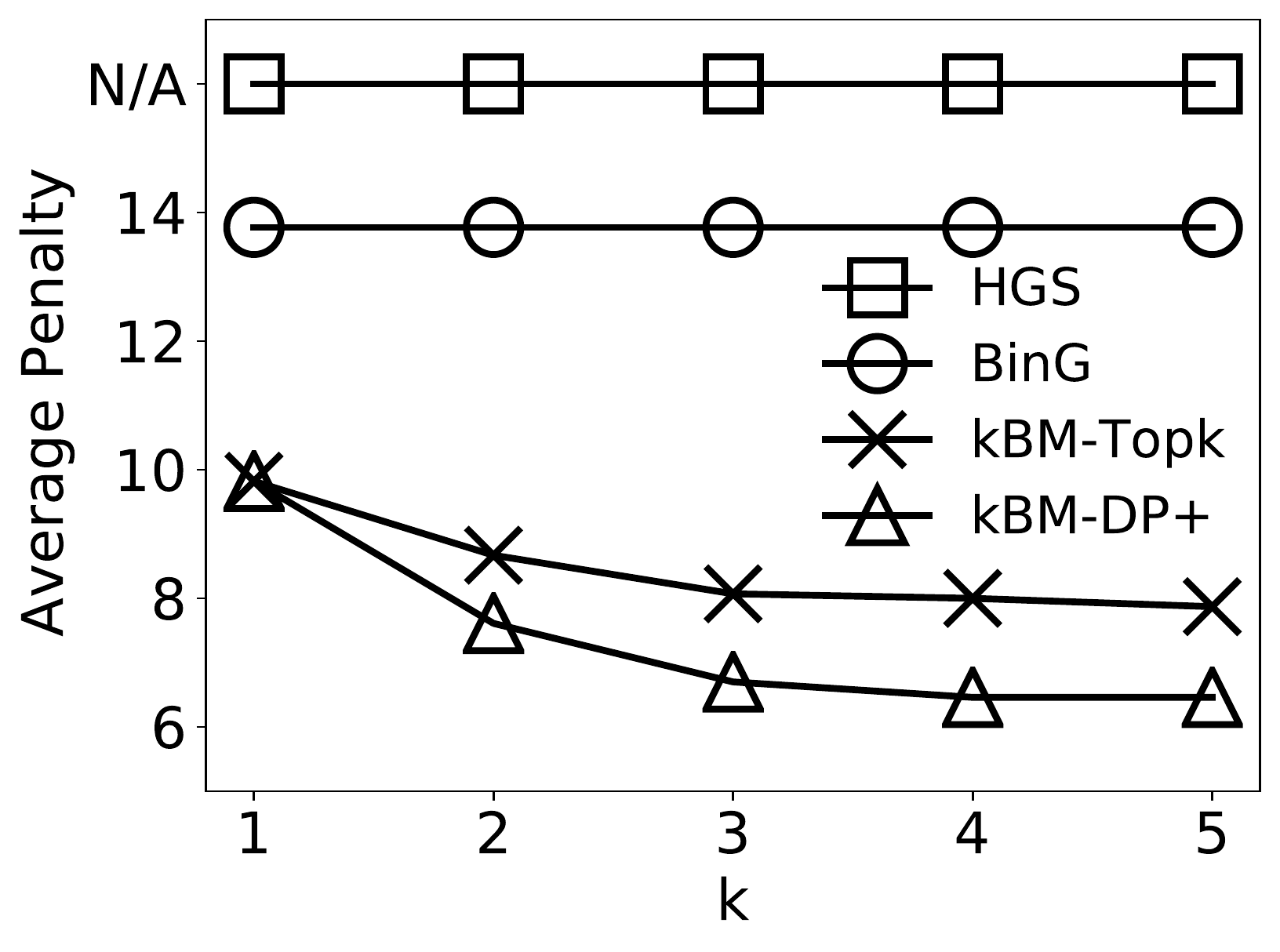} }
}
\vspace{-0.5cm}
\caption{Quality evaluation of all methods for MultipleTargets problem, in terms of the averaged penalty. 
}
\label{fig.exp2}
\vspace{-0.3cm}
\end{figure*}

\stitle{An upper bound of $\eg(v)$}. Consider a vertex $v\in V$ at the $i$-th round of question asking where $i \geq 1$, the expected gain is denoted as $\eg^{i}(v)$. Then, we have an upper bound of $\eg^{i}(v)$, denoted as $\overline{\eg^{i}}(v)$, satisfying 
$$\overline{\eg^{i}}(v)= \UB\gyes^i(v) \cdot \pyes(v) + \UB\gno^i(v) \cdot \pno(v),$$
where $\UB\gyes^i(v) = \gyes^{(i-1)}(v)$ and $\UB\gno^i(v) = \gno^{(i-1)}(v)$. Note that if $v$ is pruned in the previous round, we will set $\UB\gyes^i(v)$ $ = \UB\gyes^{(i-1)}(v)$ and $\UB\gno^i(v) = \UB\gno^{(i-1)}(v)$. We observe that both the $\yes$ gain and $\no$ gain decrease with more questions asked in most cases. This is because that the potential targets decrease and the \yescs increase with questions. Thus, we have $\UB\gyes^i(v) \geq \gyes^{i}(v)$ and $\UB\gno^i(v) \geq \gno^{i}(v)$. As a result, $\overline{\eg^{i}}(v) \geq \eg^{i}(v)$. 


\stitle{Algorithm}. \MTfast is a variant approach of \MTDiv in Algorithm~\ref{algo:multi} using the pruning optimization in Algorithm~\ref{algo:mtdiv+}, which calculates expected gains and identifies the vertex $q_i$ for question asking (replacing lines 7-8 of Algorithm~\ref{algo:multi}). Specifically, the algorithm first computes all upper bounds for vertices $v\in \can \setminus \yset$ and then sorts the vertices in descending order of upper bounds (lines 2-3). Next, it calculates the expected gain $\eg(v)$ and prunes disqualified vertices with an upper bound $\overline{\eg^{i}}(v) < \eg_{max}$ where $\eg_{max}$ keeps updated with the largest value of all possible expected gains (lines 4-10). Finally, it returns a vertex $q_i$ with the largest expected gain. Note that we pre-compute the \gyes and \gno of all vertices for the first question in $\T$.\xlzhu{}
\section{Experiments}~\label{sec.exp}
In this section, we conduct extensive experiments to evaluate the performance of our proposed methods. All algorithms are implemented in C++. All the experiments are conducted on a Linux Server with AMD EPYC 7742 (2.25 GHz) and 32GB main memory.

\begin{table}[h!]
\vspace{-0.3cm}
\caption{The statistics of hierarchical tree datasets.}\label{table.data}
\vspace{-0.4cm}
\small
\centering
\scalebox{0.8}{
\begin{tabular}{|c|c|c|c|c|c|}
\toprule
Name& $|V|$& Depth& Avg\ Depth& Max\ Degree& \# Queries\\
\midrule
\revision{Image-COCO}& \revision{200}& \revision{5}& \revision{2.63}& \revision{37}& \revision{107,774}\\
ImageNet& 74,401& 19& 8.78& 391& 16,188,196\\
Yago3& 493,839& 17& 5.70& 44,538& 4,440,378\\
\bottomrule
\end{tabular}
}
\vspace{-0.4cm}
\end{table}

\begin{figure*}[t]
\vspace{-0.2cm}
\centering
{
\subfigure[\revision{Total time of $b$ questions on Image-COCO}]{
\label{fig.exp4_1}
\includegraphics[width=0.235\linewidth]{./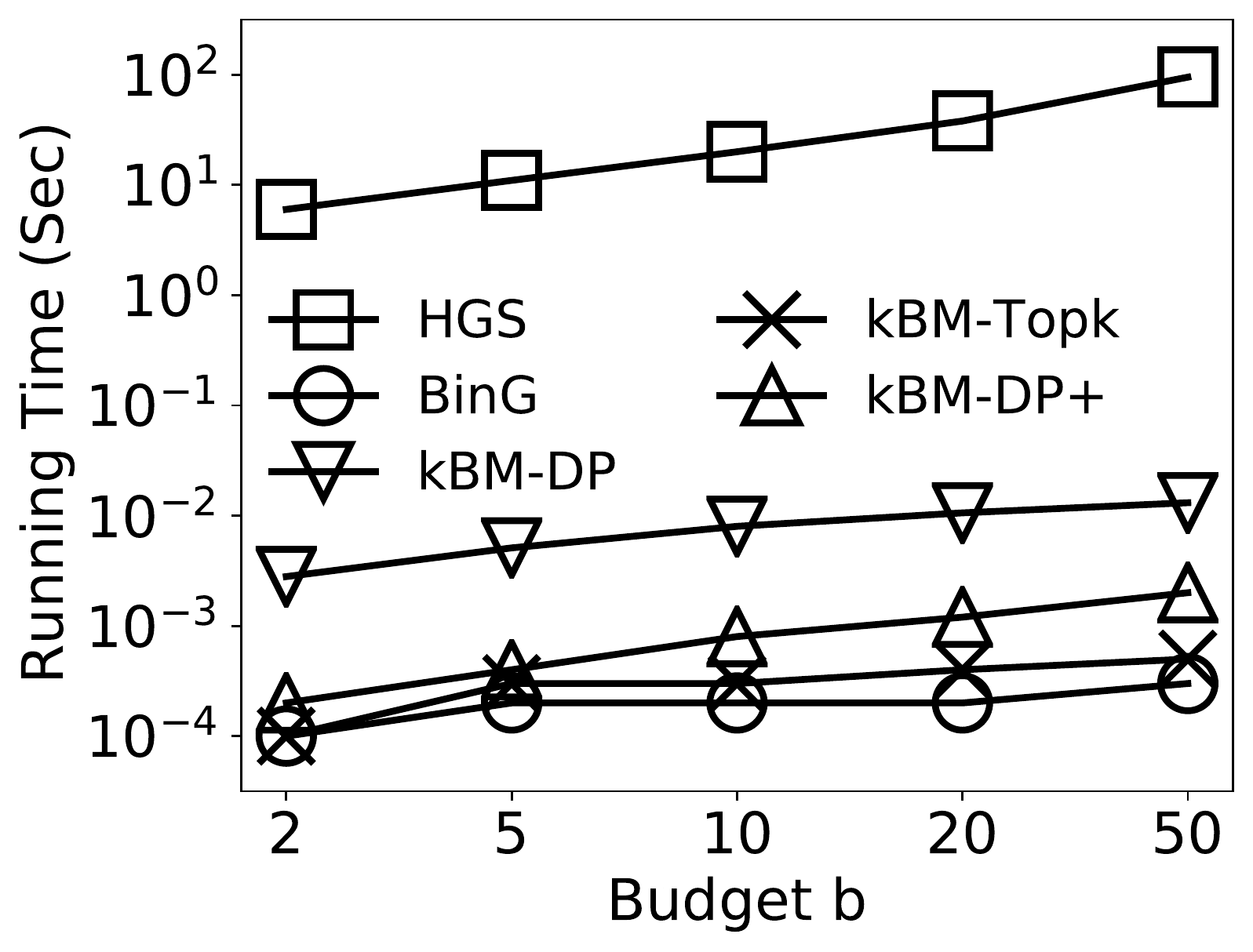} }
\subfigure[\revision{Total time of $b$ questions on ImageNet}]{
\label{fig.exp4_2}
\includegraphics[width=0.235\linewidth]{./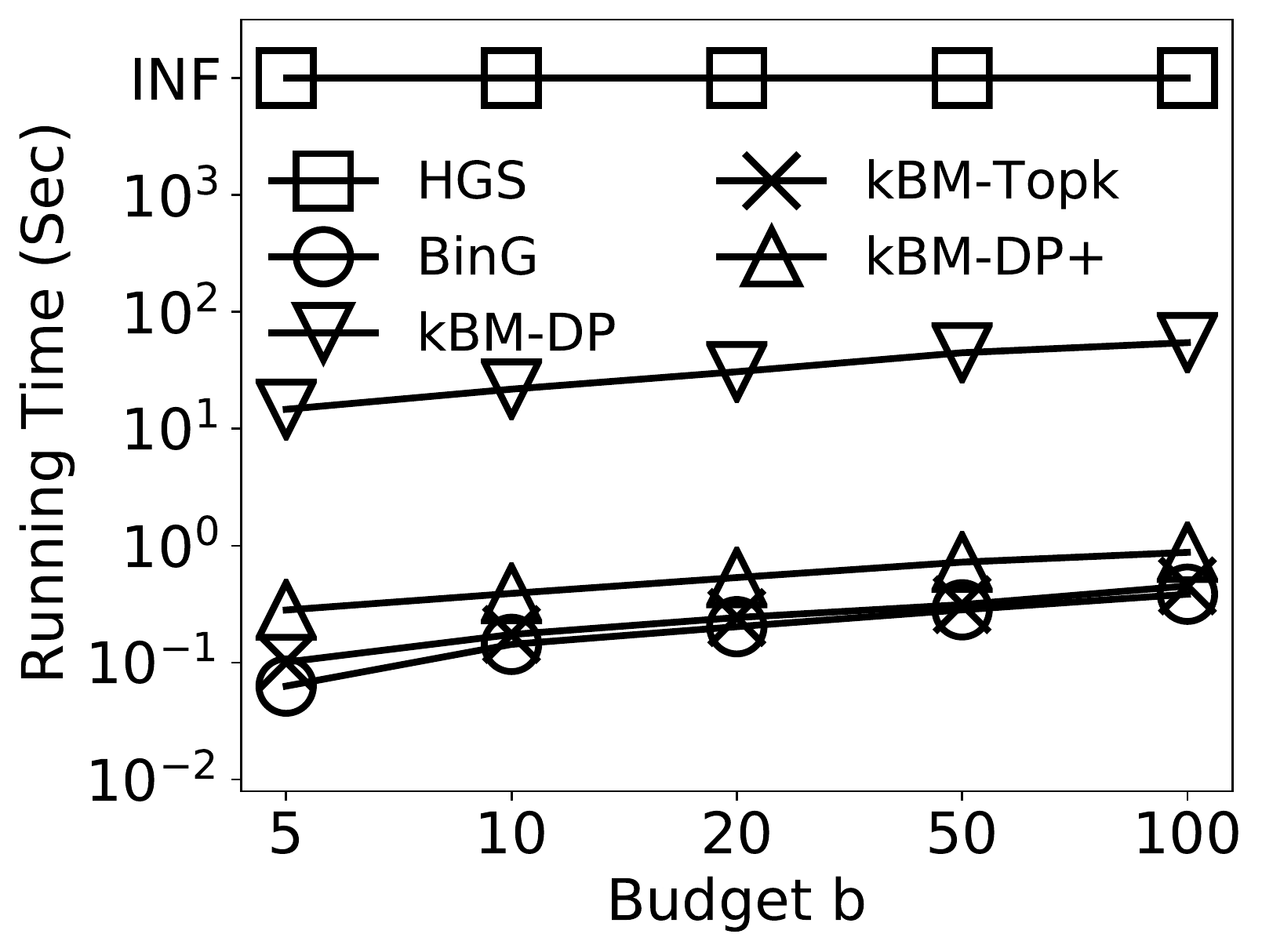} }
\subfigure[\revision{Total time of $b$ questions on Yago3-I}]{
\label{fig.exp4_1}
\includegraphics[width=0.235\linewidth]{./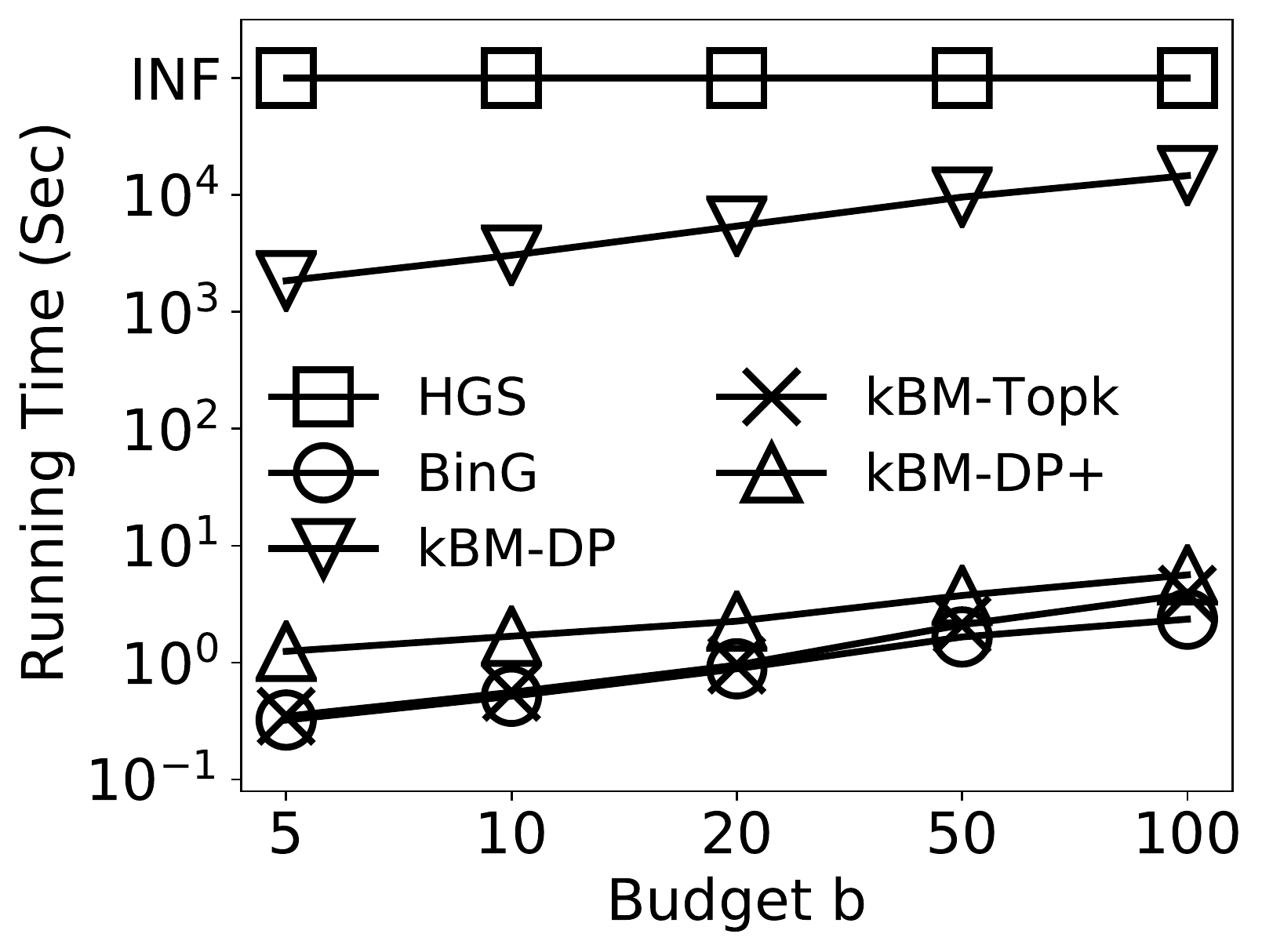} }
\subfigure[\revision{Total time of $b$ questions on Yago3-II}]{
\label{fig.exp4_2}
\includegraphics[width=0.235\linewidth]{./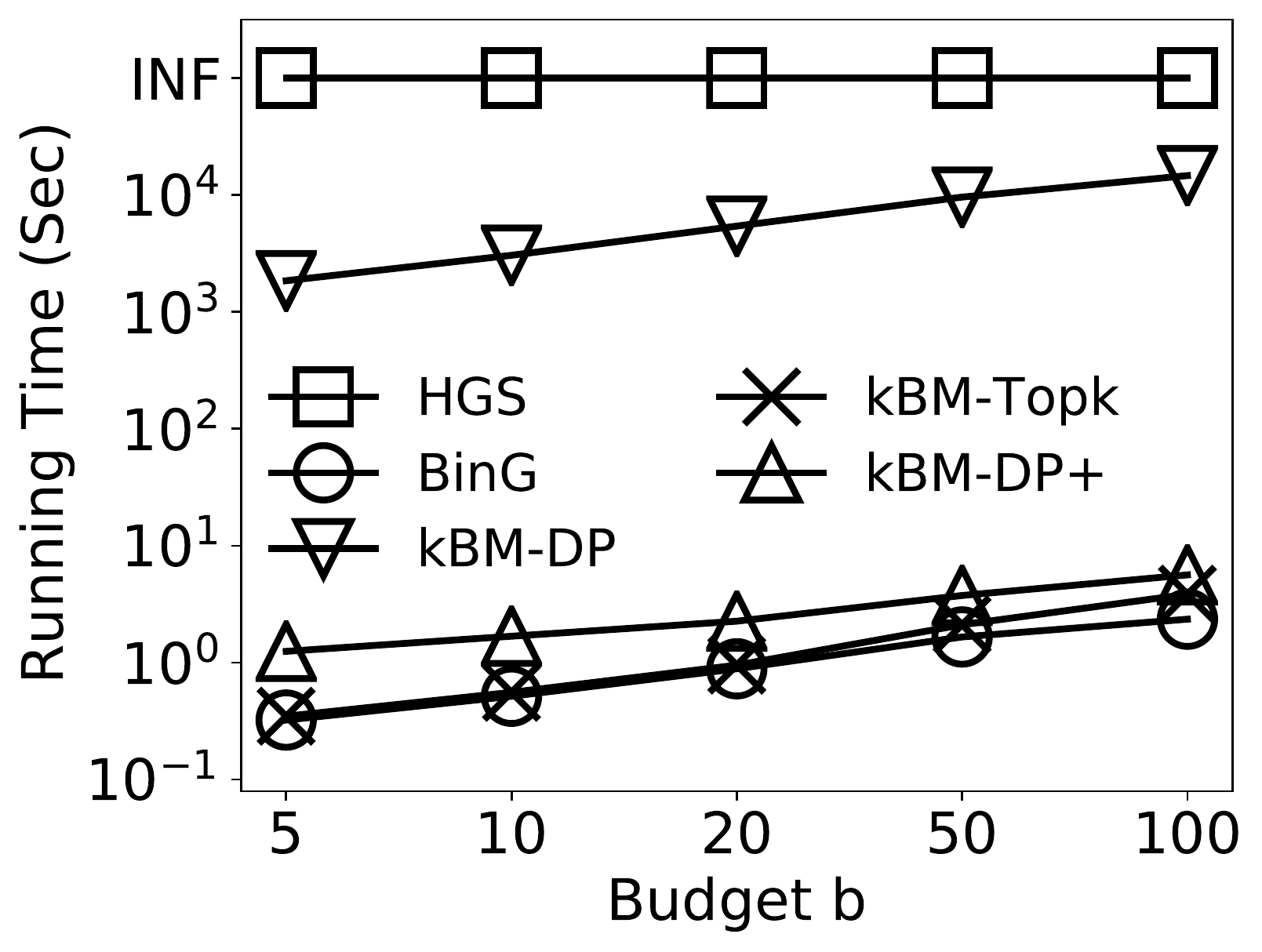} }
\quad
\subfigure[\revision{Average time per question on Image-COCO}]{
\label{fig.exp4_1}
\includegraphics[width=0.235\linewidth]{./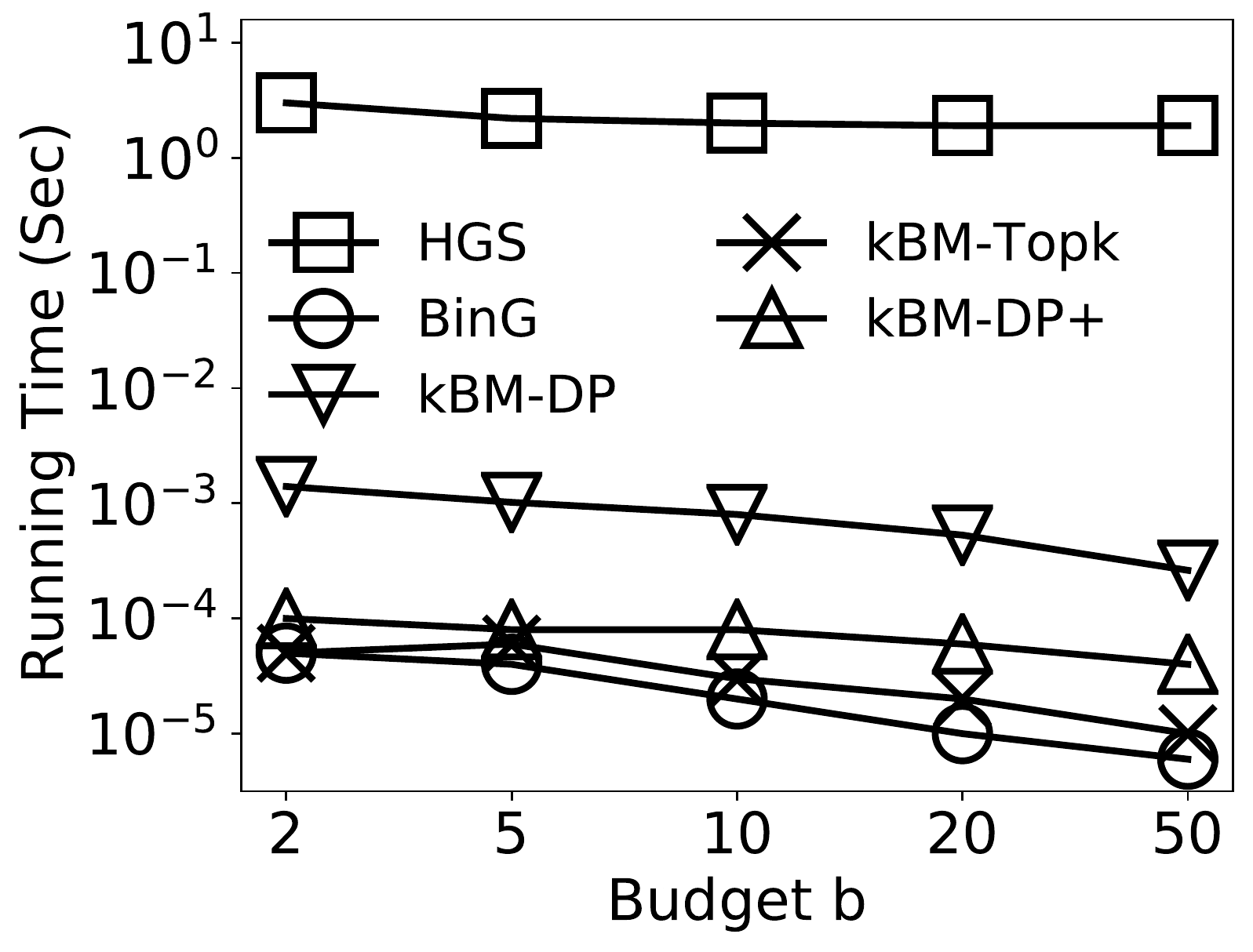} }
\subfigure[\revision{Average time per question on ImageNet}]{
\label{fig.exp4_2}
\includegraphics[width=0.235\linewidth]{./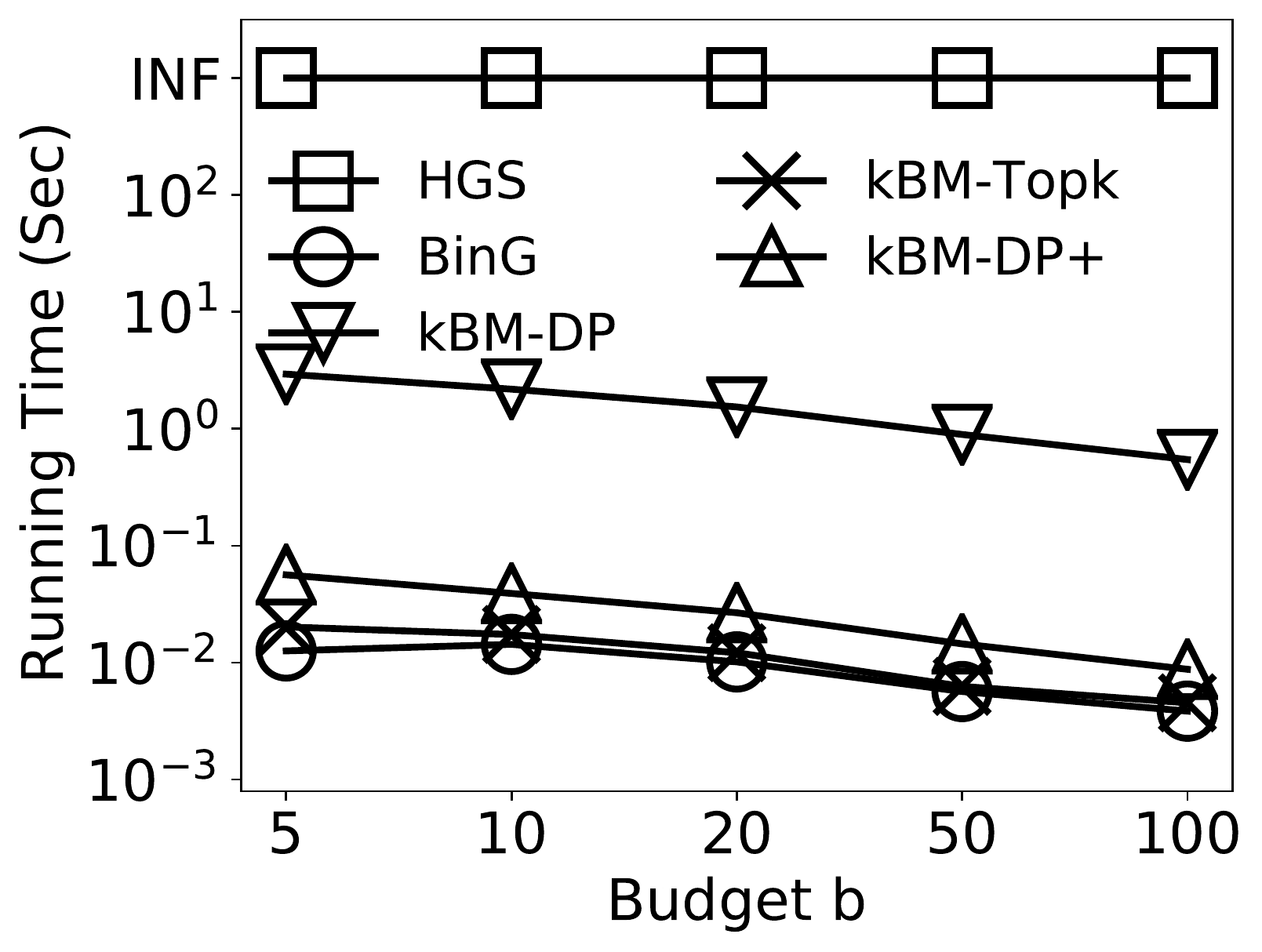} }
\subfigure[\revision{Average time per question on Yago3-I}]{
\label{fig.exp4_1}
\includegraphics[width=0.235\linewidth]{./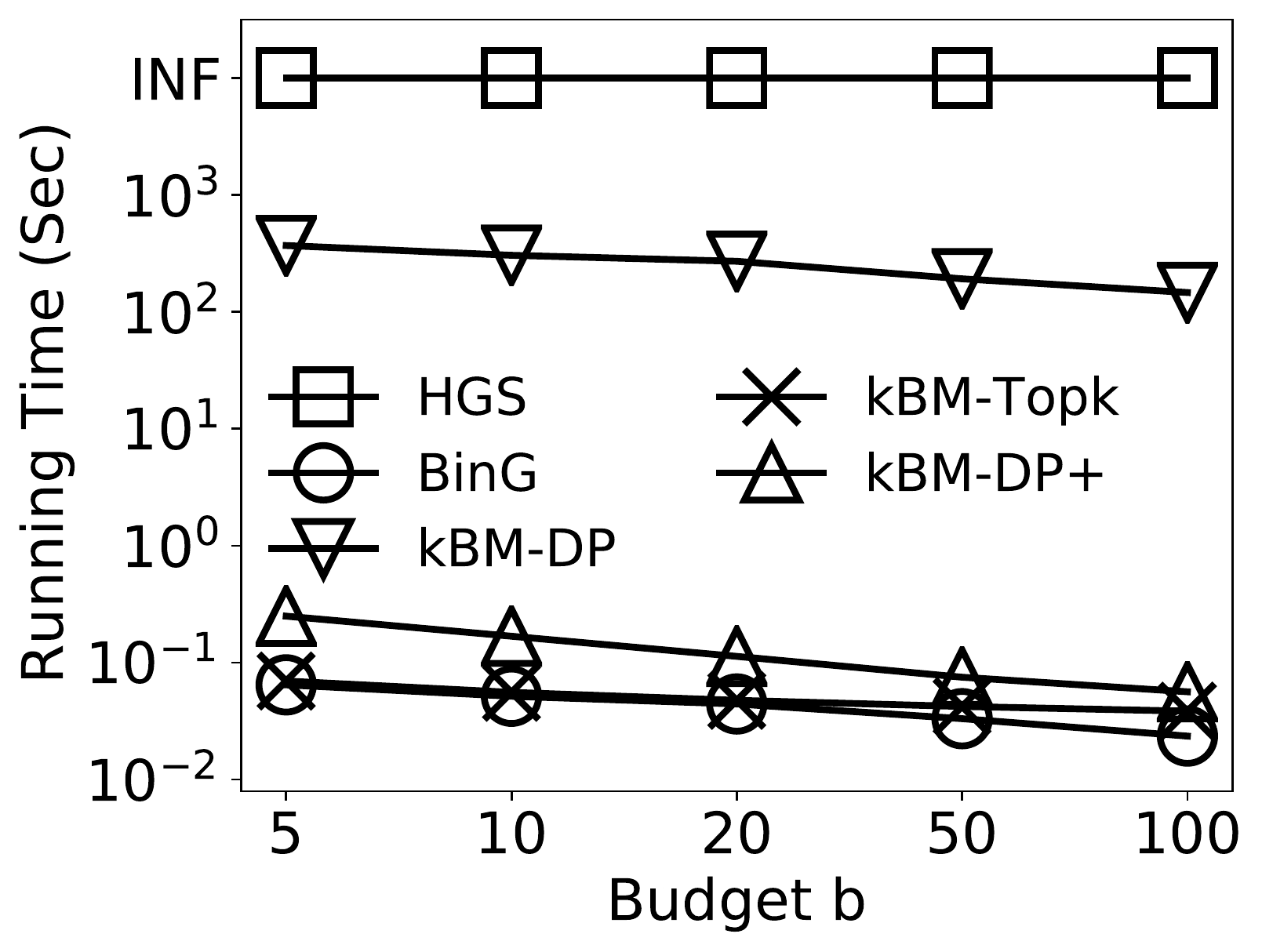} }
\subfigure[\revision{Average time per question on Yago3-II}]{
\label{fig.exp4_2}
\includegraphics[width=0.235\linewidth]{./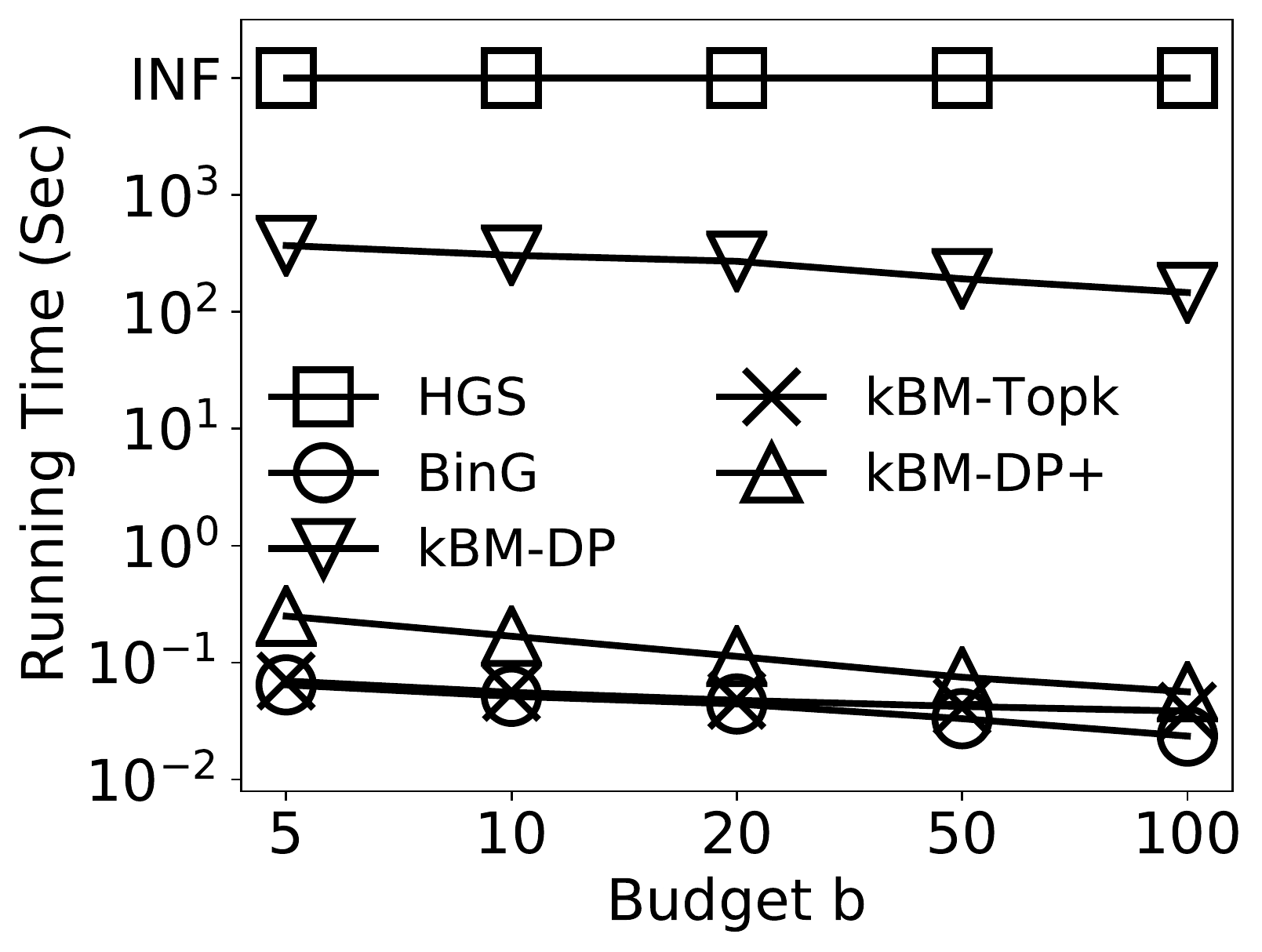} }
}
\vspace{-0.5cm}
\caption{
\revision{Efficiency evaluation of five algorithms for identifying multiple targets on all datasets.}
}
\label{fig.exp4}
\vspace{-0.6cm}
\end{figure*}


\stitle{Datasets.} We use three real datasets of hierarchical trees, whose detailed statistics are summarized in Table~\ref{table.data}. First, \emph{ImageNet}~\cite{deng2009imagenet, imagenet} is a hierarchical image dataset based on WordNet. It has 74,401 taxonomy vertices and 16 million images with ground-truth labels. 
\revision{Second, we generate a small hierarchy with 200 taxonomy vertices from COCO~\cite{lin2014microsoft} and ImageNet~\cite{deng2009imagenet}, denoted as \emph{Image-COCO}, ensuring the successful and efficient running of all tested algorithms. 
For target search on Image-COCO and ImageNet, we randomly select a set of 1,000 images with a single label and another set of 1,000 images with multiple labels for the \kw{Single}\kw{Target} and \kw{Multiple}\kw{Targets} problems, respectively.} 
Third, \emph{Yago3}~\cite{mahdisoltani2013yago3, yago} is a knowledge base from multilingual Wikipedias. We use the ontology structure yagoTaxonomy as the hierarchy for testing. It contains 493,839 taxonomy vertices, where an edge $\langle v, u \rangle$ means vertex $u$ is a ``subClassOf'' vertex $v$. Moreover, Yago3 contains 4,440,378 objects from yagoTypes, where each object may have a single label or multiple labels. 
\revision{For both the \kw{Single}\kw{Target} and \kw{Multiple}\kw{Targets} problems, we select two sets of objects with a single label and with multiple labels, respectively, using two methods. In the first method, we randomly select 1,000 labeled objects from Yago3, denoted as \emph{Yago3-I}. In the second method, we first randomly select 1,000 categories from Yago3 and then pick a random labeled object under each selected category, denoted as \emph{Yago3-II}.}

\stitle{Comparison methods.} 
\revision{We compare our  algorithms with state-of-the-art methods \HGS~\cite{parameswaran2011human}, \IGS~\cite{tao2019interactive}, and \BinG~\cite{li2020efficient}. } Specifically, 

\squishlisttight

\item \revision{ \HGS: a dynamic programming Human-GS method for identifying multiple targets with a bounded number of questions, which generates $b$ questions offline in a non-interactive setting~\cite{parameswaran2011human}. Following the algorithms in~\cite{parameswaran2011human}, 
we implement two methods, \emph{Single-Bounded} and \emph{Multi-Bounded}, for identifying a single target and multiple targets, respectively. } 

\item \IGS:  an interactive graph search algorithm for identifying a single target~\cite{tao2019interactive}. The algorithm decomposes a hierarchy into connected paths and finds the target through a series of binary searches on individual paths.

\item \BinG: \revision{a greedy algorithm for identifying a single target, which asks questions using an optimal vertex that prunes the largest number of vertices~\cite{li2020efficient}. It prunes the vertices $\can \setminus \des(u)$ for $\reach(u)=\yes$. To identify multiple targets, we implement a variant \BinG method, which only prunes 
 $\can \cap \anc(u) \setminus \{u\}$ for $\reach(u)=\yes$.}
\end{list} 

Note that both \IGS and \BinG can ask unlimited questions to identify the targets. \revision{In our problem setting, we terminate the algorithms of \IGS and \BinG after asking $b$ questions. We also evaluate and compare our proposed algorithms as follows.}
\squishlisttight
\item \STIGS: 
identifies a single target in Algorithm~\ref{algo:single}.

\item \MTDiv: a dynamic programming based method for identifying multiple targets 
in Algorithms~\ref{algo:multi} and \ref{algo:mtdiv}. 

\item \MTTopk: uses an independent penalty function to select top-$k$ vertices in Algorithm~\ref{algo:mttopk}.

\item \MTfast: uses an upper bound pruning technique to accelerate \MTDiv in Algorithm~\ref{algo:mtdiv+}.

\end{list} 

\revision{After asking $b$ questions, all algorithms return the selections from \yescs in the same way following our \Frame framework. }

\stitle{Evaluation metrics and parameter settings.} For quality evaluation, we use the penalty $\f(\sset, \tar)$ to measure the closeness between selections $\sset$ and targets $\tar$ by Def.~\ref{def.penalty}. 
For each experiment, we report the averaged penalty score of searching targets on 1,000 selected images/objects.
By default, we set the budget $b=50$, and assign $k=1$ and $k=3$ respectively for the \kw{Single}\kw{Target} and \kw{Multiple}\kw{Targets} problems. The initial probability of each vertex $v$ is set as $\pr(v) = \frac{k}{n}$ in a uniform way. \revision{We denote the running time as \emph{INF} and the penalty result as \emph{N/A}, if an algorithm cannot finish within 100 hours.}

\begin{figure*}[t]
\begin{minipage}{.6\linewidth}
\centering
\scalebox{0.75}{
\begin{tabular}{|c|cccccc|}
\toprule
Question & Label& 
$\reach(q_i)$ & depth($q_i$) & $|\can|$& $|\yset|$& $\f(\sset^*, \tar)$\\
\midrule
$q_0$& \textbf{animal}& Yes& 0& 3,998& 1& 11\\
$q_1$& \textbf{vertebrate}& Yes& 2& 3,996& 3& 7\\
$q_2$& \textbf{mammal}& Yes& 3& 3,995& 4& 6\\
$q_3$& invertebrate& No& 1& 3,219& 4& 6\\
$q_4$& \textbf{aquatic vertebrate}& Yes& 3& 3,219& 5& 5\\
$q_5$& \textbf{fish}& Yes& 4& 3,218& 6& 4\\
$q_6$& bird& No& 3& 2,347& 6& 4\\
$q_7$& bony fish& No& 5& 1,812& 6& 4\\
$q_8$& \textbf{carnivore}& Yes& 5& 1,810& 8& 2\\
$q_9$& dog& No& 7& 1,587& 8& 2\\
$q_{10}$& reptile& No& 3& 1,291& 8& 2\\
$q_{11}$& ungulate& No& 5& 988& 8& 2\\
$q_{12}$& \textbf{felid (cat family)}& Yes& 6& 987& 9& 1\\
\bottomrule
\end{tabular}
}
\end{minipage}
\hspace{-1.0cm}
\begin{minipage}{.4\linewidth}
\centering
\includegraphics[width=0.65\linewidth]{./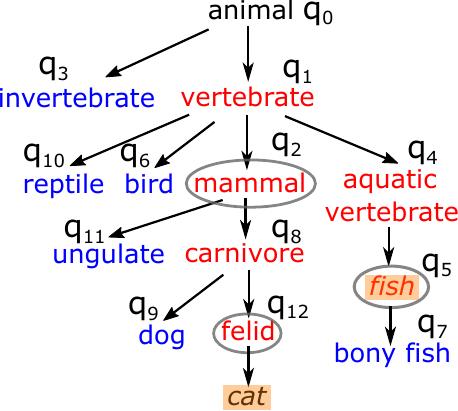}
\end{minipage}
\vspace{-0.4cm}
\caption{Case study on the ``animal'' hierarchy in ImageNet dataset. The targets are $\tar =$ \{``fish'', ``cat''\}. Here, $b = 12$ and $k = 3$. The selection set of our algorithm is $\sset = $\{``fish'', ``mammal'', ``felid''\}, in which ``felid'' is the parent of ``cat''. The penalty score is $\f(\sset, \tar) = 1$.}
\label{fig.exp.case_study}
\vspace{-0.5cm}
\end{figure*}

\stitle{EXP-1: Quality evaluation of  the \kw{Single}\kw{Target} problem.} \revision{
Table~\ref{table.exp1} shows the penalty results of four methods \HGS, \IGS, \BinG, and \STIGS for identifying a single target. For each dataset, we test five different budgets of $b$, varying from $5$ to $100$. 
The smaller the penalty scores, the closer the selections to the hidden targets.
All methods get lower penalty scores with increased budget $b$ as more questions are asked to obtain better selections. Our method \STIGS achieves the best performance in all tests, except for one case of $b = 5$ on Yago3-I. 
In particular, it outperforms \HGS by 23\%--1,253\%. 
While \BinG has a competitive performance with \STIGS, it is much worse than our methods in the more challenging \kw{MultipleTargets} problem as will be shown in EXP-2.
}

\stitle{EXP-2: Quality evaluation of the   \kw{Multiple}\kw{Targets} problem.} \revision{We evaluate four methods \MTHGS, \MTBinG, \MTTopk, and \MTfast for identifying multiple targets.
Figures~\ref{fig.exp2}(a)-\ref{fig.exp2}(d) and Figures~\ref{fig.exp2}(e)-\ref{fig.exp2}(h) report the penalty results on all datasets by varying budget $b$ and selection size $k$, respectively. Several observations are made.  
First, \MTHGS has the largest penalty scores 
on the small dataset Image-COCO as shown in Figures~\ref{fig.exp2}(a) and~\ref{fig.exp2}(e).
On the three large datasets in Figures~\ref{fig.exp2}(b)-\ref{fig.exp2}(d) and \ref{fig.exp2}(f)-\ref{fig.exp2}(h), \MTHGS fails to finish within 100 hours, due to its high time complexity. 
Second, compared with \MTBinG, our methods \MTfast and \MTTopk get smaller penalty scores by achieving an average of 2.1{x} better results. 
The main reason is that \MTBinG tends to ask questions on the vertices at the bottom levels, which likely gets a $\no$ answer with little gain of reducing target penalties. In contrast, our methods \MTTopk and \MTfast aim at asking questions on the vertices with the largest expected gains based on the potential target distribution, thereby achieving a better performance. Moreover, with  increased budget $b$ and target number $k$,  \MTTopk and \MTfast get an even better performance with lower  penalty scores. Finally, between \MTfast and \MTTopk, \MTfast incurs less penalties because it has a better gain function for identifying diverse selections.
}
\begin{figure}[t!]
\centering
{
\subfigure[Yago3-I]{
\label{fig.exp3_1}
\includegraphics[width=0.4\linewidth]{./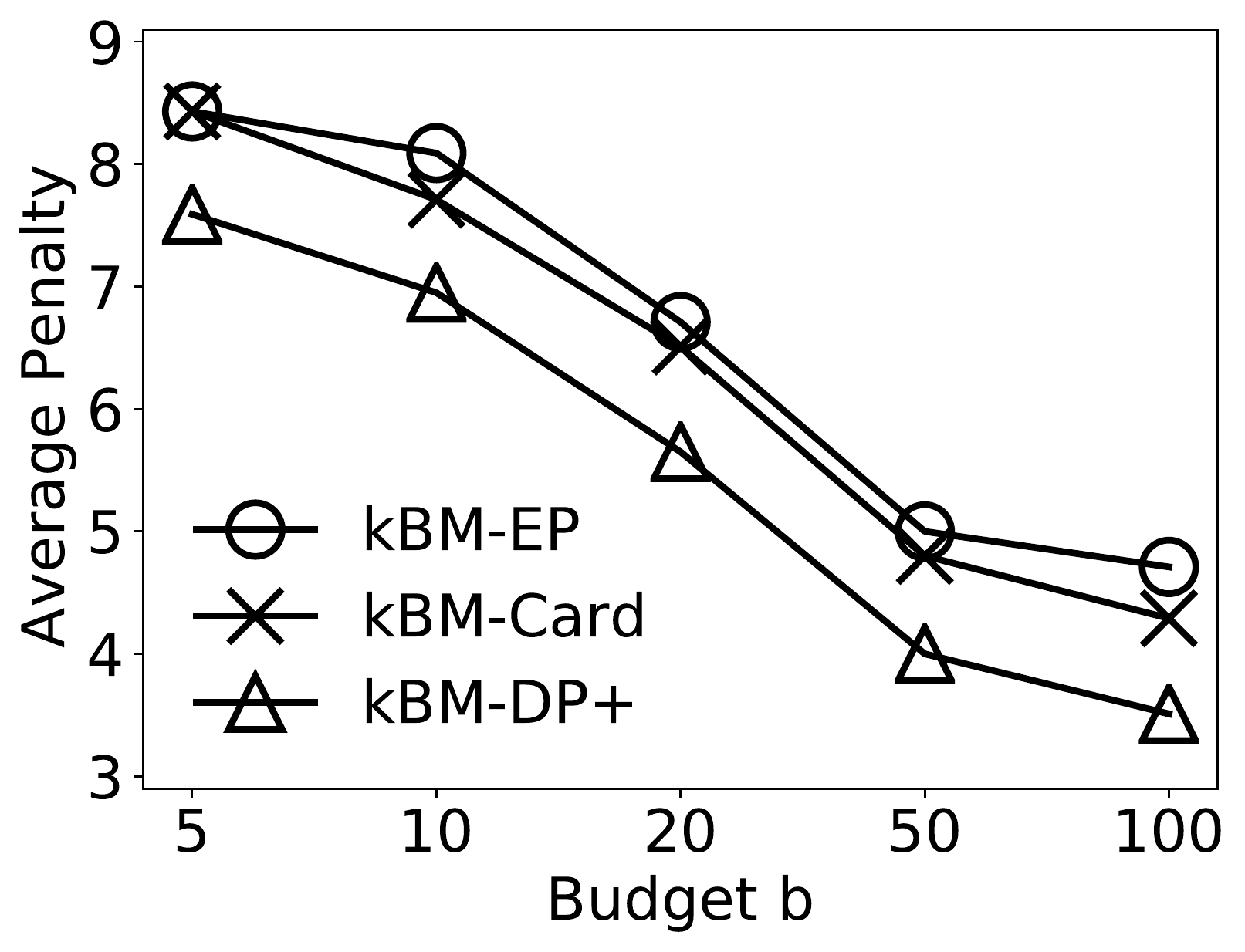} }
\subfigure[Yago3-I]{
\label{fig.exp3_2}
\includegraphics[width=0.4\linewidth]{./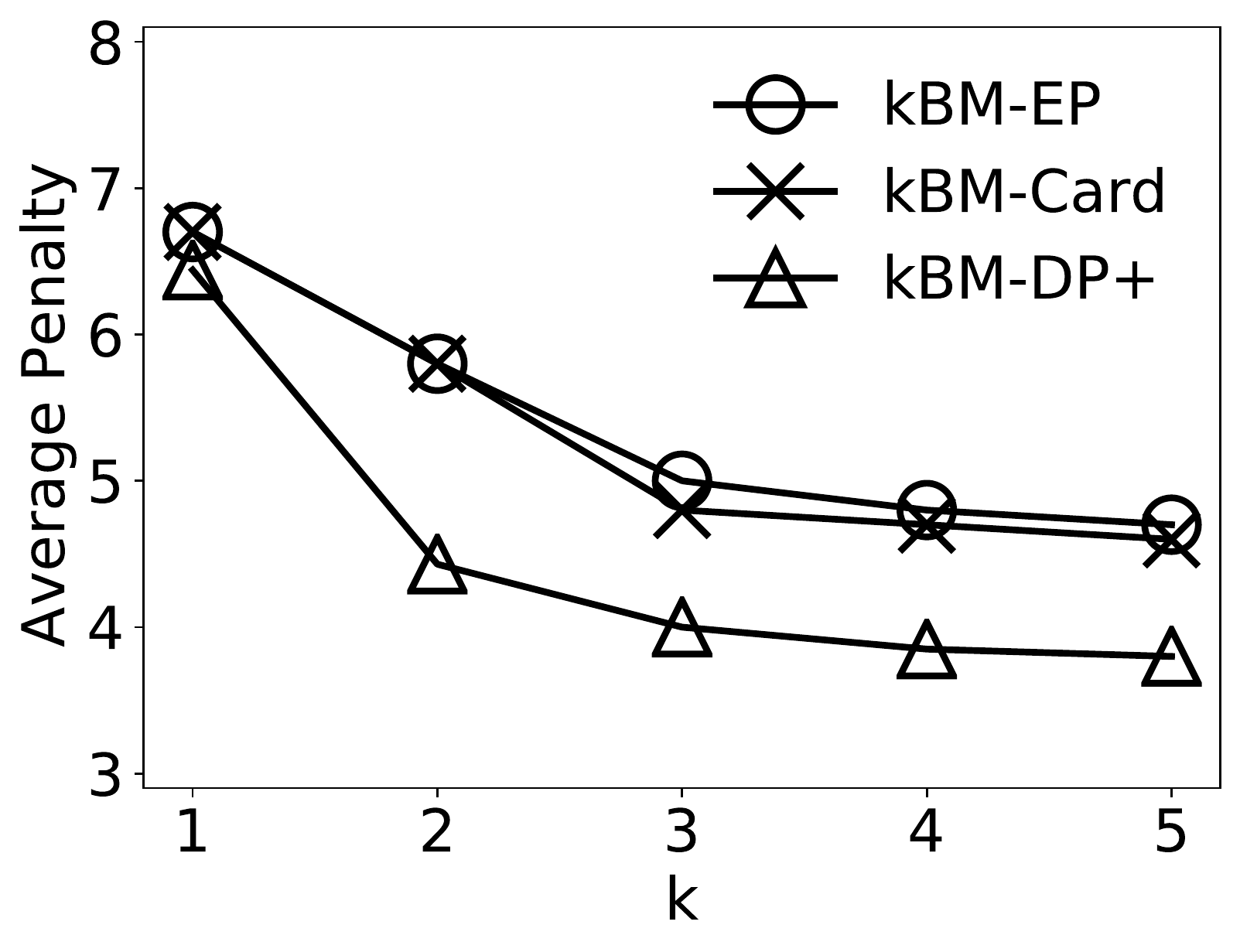} }
}
\vspace{-0.5cm}
\caption{The quality evaluation of \mtigs framework on Yago3-I,  w.r.t. the probability distribution and gain function. 
}
\label{fig.exp3}
\vspace{-0.6cm}
\end{figure}

\stitle{EXP-3: Quality evaluation of kBM-IGS framework.} 
We evaluate the effectiveness of the \mtigs framework using different probability and gain score functions. We implement two variant methods of \mtigs as \MTEP and \MTND. Both \MTEP and \MTND follow the same framework in Algorithm~\ref{algo:framework}. However, \MTEP sets a uniform value of $0.5$ for all vertices' probability scores of $\pyes$ and $\pno$. For two gain score functions $\gyes$ and $\gno$, \MTND counts the number of potential targets in a vertex's  descendants, without considering distance. Figure~\ref{fig.exp3_1} and Figure~\ref{fig.exp3_2} report the results of \mtigs, \MTEP, and \MTND varied by budget $b$ and selection size $k$ respectively. \mtigs wins over both \MTEP and \MTND, demonstrating the effectiveness of the probability and gain score functions used in our \mtigs framework.




\stitle{EXP-4: Efficiency evaluation.} \revision{We evaluate the efficiency of generating questions for multiple targets on all datasets. 
Figure~\ref{fig.exp4} shows the running time results of five different algorithms \MTHGS, \MTBinG, \MTDiv, \MTTopk, and  \MTfast.
Note that \MTHGS fails to finish on ImageNet, Yago3-I, and Yago3-II. \MTfast offline pre-calculates the \UB\gyes and \UB\gno of the first question, which asks the same question for any targets. 
\MTTopk is consistently faster (2.8\textbf{x} on average) than \MTfast for different settings of parameter $b$ in Figure~\ref{fig.exp4}(a)-(h).
Because of the low efficiency of \MTDiv, we only run $10$ cases for it. As shown in Figure~\ref{fig.exp4}(a)-(d), all methods take more time with increased budget $b$. 
\MTTopk and \MTBinG generate questions fastest by adopting the top-$k$ penalty function. Furthermore, all methods take less average running time for each question with increased budget $b$ as shown in Figure~\ref{fig.exp4}(e)-(h).}

\begin{figure}[t]

\centering
{
\subfigure[The number of potential targets]{
\label{fig.exp-purning1}
\includegraphics[width=0.45\linewidth]{./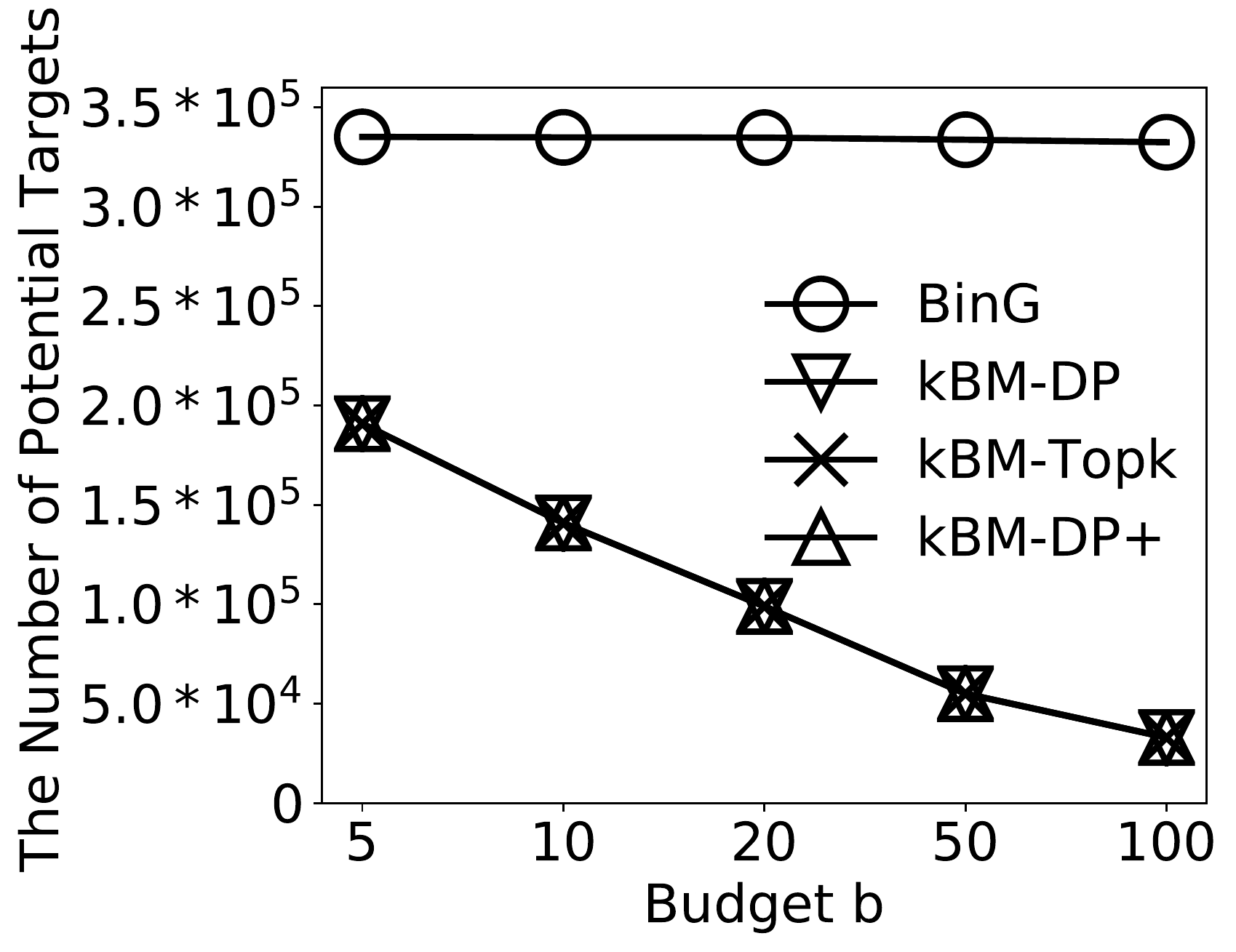} }
\subfigure[The number of  gain calculations]{
\label{fig.exp-purning2}
\includegraphics[width=0.4\linewidth]{./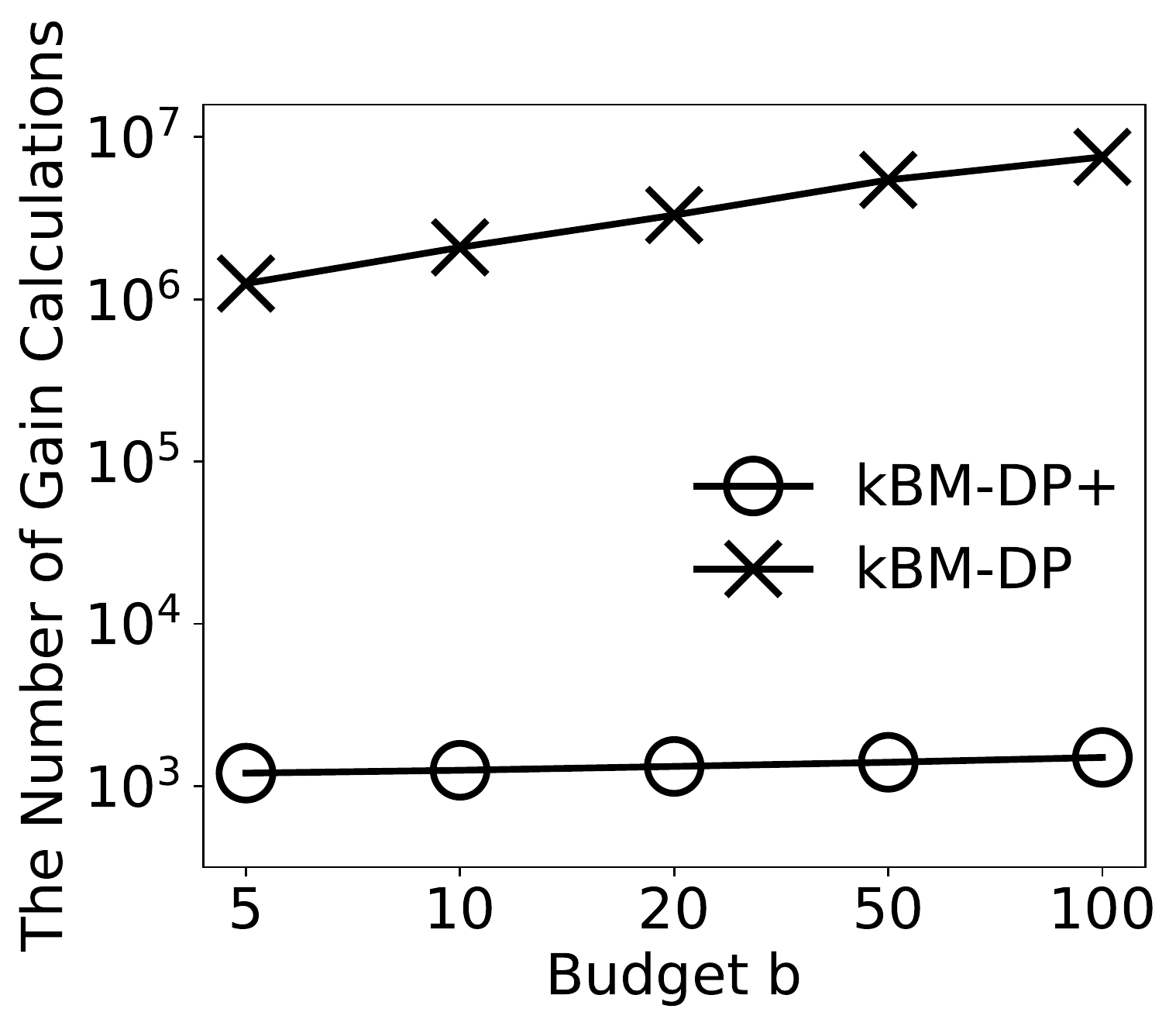} }
}
\vspace{-0.5cm}
\caption{\revision{Pruning ability evaluation on Yago3-I.}}
\label{fig.purning}
\vspace{-0.6cm}
\end{figure}

\stitle{EXP-5: Pruning ability evaluation.} \revision{We conduct a pruning ability evaluation and report the results in Figures~\ref{fig.exp-purning1},~\ref{fig.exp-purning2}. 
Figure~\ref{fig.exp-purning1} shows the number of potential targets by our methods \MTDiv, \MTTopk, \MTfast and competitive method \MTBinG. Our methods \MTDiv, \MTTopk, and \MTfast consistently outperform \MTBinG with the increased number of budgets $b$. 
However, in Figure~\ref{fig.exp4}, \MTfast runs much faster than \MTDiv, due to its efficient expected gain calculation as shown in Figure~\ref{fig.exp-purning2}. Compared with \MTDiv, \MTfast takes three orders of magnitude 
less calculations of expected gain, which validates the pruning optimization strategies in Section~\ref{sec.dp+}.
}

\begin{figure}[t]

\centering
{
\subfigure[ImageNet]{
\label{fig.exp5_1}
\includegraphics[width=0.45\linewidth]{./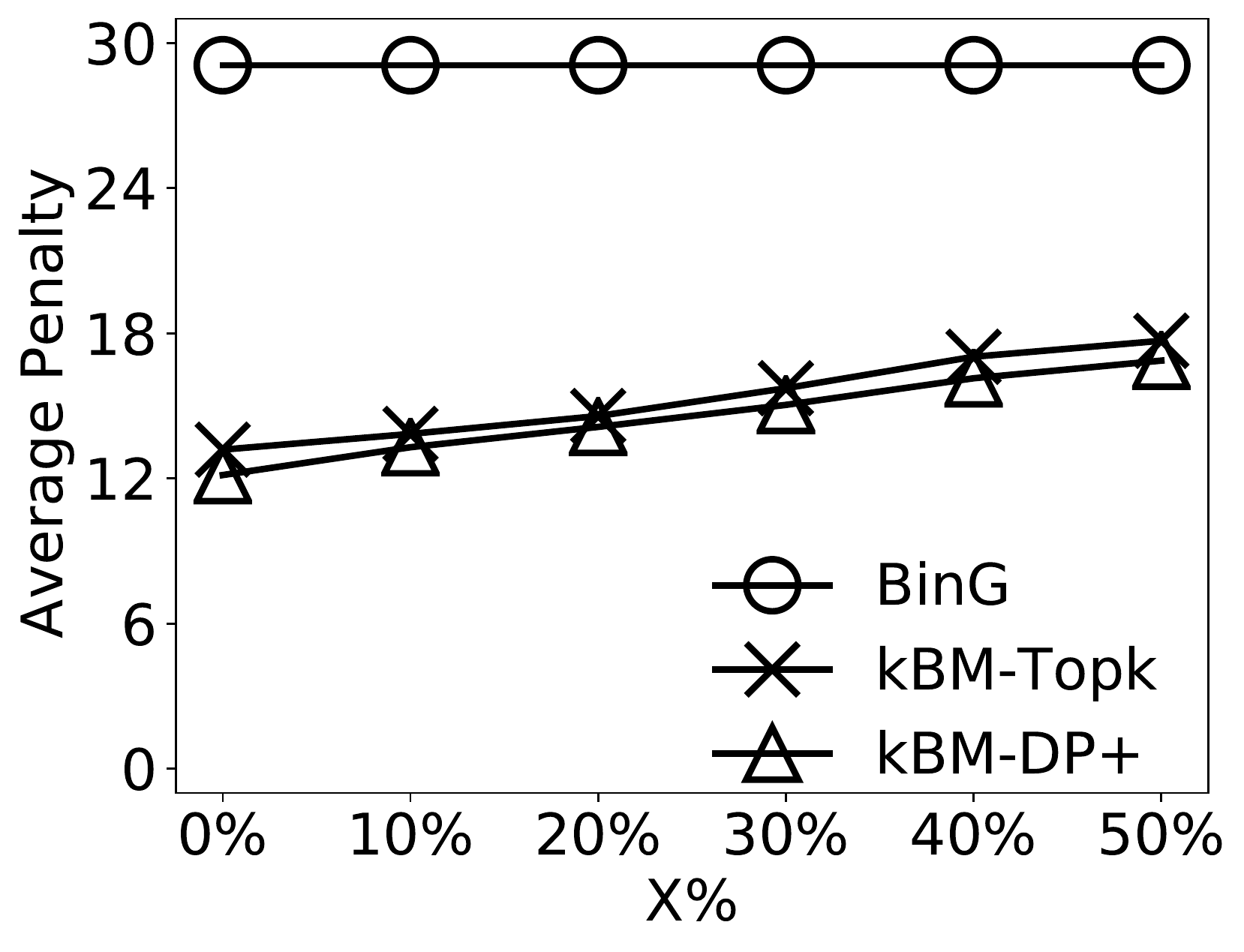} }
\subfigure[Yago-I]{
\label{fig.exp5_2}
\includegraphics[width=0.45\linewidth]{./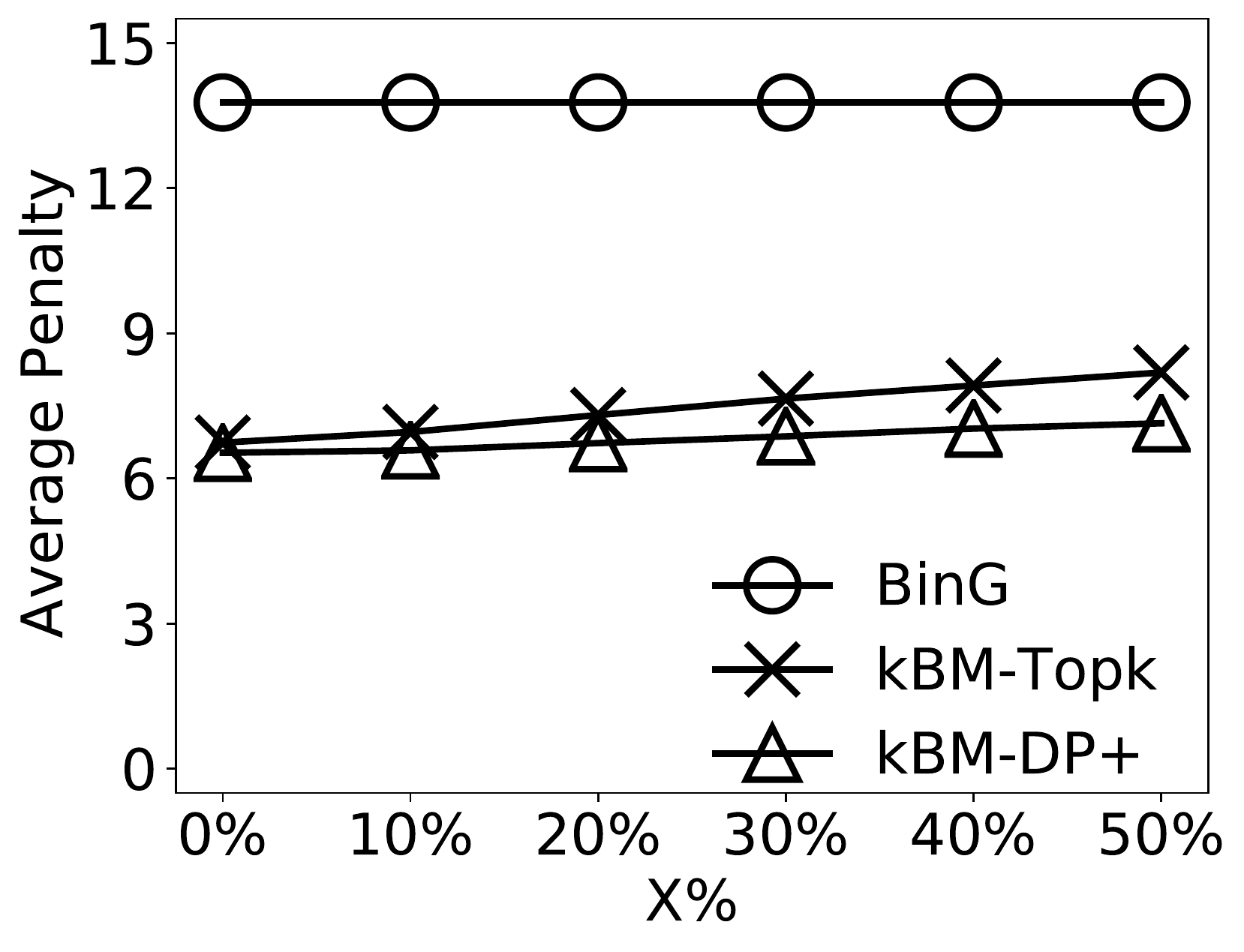} }
}
\vspace{-0.5cm}
\caption{\revision{Quality evaluation with wrong answers.}}
\label{fig.exp5}
\vspace{-0.6cm}
\end{figure}

\stitle{EXP-6: Evaluation of the quality of \mtigs scheme incurred when wrong answers are received from the crowd.} \revision{
We conduct a quality evaluation of our methods where human mistakes are not eliminated and the workers give wrong answers. For each dataset, we randomly select $X\%$ objects out of 1,000 objects and treat them as difficult objects. We vary $X \in [0, 50]$ on the ImageNet and Yago3-I datasets. 
For each question that involves a difficult object, the workers have a probability of giving a wrong answer, denoted as $p$. In the experiment, we set \emph{the wrong probability $p = 10\%$} and \emph{budget $b = 50$}. Figure~\ref{fig.exp5} shows the penalty results when varying the percentage of difficult objects.
As can be seen, the quality performances of \MTTopk and \MTfast are only slightly degraded with the increasing percentage of difficult objects, demonstrating their resilience to wrong answers. Moreover, our methods \MTTopk and \MTfast still win BinG by at least 40\%,
 even with wrong answers.
}

\stitle{EXP-7: Case study of image categorization.} We conduct a case study of interactive graph search to identify multiple targets on ImageNet. We extract an `animal' sub-hierarchy of ImageNet, which contains nearly $4,000$ labels. We use an image shown in Figure~\ref{fig.fish} with these targets $\tar =$ \{``fish'', ``cat''\}. We apply the \MTfast method with a budget $b = 12$ and selection size $k = 3$. The left table in Figure~\ref{fig.exp.case_study} shows the detailed process and statistics of all interactive questions by \MTfast. 
For each question vertex $q_i$, 
we report the label of $q_i$, the answer $\reach(q_i)$, the depth of $q_i$ in $\T$, 
$|\can|$, $|\yset|$, and 
the penalty $\f(S^*, \tar)$.
By default, $q_0=r$. 
 We also show the questioned taxonomies in a simplified hierarchy on the right side of Figure~\ref{fig.exp.case_study}. 
   The red taxonomies get a $\yes$ answer and the blue taxonomies get a $\no$ answer. 
   The interactive process clearly shows that our questions approach the targets quickly in a top-down manner within 12 questions, which achieves a very small penalty of 1 between selections and targets. 
   Finally, \MTfast identifies the selections $\sset =$ \{``fish'', ``mammal'', ``felid''\}. Note that ``felid'' means the cat family, which is the parent of the target ``cat''. The reason for selection ``mammal'' is   because some subclass labels of ``mammal'', such as ``primate'' and ``rodent'', are potential targets.

\section{Conclusion and Future Work}\label{sec.conclusion}

In this paper, we study the problem of \mtigs to identify multiple targets in a hierarchy via a constrained budget of interactive questions. To effectively tackle the problem, we propose a novel \mtigs framework to select the vertex with the maximum expected gain to ask question. On the basis of the \mtigs framework, we develop \single algorithm to identify a single target and a dynamic programming based method \MTDiv to identify multiple targets. To accelerate the efficiency, we propose two heuristic algorithms \MTTopk and \MTfast to ask question on the vertex with the best alternative gain. Extensive experiments validate the effectiveness and efficiency of our proposed algorithms.
\revision{
This paper also 
opens up interesting questions. One challenging direction is how to control the quality of target selections, given that crowd workers may give wrong answers.  
It takes non-trivial efforts to identify wrong answers in various possible forms and make a balanced budget cost between \emph{trusting the given answers by asking further questions} and 
\emph{suspecting the given answers by repeating questions to double-check} in a principled way. 
}


\bibliographystyle{abbrv}
\bibliography{DAG}

\begin{thebibliography}{10}

\bibitem{mturk}
\url{http://mturk.com}.

\bibitem{crowdflower}
\url{http://crowdflower.com}.

\bibitem{imagenet}
\url{http://image-net.org/download-API}.

\bibitem{yago}
\url{https://www.mpi-inf.mpg.de/departments/databases-and-information-systems/research/yago-naga/yago}.

\bibitem{barr2006ai}
J.~Barr and L.~F. Cabrera.
\newblock Ai gets a brain.
\newblock {\em Queue}, 4(4):24--29, 2006.

\bibitem{bragg2013crowdsourcing}
J.~Bragg, D.~S. Weld, et~al.
\newblock Crowdsourcing multi-label classification for taxonomy creation.
\newblock In {\em HCOMP}, pages 25--33, 2013.

\bibitem{chakrabarti2004automatic}
K.~Chakrabarti, S.~Chaudhuri, and S.-w. Hwang.
\newblock Automatic categorization of query results.
\newblock In {\em SIGMOD}, pages 755--766, 2004.

\bibitem{chilton2013cascade}
L.~B. Chilton, G.~Little, D.~Edge, D.~S. Weld, and J.~A. Landay.
\newblock Cascade: Crowdsourcing taxonomy creation.
\newblock In {\em SIGCHI}, pages 1999--2008, 2013.

\bibitem{daniel2018quality}
F.~Daniel, P.~Kucherbaev, C.~Cappiello, B.~Benatallah, and M.~Allahbakhsh.
\newblock Quality control in crowdsourcing: A survey of quality attributes,
  assessment techniques, and assurance actions.
\newblock {\em ACM Computing Surveys}, 51(1):1--40, 2018.

\bibitem{deng2009imagenet}
J.~Deng, W.~Dong, R.~Socher, L.-J. Li, K.~Li, and L.~Fei-Fei.
\newblock Imagenet: A large-scale hierarchical image database.
\newblock In {\em CVPR}, pages 248--255, 2009.

\bibitem{gao2020channel}
Y.~Gao, X.~Han, X.~Wang, W.~Huang, and M.~Scott.
\newblock Channel interaction networks for fine-grained image categorization.
\newblock In {\em AAAI}, pages 10818--10825, 2020.

\bibitem{guo2019dual}
L.~Guo, G.~Fan, and W.~Sheng.
\newblock Dual graphical models for relational modeling of indoor object
  categories.
\newblock In {\em CVPR Workshops}, pages 1007--1013, 2019.

\bibitem{imamura2018analysis}
H.~Imamura, I.~Sato, and M.~Sugiyama.
\newblock Analysis of minimax error rate for crowdsourcing and its application
  to worker clustering model.
\newblock In {\em ICML}, pages 2147--2156, 2018.

\bibitem{karger2011human}
A.~M. E. W.~D. Karger and S.~M.~R. Miller.
\newblock Human-powered sorts and joins.
\newblock {\em PVLDB}, 5(1):13--24, 2011.

\bibitem{karger2011iterative}
D.~R. Karger, S.~Oh, and D.~Shah.
\newblock Iterative learning for reliable crowdsourcing systems.
\newblock In {\em NeurIPS}, pages 1953--1961, 2011.

\bibitem{karlinsky2017fine}
L.~Karlinsky, J.~Shtok, Y.~Tzur, and A.~Tzadok.
\newblock Fine-grained recognition of thousands of object categories with
  single-example training.
\newblock In {\em CVPR}, pages 4113--4122, 2017.

\bibitem{li2018approximate}
K.~Li and G.~Li.
\newblock Approximate query processing: What is new and where to go?
\newblock {\em Data Science and Engineering}, 3(4):379--397, 2018.

\bibitem{li2020efficient}
Y.~Li, X.~Wu, Y.~Jin, J.~Li, and G.~Li.
\newblock Efficient algorithms for crowd-aided categorization.
\newblock {\em PVLDB}, 13(8):1221--1233, 2020.

\bibitem{lin2014microsoft}
T.-Y. Lin, M.~Maire, S.~Belongie, J.~Hays, P.~Perona, D.~Ramanan,
  P.~Doll{\'a}r, and C.~L. Zitnick.
\newblock Microsoft coco: Common objects in context.
\newblock In {\em ECCV}, pages 740--755. Springer, 2014.

\bibitem{liu2020interactive}
Z.~Liu, S.~Hu, Y.~Yin, J.~Chen, K.~Chiew, L.~Zhang, and Z.~Wu.
\newblock Interactive rare-category-of-interest mining from large datasets.
\newblock In {\em AAAI}, pages 4965--4972, 2020.

\bibitem{mahdisoltani2013yago3}
F.~Mahdisoltani, J.~Biega, and F.~M. Suchanek.
\newblock Yago3: A knowledge base from multilingual wikipedias.
\newblock In {\em CIDR}, 2015.

\bibitem{marcus2012counting}
A.~Marcus, D.~Karger, S.~Madden, R.~Miller, and S.~Oh.
\newblock Counting with the crowd.
\newblock {\em PVLDB}, 6(2):109--120, 2012.

\bibitem{ni2019justifying}
J.~Ni, J.~Li, and J.~McAuley.
\newblock Justifying recommendations using distantly-labeled reviews and
  fine-grained aspects.
\newblock In {\em EMNLP}, pages 188--197, 2019.

\bibitem{parameswaran2011human}
A.~Parameswaran, A.~D. Sarma, H.~Garcia-Molina, N.~Polyzotis, and J.~Widom.
\newblock Human-assisted graph search: It’s okay to ask questions.
\newblock {\em PVLDB}, 4(5):267--278, 2011.

\bibitem{parameswaran2014datasift}
A.~Parameswaran, M.~H. Teh, H.~Garcia-Molina, and J.~Widom.
\newblock Datasift: a crowd-powered search toolkit.
\newblock In {\em SIGMOD}, pages 885--888, 2014.

\bibitem{parameswaran2012crowdscreen}
A.~G. Parameswaran, H.~Garcia-Molina, H.~Park, N.~Polyzotis, A.~Ramesh, and
  J.~Widom.
\newblock Crowdscreen: algorithms for filtering data with humans.
\newblock In {\em SIGMOD}, pages 361--372, 2012.

\bibitem{pisinger1995algorithms}
D.~Pisinger.
\newblock Algorithms for knapsack problems.
\newblock 1995.

\bibitem{sun2015building}
Y.~Sun, A.~Singla, D.~Fox, and A.~Krause.
\newblock Building hierarchies of concepts via crowdsourcing.
\newblock In {\em IJCAI}, pages 844--853, 2015.

\bibitem{tang2020multi}
P.~Tang, M.~Jiang, B.~N. Xia, J.~W. Pitera, J.~Welser, and N.~V. Chawla.
\newblock Multi-label patent categorization with non-local attention-based
  graph convolutional network.
\newblock In {\em AAAI}, pages 9024--9031, 2020.

\bibitem{tao2019interactive}
Y.~Tao, Y.~Li, and G.~Li.
\newblock Interactive graph search.
\newblock In {\em SIGMOD}, pages 1393--1410, 2019.

\bibitem{verroios2015entity}
V.~Verroios and H.~Garcia-Molina.
\newblock Entity resolution with crowd errors.
\newblock In {\em ICDE}, pages 219--230, 2015.

\bibitem{vesdapunt2014crowdsourcing}
N.~Vesdapunt, K.~Bellare, and N.~Dalvi.
\newblock Crowdsourcing algorithms for entity resolution.
\newblock {\em PVLDB}, 7(12):1071--1082, 2014.

\bibitem{wang2012crowder}
J.~Wang, T.~Kraska, M.~J. Franklin, and J.~Feng.
\newblock Crowder: Crowdsourcing entity resolution.
\newblock {\em PVLDB}, 5(11):1483--1494, 2012.

\bibitem{wang2013leveraging}
J.~Wang, G.~Li, T.~Kraska, M.~J. Franklin, and J.~Feng.
\newblock Leveraging transitive relations for crowdsourced joins.
\newblock In {\em SIGMOD}, pages 229--240, 2013.

\bibitem{wang2015crowd}
S.~Wang, X.~Xiao, and C.-H. Lee.
\newblock Crowd-based deduplication: An adaptive approach.
\newblock In {\em SIGMOD}, pages 1263--1277, 2015.

\bibitem{whang2013question}
S.~E. Whang, P.~Lofgren, and H.~Garcia-Molina.
\newblock Question selection for crowd entity resolution.
\newblock {\em PVLDB}, 6(6):349--360, 2013.

\bibitem{yalavarthi2017select}
V.~K. Yalavarthi, X.~Ke, and A.~Khan.
\newblock Select your questions wisely: For entity resolution with crowd
  errors.
\newblock In {\em CIKM}, pages 317--326, 2017.

\bibitem{zhou2015unsupervised}
D.~Zhou, L.~Chen, and Y.~He.
\newblock An unsupervised framework of exploring events on twitter: Filtering,
  extraction and categorization.
\newblock In {\em AAAI}, pages 2468--2475, 2015.

\end{thebibliography}

\end{document}